\newcommand{\email}[1]{\href{mailto:#1}{\nolinkurl{#1}}}
\theoremstyle{plain}
\newtheorem{thm}{Theorem}[section]
\Crefname{thm}{Theorem}{Theorems}
\newtheorem{prop}[thm]{Proposition}
\newtheorem{lem}[thm]{Lemma}
\Crefname{lem}{Lemma}{Lemmas}
\newtheorem{cor}[thm]{Corollary}
\Crefname{cor}{Corollary}{Corollaries}
\theoremstyle{definition}
\newtheorem{defn}[thm]{Definition}
\newtheorem{exmp}[thm]{Example}
\theoremstyle{remark}
\newtheorem{rmrk}[thm]{Remark}
\numberwithin{equation}{thm}
\newcommand{\tv}{d_{TV}}
\newcommand{\X}{\mathcal{X}}
\newcommand{\F}{\mathcal{F}}
\newcommand{\cG}{\mathcal{G}}
\newcommand{\cE}{\mathcal{E}}
\newcommand{\R}{\mathbb{R}}
\DeclareMathOperator{\Ext}{Ext}
\DeclareMathOperator{\Disp}{Disp}
\DeclareMathOperator{\Waste}{W}
\DeclareMathOperator{\Samp}{Samp}
\DeclareMathOperator{\Supp}{Supp}
\DeclareMathOperator{\D}{D}
\DeclareMathOperator{\KL}{KL}
\DeclareMathOperator{\poly}{poly}
\DeclareMathOperator{\Ent}{H}
\DeclareMathOperator{\TildeEnt}{\tilde{H}}
\newcommand{\minent}{\Ent_\infty}
\newcommand{\condminent}{\TildeEnt_\infty}
\DeclareMathOperator{\maxdev}{max\,dev}
\newcommand{\zo}{\set{0,1}}
\def\titleeps/{\texorpdfstring{$\eps$}{\ifpdfstringunicode{\unichar{"03B5}}{epsilon}}}
\def\titledelta/{\texorpdfstring{$\delta$}{\ifpdfstringunicode{\unichar{"03B4}}{delta}}}
\title{Samplers and Extractors for Unbounded Functions}
\author{Rohit Agrawal\thanks{John A. Paulson School of Engineering and
Applied Sciences, Harvard University, Cambridge, MA 02138, USA.
Email: \email{rohitagr@seas.harvard.edu}.
Supported by the Department of Defense (DoD) through the National
Defense Science \& Engineering Graduate Fellowship (NDSEG)
Program.}}
\begin{document}

\maketitle

\begin{abstract}
	B\l{}asiok (SODA'18) recently introduced the notion of a subgaussian
	sampler, defined as an averaging sampler for approximating the mean of
	functions $f:\zo^m \to \R$ such that $f(U_m)$ has subgaussian tails,
	and asked for explicit constructions. In this work, we give the first
	explicit constructions of subgaussian samplers (and in fact averaging
	samplers for the broader class of subexponential functions) that match
	the best known constructions of averaging samplers for $[0,1]$-bounded
	functions in the regime of parameters where the approximation error
	$\eps$ and failure probability $\delta$ are subconstant. Our
	constructions are established via an extension of the standard notion
	of randomness extractor (Nisan and Zuckerman, JCSS'96) where the error
	is measured by an arbitrary divergence rather than total variation
	distance, and a generalization of Zuckerman's equivalence (Random
	Struct.\ Alg.'97) between extractors and samplers.  We believe that
	the framework we develop, and specifically the notion of an extractor
	for the Kullback--Leibler (KL) divergence, are of independent
	interest. In particular, KL-extractors are stronger than both standard
	extractors and subgaussian samplers, but we show that they exist with
	essentially the same parameters (constructively and
	non-constructively) as standard extractors.
\end{abstract}

\section{Introduction}
\subsection{Averaging samplers}

Averaging (or oblivious) samplers, introduced by Bellare and Rompel
\cite{bel_rom_randomnessefficient_1994}, are one of the main objects of
study in pseudorandomness. Used to approximate the mean of a
$[0,1]$-valued function with minimal randomness and queries, an
averaging sampler takes a short random seed and produces a small set of
correlated points such that any given $[0,1]$-valued function will (with
high probability) take approximately the same mean on these points as on
the entire space. Formally,
\begin{defn}[\cite{bel_rom_randomnessefficient_1994}]\label{intro_defn:sampler}
	A function $\Samp:\zo^n \to \of{\zo^m}^D$ is a \emph{$(\delta,
	\eps)$ averaging sampler} if for all $f:\zo^m \to [0, 1]$,
	it holds that
	\[
		\PR[x\sim U_n]{\abs{\frac1D \sum_{i=1}^Df(\Samp(x)_i) - \E{f(U_m)}} >
		\eps} \leq \delta,
	\]
	where $U_n$ is the uniform distribution on $\zo^n$.  The number
	$n$ is the \emph{randomness complexity} of the sampler, and $D$
	is the \emph{sample complexity}. A sampler is \emph{explicit}
	if $\Samp(x)_i$ can be computed in time $\poly(n,m,\log D)$.
\end{defn}
Traditionally, averaging samplers have been used in the context of
randomness-efficient error reduction for algorithms and protocols, where
the function $f$ is the indicator of a set ($\zo$-valued), or more
generally the acceptance probability of an algorithm or protocol
($[0,1]$-valued).  There has been significant effort in the literature
to establish optimal explicit and non-explicit constructions of
samplers, which we summarize in \cref{table:samplers}. We recommend the
survey of Goldreich \cite{goldreich_sample_2011} for more details,
especially regarding non-averaging
samplers\footnote{\label{foot:nonaveraging}A non-averaging sampler is an
	algorithm $\Samp$ which makes oracle queries to $f$ and outputs an
	estimate of its average which is good with high probability, but need
	not simply output the average of $f$'s values on the queried points.}.

\begin{table}[h!tbp]
	\centering
	\begin{threeparttable}[b]
	\caption{Best known constructions of averaging samplers for $[0,1]$-valued functions}
	\label{table:samplers}
	\begin{tabularx}{\textwidth}{|>{\centering}X|c|c|c|}\hline
		Key Idea & Randomness complexity $n$ &Sample complexity $D$ & Best regime\\\hline
		Pairwise-independent Expander Neighbors \cite{gol_wig_tiny_1997} & $m + O\of{\log(1/\delta)+\log(1/\eps)}$ & $O\of{\frac1{\delta\eps^2}}$  & $\delta = \Omega(1)$\\\hline
		Ramanujan Expander Neighbors\tnote{a}  \cite{kar_pip_sip_timerandomness_1985,gol_wig_tiny_1997} & $m$ & $O\of{\frac1{\delta\eps^2}}$  & $\delta = \Omega(1)$\\\hline
		Extractors \cite{zuckerman_randomnessoptimal_1997,gol_wig_tiny_1997,rei_vad_wig_entropy_conference,gur_uma_vad_unbalanced_2009}
		& $\begin{aligned}[t]&m + (1+\alpha)\cdot \log(1/\delta)\\&\hspace{0.5em}\text{any constant }\alpha>0\end{aligned}$ & $\poly\of{\log(1/\delta),1/\eps}$ & $\eps,\delta=o(1)$\\\hline
	Expander Walk Chernoff \cite{gillman_chernoff_1998}
	& $m + O(\log(1/\delta)/\eps^2)$ & $O\of{\frac{\log(1/\delta)}{\eps^2}}$ & $\eps=\Omega(1)$
	\\\hline
		Pairwise Independence \cite{cho_gol_power_1989} & $O(m)$ & $O\of{\frac1{\delta\eps^2}}$  & None, but simple \\\hline
	Non-Explicit  \cite{zuckerman_randomnessoptimal_1997} & $\begin{aligned}[t]m + \log&(1/\delta)-\log\log(1/\delta)\\ & + O(1)\end{aligned}$ & $O\of{\frac{\log(1/\delta)}{\eps^2}}$ & All\\\hline
		Lower Bound
		\cite{can_eve_gol_lower_1995,zuckerman_randomnessoptimal_1997,rad_ta-_bounds_2000}
									& $\begin{aligned}[t]m &+
									\log(1/\delta)+\log(1/\eps)\\&- \log(D) - O(1)\end{aligned}$ &
										$\Omega\of{\frac{\log{(1/\delta)}}{\eps^2}}$ & N/A\\\hline
	\end{tabularx}
	\begin{tablenotes}
	\item[a] Requires explicit constructions of Ramanujan graphs.
	\end{tablenotes}
	\end{threeparttable}
\end{table}

However, averaging samplers can also have uses beyond bounded functions:
B\l{}asiok \cite{blasiok_optimal_conference}, motivated by an
application in streaming algorithms, introduced the notion of a
\emph{subgaussian sampler}, which he defined as an averaging sampler for
functions $f:\zo^m \to \R$ such that $f(U_m)$ is a subgaussian random
variable. Since subgaussian random variables have strong tail bounds,
subgaussian functions from $\zo^m$ have a range contained in
an interval of length $O(\sqrt{m})$,
and thus one can construct a subgaussian sampler from a $[0,1]$-sampler
by simply scaling the error $\eps$ by a factor of $O(\sqrt m)$.
Unfortunately, looking at \cref{table:samplers} one sees that this
induces a multiplicative dependence on $m$ in the sample complexity, and
for the expander walk sampler induces a dependence of $m\log(1/\delta)$
in the randomness complexity. This loss can be avoided for some
samplers, such as the sampler of Chor and Goldreich
\cite{cho_gol_power_1989} based on pairwise independence (as its
analysis requires only bounded variance) and (as we will show) the
Ramanujan Expander Neighbor sampler of
\cite{kar_pip_sip_timerandomness_1985,gol_wig_tiny_1997}, but B\l{}asiok
showed \cite{blasiok_optimal_private_2018} that the expander-walk
sampler does not in general act as a subgaussian sampler without
reducing the error to $o(1)$. We remark briefly that the
median-of-averages sampler of Bellare, Goldreich, and Goldwasser
\cite{bel_gol_gol_randomness_1993} still works and is optimal up to
constant factors in the subgaussian setting (since the underlying
pairwise independent sampler works), but it is not an averaging
sampler\cref{foot:nonaveraging}, and matching its parameters with an averaging
sampler remains open in general even for $[0,1]$-valued functions.

One of the contributions of this work is to give explicit averaging
samplers for subgaussian functions (in fact even for
\emph{subexponential} functions that satisfy weaker tail bounds)
matching the extractor-based samplers for $[0,1]$-valued functions in
\cref{table:samplers} (up to the hidden polynomial in the sample
complexity). This achieves the best parameters currently known in the
regime of parameters where $\eps$ and $\delta$ are both subconstant, and
in particular has no dependence on $m$ in the sample complexity. We also
show non-constructively that subexponentially samplers exist with
essentially the same parameters as $[0,1]$-valued samplers.

\begin{thm}[Informal version of
	\cref{thm:subgaussian_sampler,cor:subgaussian_samp_nonexplicit}]\label{intro_thm:subgaussian_samplers}
	For every integer $m\in \mathbb N$, $1 > \delta, \eps > 0$, and $\alpha > 0$, there is
	a function $\Samp:\zo^n \to \of{\zo^m}^D$ that is:
	\begin{itemize}
		\item an explicit subgaussian (in
			fact subexponential) sampler with randomness complexity $n = m +
			(1+\alpha)\cdot \log(1/\delta)$ and sample complexity $D =
			\poly(\log(1/\delta),1/\eps)$ (see \cref{thm:subgaussian_sampler})
		\item a non-constructive subexponential sampler with randomness
			complexity $n = m + \log(1/\delta) - \log\log(1/\delta) + O(1)$
			and sample complexity $D = O(\log(1/\delta)/\eps^2)$ (see
			\cref{cor:subgaussian_samp_nonexplicit}).
	\end{itemize}
\end{thm}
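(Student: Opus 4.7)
The plan is to prove both parts of the theorem by routing through a KL-divergence analogue of extractors, via an extension of Zuckerman's extractor--sampler equivalence. First, define a \emph{KL-extractor} to be a function $\Ext:\zo^n \times \zo^d \to \zo^m$ such that for every source $X$ on $\zo^n$ with $\minent(X) \geq k$, the distribution $(U_d, \Ext(X, U_d))$ has KL divergence at most $\eps$ from $U_{d+m}$. By Pinsker's inequality KL divergence dominates the square of total variation, so a KL-extractor is automatically a standard TV-extractor; the content of this approach is that KL-extractors can be built with essentially the same parameters as standard extractors.

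The first substantive step is to prove a KL-generalization of Zuckerman's equivalence: a KL-extractor with parameters $(k, \eps)$ yields an averaging sampler for subexponential functions of subexponential parameter $\sigma$, with randomness $n$, sample complexity $D = 2^d$, failure probability $\delta \leq 2^{k-n+1}$, and approximation error $O(\sigma \sqrt{\eps})$. The forward direction uses the Donsker--Varadhan variational characterization of KL divergence, which for any test function $f$ lower-bounds $\KL(P \| Q)$ by $\mathbb{E}_P[f] - \log \mathbb{E}_Q[e^f]$. When $f$ is subexponential under $U_m$, the log-MGF $\log \mathbb{E}_{U_m}[e^{t(f-\mu)}]$ is quadratic in $t$ for small $|t|$, so applying the variational formula with a suitably scaled test function and a flat source on the sampler's ``bad set'' converts a KL bound into the desired concentration (and vice versa for the converse direction, which is what transfers non-constructive parameters back and forth).

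The second step is to establish existence of KL-extractors with the claimed parameters. For the explicit part, I would argue that standard extractor constructions (such as those of \cite{gur_uma_vad_unbalanced_2009,rei_vad_wig_entropy_conference}) can either be re-analyzed directly under KL divergence, or can be post-composed with a short smoothing step (a pairwise-independent hash of the output using additional seed bits) that converts small TV error into comparably small KL error. Either approach should preserve the randomness complexity $n = m + (1+\alpha)\log(1/\delta)$ and polylogarithmic seed length, giving the explicit sampler after invoking the equivalence. For the non-constructive part, I would apply the probabilistic method to a uniformly random $\Ext$, using a union bound over flat sources of size $2^k$ together with a tail estimate for the KL divergence of $D = 2^d$ uniform samples from $\zo^m$ against the true uniform distribution; achieving the optimal seed length $d = \log\log(1/\delta) + 2\log(1/\eps) + O(1)$ requires the analysis to be essentially lossless.

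The main obstacle I foresee is this probabilistic-method step: KL divergence is much more sensitive than total variation to rare events, since atoms of extremely small empirical weight inflate KL but not TV, so a naive Chernoff-plus-union-bound loses extra logarithmic factors. To attain the claimed non-explicit parameters I would likely either work with a smoothed or truncated variant of KL that discards contributions from very-low-mass atoms (at a cost negligible in the final parameters), or exploit a direct moment bound for the log-likelihood ratio under a flat source to recover a Zuckerman-style tail inequality up to constants. With the KL-extractor existence and explicit construction results in hand, both parts of \cref{intro_thm:subgaussian_samplers} then follow mechanically from the equivalence.
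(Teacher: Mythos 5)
Your high-level architecture matches the paper's closely: define KL-extractors, relate them to subexponential samplers via the Donsker--Varadhan variational formula (the paper does this in two steps, Lemma~\ref{lem:dg_vs_kl} and \Cref{thm:func_ext_are_samplers}, where you fuse them into one), and establish the non-explicit bound by a probabilistic-method argument over flat sources. Your suggestion of a ``direct moment bound'' for the non-explicit step is also essentially what the paper does (Lemma~\ref{lem:random_ext_kl_bound}, an MGF bound on $\KL\diver{\text{empirical}}{U_m}$). One small correction there: your worry that ``atoms of extremely small empirical weight inflate KL'' has the direction of the divergence backwards. Since the paper puts the extractor output on the \emph{left} of $\KL\diver{\cdot}{U_m}$, low-mass (even zero-mass) atoms contribute $\approx 0$ via the $x\log x \to 0$ convention; heavy atoms are the only concern and they are bounded since $\KL\diver{P}{U_m} \leq m$ always. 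No truncation or smoothing of the divergence is needed, and the paper's MGF bound works directly.

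The genuine gap is in the explicit construction. To get randomness complexity $m + (1+\alpha)\log(1/\delta)$ with $\poly(\log(1/\delta),1/\eps)$ sample complexity, one needs a seed length depending on $n - k$ rather than on $n$, which requires block-source composition and the zig-zag product. The paper devotes a section (\S\ref{sec:composition}) to explaining why this composition is delicate: the subgaussian distance fails the data-processing inequality, which is precisely why one is forced into KL divergence, and even for KL one still needs a new quasi-triangle inequality ($\KL\diver{P}{R} \leq (1+1/\alpha)\KL\diver{P}{Q} + \D_{1+\alpha}\diver{Q}{R}$, Lemma~\ref{lem:kl_triangle}) to make the composition of an ``outer'' $\D_2$-extractor with an ``inner'' KL-extractor go through. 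Your proposal skips this entirely: ``re-analyze directly under KL'' doesn't engage with the obstacle, and your alternative of post-composing a pairwise-independent hash to convert TV error to KL error does not work --- if $\Ext(X,U_d)$ is only TV-close to $U_m$ it may have an atom of mass $\approx \eps$, and hashing such a distribution to fewer output bits cannot remove that heavy atom (and in any case the generic TV$\to$KL conversion loses a factor of $m$, Lemma~\ref{lem:tv_extractor_is_kl}, which is exactly what must be avoided). The actual work of the explicit construction is the KL zig-zag product (\Cref{cor:zigzag} built from \Cref{thm:divergence_composition} and \Cref{lem:weak_waste_reextraction}), instantiated with the expander-walk $\D_2$-extractor and a GUV-based KL-extractor; this is the piece your plan is missing.
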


\subsection{Randomness extractors}

To prove \cref{intro_thm:subgaussian_samplers}, we develop a
corresponding theory of generalized \emph{randomness extractors} which
we believe is of independent interest. For bounded functions, Zuckerman
\cite{zuckerman_randomnessoptimal_1997} showed that averaging samplers
are essentially equivalent to randomness extractors, and in fact several of
the best-known constructions of such samplers arose as extractor
constructions.  Formally, a randomness extractor is defined as follows:
\begin{defn}[Nisan and Zuckerman \cite{nis_zuc_randomness_1996}]\label{intro_defn:extractor}
	A function $\Ext:\zo^n \times \zo^d\to \zo^m$ is said to be a
	\emph{$(k, \eps)$ extractor} if for every distribution $X$ over $\zo^m$
	satisfying $\max_{x\in \zo^n} \PR{X=x}\leq 2^{-k}$, the distributions
	$\Ext(X, U_d)$ and $U_m$ are $\eps$-close in total variation distance.
	Equivalently, for all $f:\zo^{m} \to [0,1]$ it holds that
	$\E{f(\Ext(X, U_d))} - \E{f(U_m)} \leq \eps$. The number $d$
	is called the \emph{seed length}, and $m$ the \emph{output length}.
\end{defn}

The formulation of \cref{intro_defn:extractor} in terms of
$[0,1]$-valued functions implies that extractors produce an output
distribution that is indistinguishable from uniform by all bounded
functions $f$. It is therefore natural to consider a variant of this
definition for a different set $\F$ of test functions $f:\zo^m \to \R$
which need not be bounded.

\begin{defn}[Special case of \cref{defn:divergence_extractor} using
	\cref{defn:test_functions}]\label{intro_defn:func_extractor}
	A function $\Ext:\zo^n \times \zo^d \to \zo^m$ is said to be a $(k,
	\eps)$ extractor for a set of real-valued functions $\F$ from $\zo^m$
	if for every distribution $X$ over $\zo^m$ satisfying $\max_{x\in
	\zo^n} \PR{X=x}\leq 2^{-k}$ and every $f\in \F$, it holds that
	$\E{f(\Ext(X, U_d))} - \E{f(U_m)} \leq \eps$.
\end{defn}

We show that much of the theory of extractors and samplers carries over
to this more general setting. In particular, we generalize the
connection of Zuckerman \cite{zuckerman_randomnessoptimal_1997} to show
that extractors for a class of functions of $\F$ are also samplers for
that class, along with the converse (though as for total variation
distance, there is some loss of parameters in this direction). Thus, to
construct a subgaussian sampler it suffices (and is preferable) to
construct a corresponding extractor for subgaussian test functions,
which is how we prove \cref{intro_thm:subgaussian_samplers}.

Unfortunately, the distance induced by subgaussian test functions is not
particularly pleasant to work with: for example the point masses on $0$
and $1$ in $\zo$ are $O(1)$ apart, but embedding them in the larger
universe $\zo^m$ leads to distributions which are $\Theta(\sqrt m)$
apart. We solve this problem by constructing extractors for a stronger
notion, the \emph{Kullback--Leibler (KL) divergence}, equivalently,
extractors whose output is required to have very high Shannon entropy.
\begin{defn}[Special case of \cref{defn:divergence_extractor} using
	KL divergence]\label{intro_defn:kl_extractor} A function
	$\Ext:\zo^n \times \zo^d \to \zo^m$ is said to be a $(k, \eps)$
	KL-extractor if for every distribution $X$ over $\zo^m$ satisfying
	$\max_{x\in \zo^n}\allowbreak \PR{X=x}\leq 2^{-k}$ it holds that
	$\KL\diver{\Ext(X, U_d)}{U_m} \leq \eps$, or equivalently
	$\Ent(\Ext(X, U_d)) \geq m - \eps$.
\end{defn}
A strong form of Pinsker's inequality (e.g.\ \cite[Lemma
4.18]{bou_lug_mas_concentration_2013}) implies that a $(k, \eps^2)$
KL-extractor is also a $(k, \eps)$ extractor for subgaussian test
functions. The KL divergence has the advantage that is nonincreasing
under the application of functions (the famous \emph{data-processing
inequality}), and although it does not satisfy a traditional triangle
inequality, it does satisfy a similar inequality when one of the
segments satisfies stronger $\ell_2$ bounds. These properties allow us
to show that the zig-zag product for extractors of Reingold, Wigderson,
and Vadhan \cite{rei_vad_wig_entropy_conference} also works for
KL-extractors, and therefore to construct KL-extractors with seed length
depending on $n$ and $k$ only through the \emph{entropy deficiency} $n -
k$ of $X$ rather than $n$ itself, which in the sampler perspective
corresponds to a sampler with sample complexity depending on the failure
probability $\delta$ rather than the universe size $2^m$. Hence, we
prove \cref{intro_thm:subgaussian_samplers} by constructing
corresponding KL-extractors.

\begin{thm}[Informal version of \cref{thm:high_entropy_kl_extractor}]\label{intro_thm:high_min_entropy_kl}
	For all integers $m$, $1>\delta,\eps > 0$, and $\alpha > 0$ there is an explicit $(k,
	\eps)$ KL-extractor $\Ext:\zo^n \times \zo^d \to \zo^m$ with $n =
	m + (1+\alpha)\cdot \log(1/\delta)$, $k = n - \log(1/\delta)$, and
	$d = O(\log\log(1/\delta) + \log(1/\eps))$.
\end{thm}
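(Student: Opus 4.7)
My plan is to mimic the iterated zig-zag construction of Reingold--Vadhan--Wigderson \cite{rei_vad_wig_entropy_conference}, with total variation distance replaced throughout by KL-divergence so that the resulting seed length depends on $n$ and $k$ only through the entropy deficiency $\Delta := n-k = \log(1/\delta)$, rather than through $n$ itself.

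\textbf{Base KL-extractor.} First I construct a KL-extractor $\Ext_0: \zo^{n_0} \times \zo^{d_0} \to \zo^{m_0}$ whose input size $n_0$ is comparable to the target deficiency $\Delta$. For small $n_0$ a direct construction suffices: for example, $\Ext_0(x,y) = x \oplus g(y)$ for a suitable small-bias generator $g$ produces an output whose $\ell_2$-distance from uniform is at most $\eps$, and near-uniformity in $\ell_2$ translates to a KL bound via the standard chain $\KL \leq \chi^2 \leq 2^{m_0}\|\cdot\|_2^2$. This yields seed length $O(\log n_0 + \log(1/\eps)) = O(\log \Delta + \log(1/\eps))$.

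\textbf{Zig-zag preserves KL-extraction.} The technical heart of the proof is showing that the RVW zig-zag composition of two KL-extractors is itself a KL-extractor, with appropriately combined parameters. The standard RVW analysis for TV distance combines the two component errors using the triangle inequality, together with a conditioning step on an intermediate ``bad event''. Because KL has no traditional triangle inequality, I substitute the approximate KL-triangle inequality mentioned in the introduction, which holds whenever one of the two segments is close in a stronger $\ell_2$/$\chi^2$ sense; the data processing inequality for KL replaces the analogous TV contraction used by RVW. The zig-zag construction naturally produces the required stronger-closeness intermediates, namely the output of an inner KL-extractor acting on a small universe, so this substitution is exactly what is needed to push the RVW argument through.

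\textbf{Iteration and main obstacle.} Finally, I iterate the zig-zag composition starting from $\Ext_0$: each step grows the input length by roughly the size of the base extractor while adding only a constant to the seed length (for a fixed per-stage error). After $t = O(\log\log(1/\delta))$ iterations the input length reaches $n = m + (1+\alpha)\log(1/\delta)$, and apportioning the total error budget $\eps$ uniformly gives per-stage error $\eps/t$ and a final seed length of $O(\log\log(1/\delta) + \log(1/\eps))$, matching the theorem. The \emph{main obstacle} is precisely the KL-triangle step above: I must verify that the approximate triangle inequality applies at every zig-zag stage (which relies on the intermediate distribution being close not merely in KL but also in the stronger $\ell_2$/$\chi^2$ sense), and I must track the combined errors carefully so that the seed length remains additive in $\log\Delta$ and $\log(1/\eps)$ rather than multiplicative.
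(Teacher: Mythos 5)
Your proposal correctly identifies the two core tools — the $\ell_2/\chi^2$-tempered KL ``triangle inequality'' (the paper's \cref{lem:kl_triangle}) and the KL data-processing inequality, combined via an RVW-style zig-zag product adapted to KL (the paper's \cref{thm:divergence_composition,cor:zigzag}). That much matches the paper's strategy.

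However, the \emph{iterated} zig-zag in your final step is both unnecessary and, as you have written it, does not give the claimed seed length. You apportion error $\eps/t$ per stage over $t=O(\log\log(1/\delta))$ stages; but each stage is a zig-zag whose inner/waste extractors add seed length $\Theta(\log(t/\eps))$, so the total seed length is $\Theta\bigl(t\cdot\log(t/\eps)\bigr) = \Theta\bigl(\log\log(1/\delta)\cdot(\log(1/\eps)+\log\log\log(1/\delta))\bigr)$, which is multiplicative rather than the additive $O(\log\log(1/\delta)+\log(1/\eps))$ the theorem asserts. (A geometric error split does not rescue this: $\sum_i \log(1/\eps_i)$ is still superlinear in $t$.) There is no error-budget scheme that makes an $O(\log\log(1/\delta))$-round iteration land on the claimed bound.

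The fix is that a \emph{single} application of the zig-zag product already suffices, because the outer component can itself be an extractor on the full $\approx m$-bit universe. Concretely, the paper takes the outer extractor to be the Goldreich--Wigderson expander-walk extractor $\Ext_{out}:\zo^{n_{out}}\times\zo^{d_{out}}\to\zo^{n_{out}}$ with $n_{out}=m-d_{out}$, which is a $\D_2$-extractor with seed length $d_{out}=O(\log(1/\delta)+\log(1/\eps))$ that depends only on the deficiency $\Delta=\log(1/\delta)$ rather than on $m$. The inner extractor and the waste extractor both come from the GUV-based KL-extractor (converted via $\tv\to\KL$ and strengthened by the RRV transform, \cref{lem:tv_extractor_is_kl,cor:rrv_entropy_loss,cor:kl_guv_lessloss}); these operate on inputs of length $O(\Delta+\log(1/\eps))$ and thus have seed length $O(\log(\Delta/\eps))=O(\log\log(1/\delta)+\log(1/\eps))$, which is the final seed length after the single zig-zag. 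Your iteration arises from trying to grow the input length gradually from a base extractor of size $O(\Delta)$, but that is not needed: the expander-walk outer extractor already eats the $\approx m$ input bits directly, and the inner GUV extractor need only produce its $O(\Delta+\log(1/\eps))$-bit seed.

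A secondary concern: your proposed base construction $\Ext_0(x,y)=x\oplus g(y)$ with $g$ a small-bias generator is stated with output length equal to input length, which cannot be a nontrivial extractor; you would need to project to fewer bits, at which point you essentially rediscover a leftover-hash-style $\ell_2$-extractor. The paper uses the Leftover Hash Lemma (\cref{prop:lhl}, noting it is already a $\D_2$-extractor) and GUV (\cref{cor:kl_guv}) for these small-universe pieces; either is fine, but the construction needs to be stated so that the output length is genuinely smaller than the input.
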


Though the above theorem is most interesting in the high min-entropy
regime where $n - k = o(n)$, we also show the existence of
KL-extractors matching most of the existing constructions of total
variation extractors. In particular, we note that extractors for
$\ell_2$ are immediately KL-extractors without loss of parameters,
and also that any extractor can be made a KL-extractor by taking
slightly smaller error, so that the extractors of Guruswami, Umans, and
Vadhan \cite{gur_uma_vad_unbalanced_2009} can be taken to be
KL-extractors with essentially the same parameters.

Furthermore, in addition to our explicit constructions, we also show
non-constructively that KL-extractors (and hence subgaussian
extractors) exist with very good parameters:
\begin{thm}[Informal version of \cref{thm:kl_nonexplicit}]
	For any integers $k < n\in \mathbb N$ and $1>\eps > 0$ there is a
	$(k,\eps)$ KL-extractor $\Ext:\zo^n \times \zo^d \to \zo^m$ with $d
	= \log(n - k) + \log(1/\eps) + O(1)$ and $m = k + d - \log(1/\eps) -
	O(1)$.
\end{thm}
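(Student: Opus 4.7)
The plan is to apply the probabilistic method: choose $\Ext(x,s) \in \zo^m$ independently and uniformly at random for every $(x,s) \in \zo^n \times \zo^d$, and show that with positive probability $\Ext$ satisfies the KL-extractor condition. By the joint convexity of the KL divergence in its first argument, it suffices to verify the bound on \emph{flat} $k$-sources---the uniform distributions on subsets $S \subseteq \zo^n$ of size $\abs{S} = 2^k$---since any min-entropy-$k$ source is a convex combination of such sources. There are at most $\binom{2^n}{2^k} \leq (e \cdot 2^{n-k})^{2^k}$ flat sources, over which we will union bound.

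For a fixed flat source, I would pass from KL to the $\chi^2$ divergence via the standard inequality $\KL\diver{p}{U_m} \leq \chi^2(p,U_m) = M \sum_y (p_y - 1/M)^2$, where $M = 2^m$. Writing $N = 2^{k+d}$, $\mu = N/M$, and $Z_y = \abs{\set{(x,s) \in S \times \zo^d : \Ext(x,s) = y}}$ so that $p_y = Z_y/N$, this becomes $\chi^2 = (M\mu^2)^{-1} \sum_y (Z_y - \mu)^2$. The target parameters correspond to $\mu = 2^{k+d-m} = \Theta(1/\eps)$, and since each $Z_y \sim \operatorname{Bin}(N,1/M)$ has variance $\approx \mu$, one has $\E{\chi^2} \approx 1/\mu \approx \eps$. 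It therefore suffices to show that $V := \sum_y (Z_y - \mu)^2$ is at most $O(M\mu)$ with failure probability $\exp(-\Omega(M))$; at $M = \Theta(2^k(n-k))$ this probability beats the number of flat sources.

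The concentration step is the main technical obstacle. Each summand $(Z_y - \mu)^2$ is sub-exponential with parameter $O(\mu)$ (via the Bernstein tail bound for binomials), so were the $Z_y$'s independent, a Bernstein inequality for sums of sub-exponential variables would give exactly $\PR{V \geq 2\E{V}} \leq \exp(-\Omega(M))$, which is what we need. To handle the multinomial dependence $\sum_y Z_y = N$, one can Poissonize: replacing the $Z_y$'s by independent $\operatorname{Poisson}(\mu)$ variables and conditioning on their sum equalling $N$ recovers the multinomial distribution, and by a local central limit estimate the conditioning costs only a $\poly(N)$ factor in the tail probability, which is absorbed into the constants since $\log N \ll M$. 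Putting everything together, taking $M = \Theta(2^k(n-k))$ yields $m = k + \log(n-k) + O(1)$, and combined with $\mu = \Theta(1/\eps)$ gives $d = \log N - k = \log(n-k) + \log(1/\eps) + O(1)$, as claimed.
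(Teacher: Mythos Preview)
Your high-level strategy---choose $\Ext$ uniformly at random, reduce to flat $k$-sources by convexity, and union-bound over the $\binom{2^n}{2^k}$ such sources---is exactly the paper's approach. The difference, and the place where your argument breaks, is the concentration step.

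You assert that under Poissonization each summand $(Z_y-\mu)^2$ is sub-exponential with parameter $O(\mu)$, but this is false. For $Z\sim\mathrm{Poisson}(\mu)$ the upper tail of $(Z-\mu)^2$ behaves like $\PR{(Z-\mu)^2>u}\approx\exp\bigl(-\Theta(\sqrt{u}\log(\sqrt{u}/\mu))\bigr)$ once $u\gg\mu^2$, which is sub-Weibull of order $1/2$, not sub-exponential; consequently $\E\bigl[e^{t(Z-\mu)^2}\bigr]=\infty$ for every $t>0$ and the Bernstein inequality for sub-exponential sums cannot be invoked. Working instead with the untruncated Binomial does not help: there $(Z_y-\mu)^2$ is bounded but its Orlicz $\psi_1$-norm is of order $N/m$ (driven by the regime $|Z_y-\mu|\gg\mu$), and plugging $K\asymp N/m$ into Bernstein yields a deviation probability like $\exp(-O(m))$ rather than the $\exp(-\Omega(M))$ you need to beat $(e\cdot 2^{n-k})^{2^k}$. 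The underlying issue is that the $\chi^2$ statistic of $N$ balls in $M$ bins simply does not concentrate to within a constant factor of its mean with failure probability $e^{-\Omega(M)}$ in the regime $\mu=\Theta(1/\eps)=O(1)$ you are in; its fluctuations are governed by the maximally-loaded bins.

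The paper sidesteps this by never passing to $\chi^2$. Instead it bounds the moment generating function of the KL divergence of an empirical distribution directly: for $K$ i.i.d.\ uniform samples in $\zo^m$ with empirical law $\hat P$, one has $\E\bigl[2^{t\,\KL(\hat P\|U_m)}\bigr]\le\bigl(2^{t/K}/(1-t/K)\bigr)^{M-1}$ for $0\le t<K$ (a lemma from a companion note of the author). Because $\KL\le m$ is always bounded, this MGF is finite, and choosing $t=K/3$ gives the tail bound $\PR{\KL>\eps}\le 2^{M-K\eps/3}$. Multiplying over the $D$ seeds (which are independent for a random $\Ext$) yields $\PR{\E_s[\KL]>\eps}\le 2^{MD-KD\eps/3}$, and now the union bound over flat sources goes through exactly as you outlined. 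If you want to salvage your route, you would need a direct exponential-in-$M$ tail bound for the empirical $\chi^2$ (or better, for $\KL$) rather than summand-wise sub-exponentiality; the paper's MGF lemma is precisely such a tool.
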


One key thing to note about the nonconstructive KL extractors of the
above theorem is that they incur an entropy loss of only $1\cdot
\log(1/\eps)$, whereas total variation extractors necessarily incur
entropy loss $2\cdot \log(1/\eps)$ by the lower bound of Radhakrishnan
and Ta-Shma \cite{rad_ta-_bounds_2000}.  In particular, by Pinsker's
inequality, $(k, \eps^2)$ KL-extractors with the above parameters are
also optimal $(k, \eps)$ standard (total variation) extractors
\cite{rad_ta-_bounds_2000}, so that one does not lose anything by
constructing a KL-extractor rather than a total variation extractor.
We also remark that the above theorem gives subgaussian samplers with
better parameters than a naive argument that a random function should
directly be a subgaussian sampler, as it avoids the need to take a union
bound over $O(M^M) = O(2^{M\log M})$ test functions (for $M = 2^m$)
which results in additional additive $\log\log$ factors in the
randomness complexity.

In the total variation setting, there are only a couple of methods known
to explicitly achieve optimal entropy loss $2\cdot \log(1/\eps)$, the
easiest of which is to use an extractor which natively has this sort of
loss, of which only three are known: An extractor from random walks over
Ramanujan Graphs due to Goldreich and Wigderson
\cite{gol_wig_tiny_1997}, the Leftover Hash Lemma due to Impagliazzo,
Levin, and Luby \cite{imp_lev_lub_pseudorandom_1989} (see also
\cite{mcinnes_cryptography_1987,ben_bra_rob_privacy_1988}), and the
extractor based on almost-universal hashing of Srinivasan and Zuckerman
\cite{sri_zuc_computing_1999}.  Unfortunately, all of these are $\ell_2$
extractors and so must have seed length linear in $\min(n-k, m)$
(cf.~\cite[Problem 6.4]{vadhan_pseudorandomness_2012}), rather than
logarithmic in $n - k$ as known non-constructively. The other
alternative is to use the generic reduction of Raz, Reingold, and Vadhan
\cite{raz_rei_vad_extracting_2002} which turns any extractor $\Ext$ with
entropy loss $\Delta$ into one with entropy loss $2\cdot
\log(1/\eps)+O(1)$ by paying an additive $O(\Delta + \log (n/\eps))$ in
seed length. We show that all of these $\ell_2$ extractors and the
\cite{raz_rei_vad_extracting_2002} transformation also work to give
KL-extractors with entropy loss $1\cdot \log(1/\eps) + O(1)$, so that
applications which require minimal entropy loss can also use explicit
constructions of KL-extractors.

\subsection{Future directions}

Broadly speaking, we hope that the perspective of KL-extractors will
bring new tools (perhaps from information theory) to the construction of
extractors and samplers. For example, since KL-extractors can have seed
length with dependence on $\eps$ of only $1\cdot \log(1/\eps)$, trying
to explicitly construct a KL-extractor with seed length $1\cdot
\log(1/\eps) + o(\min(n-k,k))$ may also shed light on how to achieve optimal
dependence on $\eps$ in the total variation setting.

In the regime of constant $\eps = \Omega(1)$, we do not have explicit
constructions of subgaussian samplers matching the expander-walk sampler
of Gillman \cite{gillman_chernoff_1998} for $[0,1]$-valued functions,
which achieves randomness complexity $m + O(\log(1/\delta))$ and sample
complexity $O(\log(1/\delta))$, as asked for by B\l{}asiok
\cite{blasiok_optimal_conference}. From the extractor point-of-view, it
would suffice (by the reduction of
\cite{gol_wig_tiny_1997,rei_vad_wig_entropy_conference} that we analyze
for KL-extractors) to construct explicit \emph{linear degree}
KL-extractors with parameters matching the linear degree extractor of
Zuckerman \cite{zuckerman_linear_2007}, i.e.~with seed length $d =
\log(n) + O(1)$ and $m = \Omega(k)$ for $\eps = \Omega(1)$. A
potentially easier problem, since the Zuckerman linear degree extractor
is itself based on the expander-walk sampler, could be to instead match
the parameters of the near-linear degree extractors of Ta-Shma,
Zuckerman, and Safra \cite{ta-_zuc_saf_extractors_2006} based on
Reed--Muller codes, thereby achieving sample complexity
$O(\log(1/\delta)\cdot \poly\log\log(1/\delta))$.

Finally, we hope that KL-extractors can also find uses beyond being
subgaussian samplers and total variation extractors: for example it
seems likely that there are applications (perhaps in coding or
cryptography, cf.~\cite{bar_dod_kra_per_pie_sta_yu_leftover_2011})
where it is more important to have high Shannon entropy in the output
than small total variation distance to uniform, in which case one may be
able to use $(k,\eps)$ KL-extractors with entropy loss only $1\cdot
\log(1/\eps)$ directly, rather than a total variation extractor or
$(k,\eps^2)$ KL-extractor with entropy loss $2\cdot \log(1/\eps)$.

\section{Preliminaries}
\subsection{(Weak) statistical divergences and metrics}

Our results in general will require very few assumptions on notions of
``distance'' between probability distributions, so we will give a
general definition and indicate in our theorems when we need which
assumptions.

\begin{defn}\label{defn:distance}
	A \emph{weak statistical divergence} (or simply \emph{weak
	divergence}) on a finite set $\X$ is a function $\D$ from pairs of
	probability distributions over $X$ to $\R\cup \set{\pm \infty}$.
	We write $\D\diver PQ$ for the value of $\D$ on distributions $P$
	and $Q$. Furthermore
	\begin{enumerate}
		\item If $\D\diver{P}{Q} \geq 0$ with equality iff $P = Q$, then
			$\D$ is \emph{positive-definite}, and we simply call $\D$ a
			\emph{divergence}.
		\item If $\D\diver{P}{Q} = \D\diver{Q}{P}$, then $\D$ is \emph{symmetric}.
		\item If $\D\diver{P}{R} \leq \D\diver{P}{Q} + \D\diver{Q}{R}$, then $\D$
			satisfies the \emph{triangle inequality}.
		\item If $\D\diver{\lambda P_1 + (1-\lambda)P_2}{\lambda Q_1 + (1-\lambda) Q_2}
			\leq \lambda \D\diver{P_1}{Q_1} + (1-\lambda)\D\diver{P_2}{Q_2}$ for all $\lambda
			\in [0,1]$, then $\D$ is \emph{jointly convex}.
			If this holds only when $Q_1 = Q_2$ then $\D$ is \emph{convex
			in its first argument}.
		\item If $\D$ is defined on all finite sets $\mathcal Y$ and for all
			functions $f:\X\to\mathcal Y$ the divergence is nonincreasing
			under $f$, that is $\D\diver{f(P)}{f(Q)} \leq \D\diver{P}{Q}$,
			then $\D$ satisfies the \emph{data-processing inequality}.
	\end{enumerate}
	If $\D$ is positive-definite, symmetric, and satisfies the triangle inequality,
	then it is called a \emph{metric}.
\end{defn}

\begin{exmp}\label{exmp:lp}
	The \emph{$\ell_p$ distance} for $p > 0$ between probability distributions over
	$\X$ is
	\[
		d_{\ell_p}(P, Q) \defeq \of{\sum_{x\in \X} \abs[\big]{P_x - Q_x}^{p}}^{1/p}
	\]
	and is positive-definite and symmetric. Furthermore, for $p\geq 1$ it
	satisfies the triangle inequality (and so is a metric), and is jointly
	convex. The $\ell_p$ distance is nonincreasing in $p$.
\end{exmp}

\begin{exmp}\label{exmp:tv}
	The \emph{total variation distance} is
	\[\tv(P, Q) \defeq \frac12 d_{\ell_1}(P, Q) = \sup_{S\subseteq
		\X}\abs{\PR{P\in S} - \PR{Q \in S}} = \sup_{f\in [0,1]^{\X}}
	\of{\E{f(P)}-\E{f(Q)}}\]
	and is a jointly convex metric that satisfies the data-processing inequality.
\end{exmp}

\begin{exmp}[R\'enyi Divergences \cite{renyi_measures_1961}]\label{defn:renyi_divergence}
	For two probability distributions $P$ and $Q$ over a finite set $\X$,
	the \emph{R\'enyi $\alpha$-divergence} or \emph{R\'enyi divergence of
	order $\alpha$} is defined for real $0 < \alpha \neq 1$ by
	\[
		\D_{\alpha}\diver{P}{Q} \defeq \frac{1}{\alpha -
		1}\log\of{\sum_{x\in \X} \frac{P_x^\alpha}{Q_x^{\alpha - 1}}}
	\]
	where the logarithm is in base $2$ (as are all logarithms in this
	paper unless noted otherwise).  The R\'enyi divergence is continuous
	in $\alpha$ and so is defined by taking limits for
	$\alpha\in\set{0,1,\infty}$, giving for $\alpha = 0$
	the divergence $\D_0\diver PQ
	\defeq \log\of{1/\PR[x\sim Q]{P_x \neq 0}}$, for $\alpha = 1$
	the \emph{Kullback--Leibler (or KL) divergence}
	\[
	\KL\diver{P}{Q} \defeq \D_1\diver{P}{Q} = \sum_{x\in X}P_x\log\frac{P_x}{Q_x},
	\]
	and for $\alpha = \infty$ the \emph{max-divergence}
	$\D_\infty\diver{P}{Q} \defeq \max_{x\in X} \log\frac{P_x}{Q_x}$.
	The R\'enyi divergence is nondecreasing in $\alpha$.
	Furthermore, when $\alpha \leq 1$ the R\'enyi divergence is jointly
	convex, and for all $\alpha$ the R\'enyi divergence satisfies
	the data-processing inequality \cite{van_har_renyi_2014}.

	When $Q = U_{\X}$ is the uniform distribution over the set $\X$, then
	for all $\alpha$, $\D_\alpha\diver{P}{U_{\X}} = \log\abs{\X} -
	\Ent_\alpha(P)$ where $0 \leq \Ent_\alpha(P)\leq \log\abs{\X}$ is
	called the \emph{R\'enyi $\alpha$-entropy of $P$}. For $\alpha = 0$,
	$\Ent_0(P) = \log\abs{\Supp(P)}$ is the \emph{max-entropy of $P$}, for
	$\alpha = 1$, $\Ent_1(P) = \sum_{x\in \X} P_x\log(1/P_x)$ is the
	\emph{Shannon entropy of $P$}, and for $\alpha = \infty$,
	$\Ent_\infty(P) = \min_{x\in \X}\log(1/P_x)$ is the \emph{min-entropy
	of $P$}.

	For $\alpha = 2$, the R\'enyi $2$-entropy can be expressed in terms
	of the $\ell_2$-distance to uniform:
	\[
		\log\abs{\X} - H_2(P) = \D_2\diver P{U_{\X}} = \log\of{1+\abs{\X}
		\cdot d_{\ell_2}(P, U_{\X})^2}
	\]
\end{exmp}

\subsection{Statistical weak divergences from test functions}
\label{sec:metricfromtests}

Zuckerman's connection \cite{zuckerman_randomnessoptimal_1997} between
samplers for bounded functions and extractors for total variation
distance is based on the following standard characterization of total
variation distance as the maximum distinguishing advantage achieved by
bounded functions,
\[\tv(P, Q) = \sup_{f\in [0,1]^{\X}} \E{f(P)} - \E{f(Q)}.\]
By considering an arbitrary class of functions in the supremum, we
get the following weak divergence:

\begin{defn}\label{defn:test_functions}
	Given a finite $\X$ and a set of real-valued functions $\F\subseteq
	\R^{\X}$, the \emph{$\F$-distance on $\X$} between probability
	measures on $\X$ is denoted by $\D^{\F}$ and is defined as
	\[
		\D^{\F}\diver{P}{Q} \defeq \sup_{f\in \F}\of{ \E{f(P)} - \E{f(Q)}}
		= \sup_{f\in \F}\D^{\set{f}}\diver{P}{Q},
	\]
	where we use a superscript to avoid confusion with the
	Csisz\'ar-Morimoto-Ali-Silvey $f$-divergences
	\cite{csiszar_informationstheoretische_1963,morimoto_markov_1963,%
	ali_sil_general_1966}.

	We call the set of functions $\F$ \emph{symmetric} if for all
	$f\in \F$ there is $c\in \R$ and $g\in \F$ such that $g = c - f$,
	and \emph{distinguishing} if for all $P\neq Q$ there
	exists $f\in \F$ with $\D^{\set{f}}\diver PQ > 0$.
\end{defn}
\begin{exmp}
	If $\F = \set{0,1}^{\X}$ or $\F = [0,1]^{\X}$, then $\D^{\F}$ is
	exactly the total variation distance.
\end{exmp}
\begin{rmrk}\label{rmrk:weak_symmetry}
	An equivalent definition of $\F$ being symmetric is that for all $f\in
	\F$ there exists $g\in \F$ with $\D^{\set{g}}\diver{P}{Q} = - D^{\set
	f}\diver PQ = \D^{\set{f}}\diver QP$ for all distributions $P$ and
	$Q$. Hence, one might also consider a weaker notion of symmetry that
	reverses quantifiers, where $\F$ is ``weakly-symmetric'' if for all
	$f\in \F$ and distributions $P$ and $Q$ there exists $g\in \F$ such
	that $\D^{\set{g}}\diver{P}{Q} = -\D^{\set f}\diver PQ=
	\D^{\set{f}}\diver{Q}{P}$. However, such a class ${\F}$ gives exactly
	the same weak divergence $\D^{\F}$ as its ``symmetrization''
	$\overline{\F} = \F\cup\setof{-f}{f\in \F}$, so we do not need to
	introduce this more complex notion.
\end{rmrk}
\begin{rmrk}\label{rmrk:pmf}
	By identifying distributions with their probability mass function, one
	can realize $\E{f(P)} - \E{f(Q)}$ as an inner product $\inner{P - Q,
	f}$.  \Cref{defn:test_functions} can thus be written as $\D^{\F}\diver
	PQ = \sup_{f\in \F}\inner{P - Q, f}$, which is essentially the notion
	of indistinguishability considered in several prior works, (see
	e.g.~the survey of Reingold, Trevisan, Tulsiani, and Vadhan
	\cite{rei_tre_tul_vad_new_2008}), but without requiring all $f$ to be
	bounded.
\end{rmrk}
\begin{rmrk}
	For simplicity, all our probabilistic distributions are given only for
	random variables and distributions over finite sets as this is all we
	need for our application. A more general version of
	\cref{defn:test_functions} has been studied by e.g.~Zolotarev
	\cite{zolotarev_probability_1984} and M\"uller
	\cite{muller_integral_1997} and is commonly used in developments of
	Stein's method in probability.
\end{rmrk}

We now establish some basic properties of $\D^{\F}$.
\begin{lem}\label{lem:test_function_gives_metric}
	Let $\F\subseteq\R^{\X}$ be a set of real-valued functions over
	a finite set $\X$. Then $\D^{\F}$ satisfies the triangle inequality
	and is jointly convex, and
	\begin{enumerate}
		\item if $\F$ is symmetric then $\D^{\F}$ is symmetric and
			\[
				\D^{\F}\diver PQ = \sup_{f\in \F}\abs{\E{f(P)} - \E{f(Q)}} \geq 0,
			\]
		\item if $\F$ is distinguishing	then $\D^{\F}$ is positive-definite,
	\end{enumerate}
	so that if $\F$ is both symmetric and distinguishing then
	$\D^{\F}$ is a jointly convex metric on probability distributions over
	$\X$, in which case we also use the notation $d_{\F}(P, Q) \defeq
	\D^{\F}\diver PQ$.
\end{lem}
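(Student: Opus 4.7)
The plan is to work directly from the definition $\D^{\F}\diver{P}{Q} = \sup_{f\in \F}(\E{f(P)} - \E{f(Q)})$, exploiting the fact that for each fixed $f$, the quantity $\E{f(\cdot)}$ is a linear functional on the probability simplex. Each of the four stated properties then reduces to a one-line manipulation followed by taking a supremum.

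First I would handle the triangle inequality and joint convexity simultaneously since both follow from linearity of expectation. For the triangle inequality, write $\E{f(P)} - \E{f(R)} = (\E{f(P)} - \E{f(Q)}) + (\E{f(Q)} - \E{f(R)}) \le \D^{\F}\diver PQ + \D^{\F}\diver QR$ for every $f \in \F$, and take the supremum on the left-hand side. For joint convexity, use linearity to write
\[
\E{f(\lambda P_1 + (1-\lambda)P_2)} - \E{f(\lambda Q_1 + (1-\lambda)Q_2)} = \lambda\bigl(\E{f(P_1)} - \E{f(Q_1)}\bigr) + (1-\lambda)\bigl(\E{f(P_2)} - \E{f(Q_2)}\bigr),
\]
bound each term by $\D^{\F}\diver{P_i}{Q_i}$, and again take the sup over $f \in \F$.

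Next I would handle the symmetry claim. If $\F$ is symmetric, then for every $f \in \F$ there is some $g = c - f \in \F$, and since the constant $c$ cancels we get $\E{g(P)} - \E{g(Q)} = -(\E{f(P)} - \E{f(Q)}) = \E{f(Q)} - \E{f(P)}$. Thus the map $f \mapsto g$ is a bijection of $\F$ (up to redundancy) that swaps $\D^{\set{f}}\diver PQ$ with $\D^{\set{f}}\diver QP$, so taking the sup gives $\D^{\F}\diver PQ = \D^{\F}\diver QP$. Since each pair $(f,g)$ contributes one nonnegative value to the sup, we also get $\D^{\F}\diver PQ \ge 0$, and consequently
\[
\sup_{f\in \F}\abs{\E{f(P)} - \E{f(Q)}} = \max\bigl(\D^{\F}\diver PQ, \D^{\F}\diver QP\bigr) = \D^{\F}\diver PQ.
\]

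Finally, for positive-definiteness under the distinguishing assumption, observe that $\D^{\F}\diver PP = \sup_{f\in \F} 0 = 0$, while if $P \ne Q$ then by the distinguishing hypothesis there exists $f \in \F$ with $\E{f(P)} - \E{f(Q)} > 0$, and hence $\D^{\F}\diver PQ > 0$. Combined with the symmetric case (which supplies nonnegativity when both hypotheses hold), this yields $\D^{\F}\diver PQ \ge 0$ with equality iff $P = Q$, completing the metric claim. There is no real obstacle here: the only subtlety is that nonnegativity of $\D^{\F}$ requires the symmetric hypothesis in general, which is why the final metric claim invokes both conditions rather than just distinguishing alone.
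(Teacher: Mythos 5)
Your proof is correct and follows essentially the same approach as the paper's: exploit the linearity of each $\D^{\set f}$ to get the triangle inequality and joint convexity term by term, then take suprema, with the symmetry and positive-definiteness claims following directly from the definitions. One small correction to your closing remark: distinguishing alone already forces nonnegativity, since $\D^{\F}\diver PP = \sup_{f\in\F} 0 = 0$ while $\D^{\F}\diver PQ > 0$ for $P\neq Q$ by the distinguishing hypothesis; the metric claim invokes both hypotheses not because nonnegativity needs symmetry, but simply because a metric requires the symmetry property on top of positive-definiteness.
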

\begin{proof}
	The triangle inequality and joint convexity both follow from the
	linearity of each $\D^{\set f}$, as by linearity of expectation, for
	all $f:\X\to \R$ it holds that
	\begin{align*}
		\D^{\set f}\diver PR
		&= \D^{\set f}\diver PQ + \D^{\set f}\diver QR\\
		\D^{\set f}\diver{\lambda P_1 + (1-\lambda)P_2}{\lambda Q_1 + (1-\lambda) Q_2}
		&= \lambda \D^{\set f}\diver{P_1}{Q_1} + (1-\lambda) \D^{\set f}\diver{P_2}{Q_2}.
	\end{align*}
	Upper bounding the terms on the right-hand side by $\D^{\F}$ and
	taking the supremum of the left hand side over $f\in \F$ then gives
	the claims. The symmetry and positive-definite claims are immediate
	from the definitions.
\end{proof}

Furthermore, the notion of dual norm has an appealing interpretation in
this framework via \cref{rmrk:pmf}, generalizing the fact that total
variation distance corresponds to $[0,1]$-valued test functions (or
equivalently that $\ell_1$ distance corresponds to to $[-1,1]$-valued
functions).
\begin{prop}\label{thm:lp_is_bounded_q_moments}
	Let $1\leq p,q \leq \infty$ be H\"older
	conjugates (meaning $1/p + 1/q = 1$), and let
	\[\mathcal{M}_q
		\defeq \set{f:\zo^m \to \R \given \norm{f(U_m)}_q \defeq \E{\abs{f(U_m)}^q}^{1/q} \leq 1}
	\]
	be the set of real-valued functions from $\zo^m$ with bounded
	$q$-th moments. Then $d_{\ell_p} = 2^{-m/q} \cdot d_{\mathcal{M}_q}$,
	in the sense that for all probability distributions $A$ and $B$ over
	$\zo^m$ it holds that $d_{\ell_p}(A, B) = 2^{-m/q}\cdot
	d_{\mathcal{M}_q}(A, B)$.

	In particular, taking $p = 1$ and $q = \infty$ recovers the result
	for $\ell_1$ (equivalently total variation) distance.
\end{prop}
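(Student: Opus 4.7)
The plan is to reduce the claim to the classical H\"older duality between $\ell_p$ and $\ell_q$ norms on $\R^{\zo^m}$ under the counting measure, via the probability-mass-function viewpoint of \cref{rmrk:pmf}.

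First I would unfold the expectation difference as a counting-measure inner product:
\[
\E{f(A)} - \E{f(B)} = \sum_{x\in \zo^m} (A_x - B_x) f(x) = \inner{A - B, f}.
\]
Next I would convert the normalization on $f$: since
$\norm{f(U_m)}_q^q = 2^{-m}\sum_x \abs{f(x)}^q$, the constraint $f\in \mathcal{M}_q$ is equivalent to $\norm{f}_q \leq 2^{m/q}$, where $\norm{\cdot}_q$ denotes the counting-measure $\ell_q$ norm on $\R^{\zo^m}$. Noting that $\mathcal{M}_q$ is closed under negation, \cref{lem:test_function_gives_metric} lets me write
\[
d_{\mathcal{M}_q}(A, B) = \sup_{\norm{f}_q \leq 2^{m/q}} \abs[\big]{\inner{A - B, f}} = 2^{m/q}\cdot \sup_{\norm{f}_q \leq 1} \abs[\big]{\inner{A - B, f}}.
\]

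The last step is to invoke the standard identity $\sup_{\norm{f}_q \leq 1}\abs{\inner{v, f}} = \norm{v}_p$ for H\"older conjugates $p, q$, applied to $v = A - B\in \R^{\zo^m}$. One direction is immediate from H\"older's inequality; the matching lower bound comes from the explicit dual witness $f(x) = \operatorname{sgn}(v_x)\abs{v_x}^{p-1}/\norm{v}_p^{p-1}$ for $1 < p < \infty$, and from $f(x) = \operatorname{sgn}(v_x)$ for $p = 1$ and an indicator of an argmax for $p = \infty$. Combining with $\norm{A-B}_p = d_{\ell_p}(A, B)$ yields $d_{\mathcal{M}_q}(A, B) = 2^{m/q}\cdot d_{\ell_p}(A, B)$, which is equivalent to the claim.

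I do not anticipate any real obstacle: the only mild care needed is in handling the endpoints $p\in\set{1,\infty}$ of the duality (where the Hahn--Banach style extremizer is not continuous in $v$), but in the discrete setting the witnesses above suffice. The specialization to $p = 1$, $q = \infty$ gives $d_{\ell_1} = 2^{-m/\infty}\cdot d_{\mathcal{M}_\infty} = d_{\mathcal{M}_\infty}$, which together with $\tv = \frac12 d_{\ell_1}$ recovers \cref{exmp:tv}.
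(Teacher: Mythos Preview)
Your proposal is correct and takes essentially the same approach as the paper: both reduce the claim to the $\ell_p$/$\ell_q$ duality (what the paper calls ``H\"older's extremal equality'') after rewriting $\E{f(A)}-\E{f(B)}$ as an inner product with the probability mass functions. The only cosmetic difference is that the paper carries out the duality under the uniform measure on $\zo^m$ (so the constraint stays $\norm{f(U_m)}_q\le 1$), whereas you first translate to the counting measure (so the constraint becomes $\norm{f}_q\le 2^{m/q}$); the resulting factor of $2^{m/q}$ appears in the same place either way.
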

\begin{proof}
	As mentioned this is just the standard fact that the $\ell_p$ and
	$\ell_q$ norms are dual, but for completeness we include a proof
	in our language using the extremal form of H\"older's
	inequality (note that since we are dealing with finite probability
	spaces the extremal equality holds even for $p = \infty$ and $ q= 1$).
	Given probability distributions $A$ and $B$ over $\zo^m$, we have that
	\begin{align*}
		d_{\ell_p}(A, B)
		&= \of{\sum_x \abs{A_x - B_x}^p}^{1/p}\\
		&= 2^{m/p} \E[x\sim U_m]{\abs{A_x - B_x}^p}^{1/p}\\
		&= 2^{m/p} \max_{\substack{f:\zo^m\to\R\\ \norm{f(U_m)}_q\leq 1}} \abs{\E[x\sim U_m]{f(x)(A_x - B_x)}}
		\tag{H\"older's extremal equality}\\
		&= 2^{-m+m/p}\max_{\substack{f:\zo^m\to\R\\ \norm{f(U_m)}_q\leq 1}} \abs{\E{f(A)} - \E{f(B)}}\\
		&= 2^{-m/q} \cdot d_{\mathcal{M}_q}(A, B)\tag{by symmetry of $\mathcal{M}_q$}
	\end{align*}
	as desired.
\end{proof}

\section{Extractors for weak divergences and connections to samplers}
\subsection{Definitions}

We now use this machinery to extend the notion of an extractor due to
Nisan and Zuckerman \cite{nis_zuc_randomness_1996} and the average-case
variant of Dodis, Ostrovsky, Reyzin, and Smith
\cite{dod_ost_rey_smi_fuzzy_2008}.

\begin{defn}[Extends \cref{intro_defn:func_extractor}]\label{defn:divergence_extractor}
	Let $\D$ be a weak divergence on the set $\zo^m$, and $\Ext: \zo^n
	\times \zo^d \to \zo^m$. Then if for all distributions $X$ over
	$\zo^n$ with $\minent(X) \geq k$ it holds that
	\begin{enumerate}
		\item $\D\diver{\Ext(X, U_d)}{U_m} \leq \eps$, then $\Ext$
			is said to be a \emph{$(k, \eps)$ extractor for $\D$}, or
			a \emph{$(k, \eps)$ $\D$-extractor}.
		\item $\E[s\sim U_d]{\D\diver{\Ext(X, s)}{U_m}} \leq \eps$, then $\Ext$
			is said to be a \emph{$(k, \eps)$ strong extractor for $\D$}, or
			a \emph{$(k, \eps)$ strong $\D$-extractor}.
	\end{enumerate}
	Furthermore, if for all joint distributions $(Z, X)$ where $X$ is
	distributed over $\zo^n$ with $\condminent(X|Z) \defeq
	\log\of{1/\E[z\sim Z]{2^{-\minent(X|_{Z=z})}}} \geq k$, it holds that
	\begin{enumerate}\setcounter{enumi}{2}
		\item $\E[z\sim Z]{\D\diver{\Ext(X|_{Z=z}, U_d)}{U_m} \leq \eps}$, then $\Ext$
			is said to be a \emph{$(k, \eps)$ average-case extractor for $\D$}, or
			a \emph{$(k, \eps)$ average-case $\D$-extractor}.
		\item $\E[z\sim Z, s\sim U_d]{\D\diver{\Ext(X|_{Z=z}, s)}{U_m}} \leq \eps$, then $\Ext$
			is said to be a \emph{$(k, \eps)$ average-case strong extractor for $\D$}, or
			a \emph{$(k, \eps)$ average-case strong $\D$-extractor}.
	\end{enumerate}
\end{defn}
\begin{rmrk}
	By taking $\D$ to be the total variation distance we recover the
	standard definitions of extractor and strong extractor due to
	\cite{nis_zuc_randomness_1996} and the definition of average-case
	extractor due to \cite{dod_ost_rey_smi_fuzzy_2008}.

	However, our definitions are phrased slightly differently for strong
	and average-case extractors as an expectation rather than a joint
	distance, that is, for strong average-case extractors we require a
	bound on the expectation $\E[z\sim Z, s\sim
	U_d]{\D\diver{\Ext(X|_{Z=z}, s)}{U_m}}$ rather than a bound on
	$\D\diver{Z, U_d, \Ext(X, U_d)}{Z, U_d, U_m}$.  In our setting, the
	weak divergence $\D$ need not be defined over the larger joint
	universe, but it is defined for all random variables over $\zo^m$. In
	the case of $\tv$ and KL divergence, both definitions are
	equivalent (for KL divergence, this is an instance of the
	\emph{chain rule}).
\end{rmrk}
\begin{rmrk}
	The strong variants of \cref{defn:divergence_extractor} are also
	non-strong extractors assuming the weak divergence $\D$ is convex in
	its first argument, as it is for most weak divergences of interest,
	including the $\ell_p$ norms for $p\geq 1$, all $\D^{\F}$ defined by
	test functions, the KL divergence, R\'enyi divergences for $\alpha
	\leq 1$, and all Csisz\'ar-Morimoto-Ali-Silvey $f$-divergences.
	The average-case variants are always non-average-case extractors by
	taking $Z$ to be independent of $X$.
\end{rmrk}
\begin{rmrk}
	We gave \cref{defn:divergence_extractor} for general weak divergences
	which need not be symmetric, and made the particular choice that the
	output of the extractor was on the left-hand side of the weak
	divergence and that the uniform distribution was on the right-hand
	side. This is motivated by the standard information-theoretic
	divergences such as KL divergence, which require the left-hand
	distribution to have support contained in the support of the
	right-hand distribution, and putting the uniform distribution on the
	right ensures this is always the case. Furthermore, the
	KL divergence to uniform has a natural interpretation as an entropy
	difference, $\KL\diver{P}{U_m} = m - \Ent(P)$ for $\Ent$ the Shannon
	entropy, so that in particular a KL extractor with error $\eps$
	requires the output to have Shannon entropy at least $m - \eps$.  If
	for a weak divergence $\D$ the other direction is more natural, one
	can always reverse the sides by considering the weak divergence
	$\D'\diver QP = \D\diver PQ$.
\end{rmrk}
\begin{rmrk}
	\Cref{defn:divergence_extractor} does not technically need even a weak
	divergence, as it suffices to simply have a measure of distance to
	uniform. However, since weak divergences have minimal constraints, one
	can define a weak divergence from any distance to uniform by ignoring
	the second component (or setting it to be infinite for non-uniform
	distributions).
\end{rmrk}

We also give the natural definition of averaging samplers for arbitrary
classes of functions $\F$ extending \cref{intro_defn:sampler}, along
with the strong variant of Zuckerman
\cite{zuckerman_randomnessoptimal_1997}.

\begin{defn}\label{defn:test_function_sampler}
	Given a class of functions $\F:\zo^m \to \R$, a function $\Samp:\zo^n
	\to \of{\zo^m}^D$ is said to be a \emph{$(\delta,
	\eps)$ strong averaging sampler for $\F$} or a \emph{$(\delta, \eps)$
	strong averaging $\F$-sampler} if for all $f\in \F$, it
	holds that
	\[
		\PR[x\sim U_n]{\E[i\sim U_{[D]}]{f_i\of{\Samp(x)_i}
		- \E{f_i(U_m)}} > \eps} \leq \delta
	\]
	where $[D] = \set{1, \dots, D}$.
	If this holds only when $f_1 = \cdots = f_D$, then it is called a
	\emph{(non-strong) $(\delta, \eps)$ averaging sampler for $\F$} or
	\emph{$(\delta, \eps)$ averaging $\F$-sampler}.  We say that $\Samp$
	is a \emph{$(\delta, \eps)$ strong absolute averaging sampler for $\F$} if it also holds that
	\[
		\PR[x\sim U_n]{\abs[\bigg]{\E[i\sim U_{[D]}]{f_i\of{\Samp(x)_i}
		- \E{f_i(U_m)}}} > \eps} \leq \delta.
	\]
	with the analogous definition for non-strong samplers.
\end{defn}
\begin{rmrk}\label{rmrk:symmetric_test_function_are_abs}
	We separated a single-sided version of the error bound in
	\cref{defn:test_function_sampler} as in
	\cite{vadhan_pseudorandomness_2012}, as it makes the connection
	between extractors and samplers cleaner and allows us to be specific
	about what assumptions are needed. Note that if $\F$ is symmetric then
	every $(\delta, \eps)$ (strong) sampler for $\F$ is a $(2\delta,
	\eps)$ (strong) absolute sampler for $\F$, recovering the standard
	notion up to a factor of $2$ in $\delta$.
\end{rmrk}

\subsection{Equivalence of extractors and samplers}
We now show that Zuckerman's connection
\cite{zuckerman_randomnessoptimal_1997} does indeed generalize to this
broader setting as promised.

\begin{thm}\label{thm:func_ext_are_samplers}
	Let $\Ext:\zo^n\times\zo^d \to \zo^m$ be an
	$(n-\log(1/\delta),\eps)$-extractor (respectively strong extractor)
	for the weak divergence $\D^{\F}$ defined by a class of test functions
	$\F:\zo^m\to\R$ as in \cref{defn:test_functions}. Then the function
	$\Samp:\zo^n \to \of{\zo^m}^{D}$ for $D = 2^d$ defined by $\Samp(x)_i
	= \Ext(x, i)$ is a $(\delta, \eps)$-sampler (respectively strong
	sampler) for $\F$.
\end{thm}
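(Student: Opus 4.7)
The plan is to use a standard contrapositive argument generalizing Zuckerman's equivalence: if the induced $\Samp$ fails to be a sampler, then the set of ``bad'' seeds will be large enough to define a high min-entropy source on which $\Ext$ fails to fool some test function $f \in \F$, contradicting the extractor property. The key observation is that the averaging structure of $\Samp(x)_i = \Ext(x,i)$ lets us identify ``empirical average minus true mean'' with the one-sided advantage $\D^{\{f\}}\diver{\Ext(X, U_d)}{U_m}$, which by definition is bounded above by $\D^{\F}\diver{\Ext(X, U_d)}{U_m}$.

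First I would handle the non-strong case. Suppose $\Samp$ is not a $(\delta, \eps)$ sampler for $\F$. Then there is $f\in \F$ and a set $B \subseteq \zo^n$ with $|B| > \delta \cdot 2^n$ such that for every $x\in B$,
\[
\E[i\sim U_{[D]}]{f(\Samp(x)_i)} - \E{f(U_m)} > \eps.
\]
Let $X$ be uniform on $B$, so $\minent(X) = \log|B| > n - \log(1/\delta) = k$. Averaging the displayed inequality over $x\in B$ and rewriting $i\sim U_{[D]}$ as $U_d$ (since $D = 2^d$) gives
\[
\E[x\sim X, s\sim U_d]{f(\Ext(x, s))} - \E{f(U_m)} > \eps,
\]
i.e.\ $\D^{\{f\}}\diver{\Ext(X, U_d)}{U_m} > \eps$. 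Since $f\in \F$, this is at most $\D^{\F}\diver{\Ext(X, U_d)}{U_m}$, contradicting the $(k,\eps)$-extractor property of $\Ext$.

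For the strong case the argument is structurally identical: assume $f_1,\dots,f_D\in \F$ witness the failure of $\Samp$ to be a strong sampler on a set $B$ with $|B| > \delta 2^n$; define $X$ uniform on $B$; swap the order of expectations over $x$ and $i$ to obtain
\[
\E[s\sim U_d]{\D^{\{f_s\}}\diver{\Ext(X, s)}{U_m}} > \eps,
\]
and bound each inner term by $\D^{\F}\diver{\Ext(X, s)}{U_m}$ to contradict the strong extractor condition.

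I do not expect any real obstacle: the only thing to be careful about is the arithmetic converting $|B| > \delta 2^n$ into $\minent(X) \geq k$ (strict vs.\ non-strict), and the bookkeeping for the strong case where each coordinate $i$ may use a different test function $f_i\in \F$, which is exactly why we phrased $\D^{\F}$ as a supremum over a single $f$ but still get to apply the extractor bound coordinate by coordinate.
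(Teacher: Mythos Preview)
Your proposal is correct and follows essentially the same argument as the paper: both define the bad set $B$ for a fixed tuple of test functions, take $X$ uniform on $B$, and use linearity/averaging to convert the per-seed deviation into a bound on $\D^{\F}\diver{\Ext(X,\cdot)}{U_m}$ that contradicts the extractor property. The paper phrases it directly (showing $|B| < \delta 2^n$) while you phrase it as a contrapositive, but the content is identical, including your handling of the strong case with coordinate-dependent $f_i$.
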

\begin{proof}
	The proof is essentially the same as that of
	\cite{zuckerman_randomnessoptimal_1997}.

	\newcommand{\Bfg}{B_{f_1,\dots,f_D}}
	\newcommand{\Xg}{X}
	Fix a collection of test functions $f_1,\dots,f_D\in \F$, where if
	$\Ext$ is not strong we restrict to $f_1 = \cdots = f_D$, and let
	$\Bfg\subseteq \zo^n$ be defined as
	\begin{align*}
		\Bfg
		&\defeq \setof{x\in \zo^n}{\E[i\sim U_{[D]}]{f_i\of{\Ext(x,i)} -
		\E{f_i(U_m)}} > \eps}\\
		&= \setof{x\in \zo^n}{\E[i\sim U_{[D]}]{\D^{\set{f_i}}\diver{U_{\set{\Ext(x,i)}}}{U_m}}
		> \eps},
	\end{align*}
	where $U_{\set z}$ is the point mass on $z$. Then if $\Xg$ is uniform over $\Bfg$, we have
	\begin{align*}
		\eps
		&<
		\E[x\sim \Xg]{\E[i \sim U_{[D]}]{f_i\of{\Ext(x,i)} - \E{f_i(U_m)}}}\\
		&= \E[i \sim U_{[D]}]{\D^{\set{f_i}}\diver{\Ext(\Xg,i)}{U_m}}\\
		&=
		\begin{cases}
			\D^{\set{f_1}}\diver{\Ext(\Xg, U_d)}{U_m}&\text{if }f_1=\cdots=f_D\\
			\E[i\sim U_{[D]}]{\D^{\set{f_i}}\diver{\Ext(\Xg, i)}{U_m}}&\text{always}
		\end{cases}\\
		&\leq
		\begin{cases}
			\D^{\F}\diver{\Ext(\Xg, U_d)}{U_m}&\text{if }f_1=\cdots=f_D\\
			\E[i\sim U_{[D]}]{\D^{\F}\diver{\Ext(\Xg, i)}{U_m}}&\text{always}
		\end{cases}
	\end{align*}
	Since $\Ext$ is an $(n-\log(1/\delta),\eps)$-extractor (respectively
	strong extractor) for $\D^{\F}$ we must have $\minent(\Xg) < n -
	\log(1/\delta)$. But $\minent(\Xg) = \log\abs{\Bfg}$ by definition, so
	we have $\abs{\Bfg} < \delta 2^n$. Hence, the probability that a
	random $x\in \zo^n$ lands in $\Bfg$ is less than $\delta$, and since
	$\Bfg$ is exactly the set of seeds which are bad for $\Samp$, this
	concludes the proof.
	\let\Bfg\undefined
	\let\Xg\undefined
\end{proof}
\begin{rmrk}
	H\"older's inequality implies that an extractor for $\ell_p$ with
	error $\eps\cdot 2^{-m(p-1)/p}$ is also an $\ell_1$ extractor and thus
	$[-1,1]$-averaging sampler with error $\eps$.
	\Cref{thm:lp_is_bounded_q_moments,thm:func_ext_are_samplers} show that
	they are in fact samplers for the much larger class of functions
	$\mathcal{M}_{p/(p-1)}$ with bounded $p/(p-1)$ moments (rather than
	just $\infty$ moments), also with error $\eps$.
\end{rmrk}

Furthermore, if all the functions in $\F$ have bounded deviation from
their mean (for example, subgaussian functions from $f:\zo^m \to \R$
have such a bound of $O(\sqrt{m})$ by the tail bounds from
\cref{lem:subgaussian_properties}), then we also have a partial converse
that recovers the standard converse in the case of total variation
distance.

\begin{thm}\label{thm:samplers_are_extractors}
	Let $\F$ be a class of functions $\F\subset \zo^m \to \R$ with finite
	\emph{maximum deviation from the mean}, meaning $\maxdev(\F) \defeq
	\sup_{f\in \F}\max_{x\in \zo^n}\of[\big]{f(x) - \E{f(U_m)}} < \infty$.
	Then given a $(\delta, \eps)$ $\F$-sampler (respectively $(\delta,\eps)$
	strong $\F$-sampler) $\Samp:\zo^n\to \of{\zo^m}^D$, the function
	$\Ext:\zo^n \times \zo^d \to \zo^m$ for $d = \log D$ defined by $\Ext(x,
	i) = \Samp(x)_i$ is a $\of[\big]{k, \eps+\delta\cdot 2^{n-k}\cdot
	\maxdev(\F)}$ $\D^{\F}$-extractor (respectively strong
	$\D^{\F}$-extractor) for every $0 \leq k \leq n$.

	In particular, $\Ext$ is an $\of[\big]{n - \log(1/\delta) +
	\log(1/\eta), \eps+\eta\cdot \maxdev(\F)}$ average-case
	$\D^{\F}$-extractor (respectively strong average-case
	$\D^{\F}$-extractor) for every $\delta\leq \eta \leq 1$.
\end{thm}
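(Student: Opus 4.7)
The plan is to port the standard extractor-from-sampler reduction to this generalized setting, using the bound $\maxdev(\F)$ to control the contribution of "bad" seeds instead of relying on the test functions being in $[0,1]$. I will first handle the non-strong, non-average-case claim, then adapt to the strong case by per-seed choice of test functions, and finally lift to the average-case by applying the result to each conditional distribution and taking an expectation.

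For the non-strong case: fix a distribution $X$ on $\zo^n$ with $\minent(X) \geq k$ and any $f \in \F$. Let $B_f \defeq \setof{x\in\zo^n}{\E[i\sim U_{[D]}]{f(\Samp(x)_i)} - \E{f(U_m)} > \eps}$ be the set of seeds on which the sampler fails for $f$. The sampler guarantee gives $\abs{B_f} \leq \delta \cdot 2^n$, and since $X$ has min-entropy $\geq k$ we get $\PR{X \in B_f} \leq 2^{n-k}\delta$. Splitting the expectation over good and bad seeds:
\[
\E[x\sim X,\, i\sim U_d]{f(\Ext(x,i)) - \E{f(U_m)}} \leq \PR{X\in B_f^c}\cdot\eps + \PR{X\in B_f}\cdot \maxdev(\F) \leq \eps + \delta\cdot 2^{n-k}\cdot \maxdev(\F),
\]
where on bad seeds I bound the integrand pointwise by $\maxdev(\F)$. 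Taking the supremum over $f\in \F$ gives the desired bound on $\D^{\F}\diver{\Ext(X,U_d)}{U_m}$.

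For the strong case I need to bound $\E[s\sim U_d]{\sup_{f\in\F} (\E{f(\Ext(X,s))} - \E{f(U_m)})}$. For each seed $s\in\zo^d$ select (up to arbitrarily small slack, which I will send to zero at the end) a near-supremum $f_s^\star\in \F$; apply the strong sampler hypothesis to the sequence $(f_1^\star,\ldots,f_D^\star)$ to get a single bad set $B \subseteq \zo^n$ with $\abs{B}\leq \delta\cdot 2^n$ on whose complement $\E[i]{f_i^\star(\Samp(x)_i) - \E{f_i^\star(U_m)}} \leq \eps$. The same good-vs-bad decomposition then yields the strong $\D^{\F}$-extractor bound. (The $f_s^\star$ are chosen deterministically as a function of $s$, so there is no measurability issue beyond the finite combinatorics here.)

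For the average-case claim, I apply the preceding (non-average) statement to each conditional distribution $X\mid_{Z=z}$, which has min-entropy $k_z$ satisfying $\E[z\sim Z]{2^{-k_z}} = 2^{-k}$. This gives $\D^{\F}\diver{\Ext(X\mid_{Z=z}, U_d)}{U_m} \leq \eps + \delta\cdot 2^{n-k_z}\cdot \maxdev(\F)$. Averaging over $z\sim Z$ and using linearity of expectation collapses $\E[z]{2^{-k_z}}$ to $2^{-k}$, yielding the bound $\eps + \delta\cdot 2^{n-k}\cdot \maxdev(\F)$; specializing $k = n-\log(1/\delta)+\log(1/\eta)$ gives $\delta\cdot 2^{n-k} = \eta$. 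The hypothesis $\eta\geq \delta$ ensures $k\leq n$, so the min-entropy condition $\PR{X\in S}\leq \abs S\cdot 2^{-k}$ is nonvacuous. The main subtlety is conceptually minor but easy to mishandle: in the strong case the supremum over $\F$ must be pulled inside the expectation over $s$ by first selecting $f_s^\star$ per seed and only then invoking the sampler guarantee on the resulting sequence.
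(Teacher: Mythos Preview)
Your proposal is correct and follows essentially the same approach as the paper: define the bad set of seeds where the sampler fails, bound its mass under $X$ via the min-entropy assumption, split the expectation into good and bad parts using $\maxdev(\F)$ on the bad set, and then average over conditionals for the average-case claim. The only cosmetic difference is that the paper handles the strong and non-strong cases in parallel by fixing an arbitrary tuple $(f_1,\dots,f_D)$ (with $f_1=\cdots=f_D$ in the non-strong case) and showing the bound uniformly, whereas you explicitly pick near-optimal $f_s^\star$ per seed; these are equivalent formulations of the same argument.
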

\begin{proof}
	Again the proof is analogous to the one in \cite{zuckerman_randomnessoptimal_1997}.

	Fix a distribution $X$ over $\zo^m$ with $\minent(X)\geq k$ and a
	collection of test functions $f_1,\dots,f_D\in \F$, where if $\Samp$
	is not strong we restrict to $f_1 = \cdots = f_D$.
	\newcommand{\Bf}{B_{f_1,\dots, f_D}}
	Then since $\Samp$ is a $(\delta, \eps)$ $\F$-sampler, we know that the set
	of seeds for which the sampler is bad must be small. Formally, the set
	\begin{align*}
		\Bf
		&\defeq \set{x\in \zo^n \given \E[i\sim U_d]{f_i(\Samp(x)_i) - \E{f_i(U_m)}} > \eps}\\
		&= \set{x\in \zo^n \given \E[i\sim U_d]{f_i\of{\Ext(x,i)}- \E{f_i(U_m)}} > \eps}
	\end{align*}
	has size $\abs{\Bf} \leq \delta 2^n$.  Thus, since $X$ has min-entropy
	at least $k$ we know $\PR{X\in \Bf} \leq 2^{-k}\cdot \delta 2^n$, so we have
	\begin{align*}
		\E[i\sim U_d]{\E{f_i(\Ext(X, i)) - \E{f_i(U_m)}}}&\\
		&\hspace{-9em}= \E[X]{\E[i\sim U_d]{f_i(\Ext(X, i))-\E{f_i(U_m)}}}\\
		&\hspace{-9em}=\PR{X\in \Bf}\cdot \CE[X]{\E[i\sim U_d]{f_i(\Ext(X, i))-\E{f_i(U_m)}}}{X\in \Bf}\\
		&\hspace{-7em}+ \PR{X\not\in \Bf}\cdot \CE[X]{\E[i\sim U_d]{f_i(\Ext(X, i))-\E{f_i(U_m)}}}{X\not\in \Bf}\\
		&\hspace{-9em}\leq \PR{X\in \Bf}\cdot \maxdev(\F) + \PR{X\not\in\Bf}\cdot \eps\\
		&\hspace{-9em}\leq  2^{-k}\cdot \delta2^n \cdot \maxdev(\F) + \eps
	\end{align*}
	completing the proof of the main claim. The ``in particular'' statement follows
	since if $(Z, X)$ are jointly distributed with $\condminent(X|Z) \geq n - \log(1/\delta)
	+ \log(1/\eta)$ we have
	\[
		\E[z\sim Z]{\eps + \delta\cdot 2^{n-\minent(X|_{Z=z})}\cdot \maxdev(\F)}
			= \eps + \delta\cdot2^{n-\condminent(X|Z)}\cdot \maxdev(\F) \leq \eps + \eta\cdot \maxdev(\F)
	\]
	by definition of conditional min-entropy.
	\let\Bf\undefined
\end{proof}

\subsection{All extractors are average-case}

Under a similar boundedness condition for general weak divergences, we
can recover the standard fact that all extractors are average-case
extractors under a slight loss of parameters (the same loss as achieved
by Dodis, Ostrovsky, Reyzin, and Smith \cite{dod_ost_rey_smi_fuzzy_2008}
for the case of total variation distance). More interestingly, if the
weak divergence is given by $\D^{\F}$ for a symmetric class of (possibly
unbounded) functions $\F$, we can also generalize and recover the result
of Vadhan \cite[Problem 6.8]{vadhan_pseudorandomness_2012} that shows
that a $(k,\eps)$ extractor (for total variation) is a $(k, 3\eps)$
average-case extractor without any other loss.

\begin{thm}\label{thm:ext_are_average_case}
	Let $\D$ be a bounded weak divergence over $\zo^m$, meaning that
	\[0\leq \norm{\D}_{\infty}
	\defeq \sup_{P\text{ on }\zo^m} \D\diver{P}{U_m} < \infty.\]
	Then a $(k, \eps)$-extractor for $\D$ (respectively strong extractor)
	$\Ext:\zo^n\times \zo^d \to \zo^m$ is also a $(k + \log(1/\eta), \eps
	+ \eta \cdot \norm{\D}_\infty)$ average-case-extractor for $\D$
	(respectively strong average-case-extractor) for any $0 < \eta \leq
	1$.
\end{thm}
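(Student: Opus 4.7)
The plan is to mimic the standard Dodis--Ostrovsky--Reyzin--Smith argument for total variation distance, using the boundedness $\|\D\|_\infty$ in place of the trivial bound $\tv \leq 1$. The only real ingredient is a Markov-style bound on the conditional min-entropy that identifies a small set of ``bad'' values of the side information $z$ on which the extractor guarantee might fail.

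Concretely, fix a joint distribution $(Z, X)$ with $\condminent(X|Z) \geq k + \log(1/\eta)$, and let
\[
	B \defeq \setof{z \in \Supp(Z)}{\minent(X|_{Z=z}) < k}.
\]
Unwinding the definition $\condminent(X|Z) = -\log \E[z\sim Z]{2^{-\minent(X|_{Z=z})}}$, we obtain $\E[z\sim Z]{2^{-\minent(X|_{Z=z})}} \leq \eta\cdot 2^{-k}$. Since every $z\in B$ contributes a term strictly greater than $2^{-k}$ to this expectation, the standard Markov-style argument gives $\PR{Z\in B} < \eta$.

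Now I would split the expectation defining the average-case guarantee into the contributions from $z\in B$ and $z\notin B$. For $z \notin B$, the hypothesis $\minent(X|_{Z=z})\geq k$ and the fact that $\Ext$ is a (strong) $(k,\eps)$-extractor for $\D$ gives $\D\diver{\Ext(X|_{Z=z}, U_d)}{U_m}\leq \eps$ in the non-strong case, or $\E[s\sim U_d]{\D\diver{\Ext(X|_{Z=z},s)}{U_m}}\leq \eps$ in the strong case. For $z \in B$, I bound each term by $\|\D\|_\infty$, which is valid since $\|\D\|_\infty$ is taken as a supremum over \emph{all} distributions on $\zo^m$ (including the point masses $U_{\set{\Ext(x,s)}}$ that show up in the strong setting). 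Combining these, the total expectation is at most
\[
	\PR{Z\in B}\cdot \norm{\D}_\infty + \PR{Z\notin B}\cdot \eps \leq \eta\cdot \norm{\D}_\infty + \eps,
\]
as desired.

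There is essentially no serious obstacle: the proof is a routine conditioning/Markov argument, and the only care needed is to make sure $\|\D\|_\infty$ indeed upper-bounds the divergence from an arbitrary (possibly deterministic) output distribution to $U_m$, which is immediate from its definition as a supremum over all distributions $P$ on $\zo^m$. The strong and non-strong cases go through identically once this is noted.
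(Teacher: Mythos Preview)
Your proposal is correct and follows essentially the same approach as the paper: both arguments invoke the Markov-style bound (Lemma~2.2 of \cite{dod_ost_rey_smi_fuzzy_2008}) that $\PR[z\sim Z]{\minent(X|_{Z=z}) < k} \leq \eta$ whenever $\condminent(X|Z)\geq k+\log(1/\eta)$, then split the expectation over $Z$ into the good case (bounded by $\eps$ via the extractor property) and the bad case (bounded by $\norm{\D}_\infty$). Your explicit remark that $\norm{\D}_\infty$ covers the point-mass distributions arising in the strong case is a nice clarification the paper leaves implicit.
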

\begin{proof}
	The proof is analogous to that of \cite{dod_ost_rey_smi_fuzzy_2008}.
	We prove it only for non-strong extractors, the proof for strong
	extractors is completely analogous by adding more expectations.

	For jointly distributed random variables $(Z, X)$ such that
	$\condminent(X|Z) \geq k + \log(1/\eta)$, we have by \cite[Lemma
	2.2]{dod_ost_rey_smi_fuzzy_2008} that the probability that $\PR[z\sim
	Z]{\minent(X|_{Z=z}) < k} \leq \eta$. Thus
	\begin{align*}
		&\E[z\sim Z]{\D\diver{\Ext\of[\big]{X|_{Z=z}, U_d}}{U_m}}\\
		&\qquad=
		\PR[z\sim Z]{\minent(X|_{Z=z}) < k}\cdot
		\CE[z\sim Z]{\D\diver{\Ext\of[\big]{X|_{Z=z},
		U_d}}{U_m}}{\minent(X|_{Z=z}) < k}\\
	 &\qquad\qquad+
		\PR[z\sim Z]{\minent(X|_{Z=z}) \geq k}\cdot
		\CE[z\sim Z]{\D\diver{\Ext\of[\big]{X|_{Z=z},
		U_d}}{U_m}}{\minent(X|_{Z=z}) \geq k}\\
		&\qquad\leq \eta \cdot \norm{D}_{\infty} + 1\cdot \eps\tag*{\qedhere}
	\end{align*}
\end{proof}

\begin{thm}\label{thm:func_ext_are_average}
	Let $\F$ be a symmetric class of test functions and $\Ext:\zo^n \times
	\zo^d \to \zo^m$ be a $(k, \eps)$ extractor (respectively strong
	extractor) for $\D^{\F}$, where $k$ is at most $n - 1$. Then
	$\Ext$ is an $(k, 3\eps)$ average-case extractor
	(respectively strong average-case extractor) for $\D^{\F}$.
\end{thm}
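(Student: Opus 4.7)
The plan is to reduce to the standard extractor guarantee on each conditional slice via a mixing trick. Fix jointly distributed $(Z, X)$ with $\condminent(X|Z) \geq k$, set $\alpha_z \defeq 2^{-\minent(X|_{Z=z})}$ so that by definition of $\condminent$ we have $\E[z\sim Z]{\alpha_z} \leq 2^{-k}$, and set $\lambda_z \defeq \min(1, 2^{-k}/\alpha_z) \in (0, 1]$. My aim is to show the per-$z$ bound $\D^\F\diver{\Ext(X|_{Z=z}, U_d)}{U_m} \leq (2-\lambda_z)\eps/\lambda_z$ and then average.

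When $\alpha_z \leq 2^{-k}$ we have $\lambda_z = 1$ and the conditional itself has min-entropy at least $k$, so the extractor hypothesis directly gives a bound of $\eps$. When $\alpha_z > 2^{-k}$, the idea is to realize $X|_{Z=z}$ as a convex summand of a min-entropy-$k$ distribution. Let $H_z \defeq \setof{x\in\zo^n}{\PR{X|_{Z=z} = x} > 2^{-k}}$ be the heavy atoms; Markov gives $|H_z| < 2^k$, and the hypothesis $k \leq n-1$ ensures $|H_z| + 2^k \leq 2^n$, so the uniform distribution $W_z$ on $\zo^n\setminus H_z$ has min-entropy at least $k$ with support disjoint from $H_z$. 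A short case check on atoms inside versus outside $H_z$ then confirms that the mixture $\tilde X_z \defeq \lambda_z(X|_{Z=z}) + (1-\lambda_z) W_z$ also has min-entropy at least $k$. Applying the extractor hypothesis to both $\tilde X_z$ and $W_z$, expanding $\Ext(\tilde X_z, U_d) = \lambda_z \Ext(X|_{Z=z}, U_d) + (1-\lambda_z)\Ext(W_z, U_d)$ pointwise in each $f \in \F$, and taking the supremum (using subadditivity of $\sup$ and the symmetry of $\F$ from \cref{lem:test_function_gives_metric}, which identifies $\D^\F\diver{U_m}{\Ext(W_z, U_d)}$ with $\D^\F\diver{\Ext(W_z, U_d)}{U_m} \leq \eps$) yields
\[
	\lambda_z \D^\F\diver{\Ext(X|_{Z=z}, U_d)}{U_m} \leq \D^\F\diver{\Ext(\tilde X_z, U_d)}{U_m} + (1-\lambda_z)\D^\F\diver{\Ext(W_z, U_d)}{U_m} \leq (2-\lambda_z)\eps.
\]

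Since $1/\lambda_z = \max(1, 2^k\alpha_z) \leq 1 + 2^k\alpha_z$, averaging over $z$ and applying $\E[z\sim Z]{\alpha_z} \leq 2^{-k}$ gives
\[
	\E[z\sim Z]{\D^\F\diver{\Ext(X|_{Z=z}, U_d)}{U_m}} \leq \eps\cdot(2\E[z\sim Z]{1/\lambda_z} - 1) \leq \eps(2\cdot 2 - 1) = 3\eps,
\]
finishing the non-strong case. The strong extractor version is handled identically by carrying an extra expectation over the seed $s \sim U_d$ through every inequality, since the mixing and symmetry manipulations above are pointwise in $s$. The hard part will be verifying the existence of $W_z$ --- this is exactly where the hypothesis $k \leq n - 1$ enters --- and carefully invoking symmetry of $\F$ in the inversion step that isolates the divergence on $X|_{Z=z}$ out of the combined mixture bound.
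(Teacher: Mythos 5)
Your proof is correct, and it relies on the same core device the paper uses: mix the low-min-entropy conditional distribution with a dummy min-entropy-$k$ source $W_z$, apply the extractor guarantee to both the mixture and the dummy, and invert via symmetry of $\F$ to isolate the term for $X|_{Z=z}$, yielding the per-slice bound $(2/\lambda_z - 1)\,\eps$ and then the same $\E[z\sim Z]{1/\lambda_z}\leq 2$ averaging calculation (exactly matching the $(2^{t+1}-1)\eps$ bound and \cref{lem:graceful_ext_is_average} in the paper, with $\lambda_z = 2^{-t}$). The organizational differences are minor but real: the paper factors the argument through \cref{lem:graceful_ext_is_average} (graceful error decay implies average-case) and, to establish the decay, uses Chor--Goldreich convexity to reduce to flat sources with support at most $2^{n-1}$ before mixing with the uniform on the complement of the support; you inline the averaging step and avoid the flat-source reduction entirely by defining $W_z$ as uniform off the heavy atoms $H_z$. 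This makes your writeup slightly more self-contained (no appeal to convexity of $\D^{\F}$ or the flat-source decomposition), at no cost in parameters, and the verification that $\tilde X_z$ has min-entropy $\geq k$ via the case split on $H_z$ is correct. Both routes need $k\leq n-1$ for the same reason --- to guarantee room for the dummy distribution --- so the hypothesis is used identically.
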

\begin{rmrk}
	\Cref{thm:func_ext_are_average} also applies to extractors for the
	$\ell_p$ norms via \cref{thm:lp_is_bounded_q_moments}.
\end{rmrk}

The proof of \cref{thm:func_ext_are_average} follows the strategy
outlined by Vadhan \cite[Problem 6.8]{vadhan_pseudorandomness_2012}. We
first isolate the following key lemma which shows that any extractor
with error that gracefully decays with lower min-entropy is average-case
with minimal loss of parameters, as opposed to
\cref{thm:ext_are_average_case} which used a worst-case error bound when
the min-entropy is low.
\begin{lem}\label{lem:graceful_ext_is_average}
	Let $\Ext:\zo^n \times \zo^d \to \zo^m$ be a $(k, \eps)$ extractor
	(respectively strong extractor) for $\D$ such that for every $0\leq
	t\leq k$, $\Ext$ is also a $(k - t, 2^{t+1}\cdot \eps)$ extractor
	(respectively strong extractor) for $\D$. Then $\Ext$ is a $(k,
	3\eps)$ average-case extractor (respectively strong average-case
	extractor) for $\D$.
\end{lem}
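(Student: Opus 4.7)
The plan is to use a bucketing argument: I would partition the conditioning values $z$ according to the min-entropy level of $X|_{Z=z}$, apply the appropriate extractor hypothesis on each bucket, and balance the contributions using the conditional min-entropy budget.

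First, I fix a joint distribution $(Z, X)$ with $\condminent(X|Z) \geq k$, which by definition gives $\E[z \sim Z]{2^{-\minent(X|_{Z=z})}} \leq 2^{-k}$. I would partition into $L_0 = \set{z \given \minent(X|_{Z=z}) \geq k}$ (the ``good'' bucket) and, for each integer $t \geq 1$, $L_t = \set{z \given \minent(X|_{Z=z}) \in [k - t, k - t + 1)}$ (the ``bad level-$t$'' bucket). On $L_0$, the base $(k, \eps)$ hypothesis gives $\D\diver{\Ext(X|_{Z=z}, U_d)}{U_m} \leq \eps$; on $L_t$ for $t \geq 1$, since $\minent(X|_{Z=z}) \geq k - t$ there, the graceful $(k-t, 2^{t+1}\eps)$ hypothesis gives $\D\diver{\Ext(X|_{Z=z}, U_d)}{U_m} \leq 2^{t+1}\eps$.

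The conditional min-entropy bound then controls the bad bucket probabilities: since $z \in L_t$ (for $t \geq 1$) implies $2^{-\minent(X|_{Z=z})} > 2^{t-k-1}$, summing over $t \geq 1$ gives $\sum_{t \geq 1} \PR{Z \in L_t} \cdot 2^{t-k-1} \leq 2^{-k}$, i.e., $\sum_{t \geq 1} \PR{Z \in L_t} \cdot 2^t \leq 2$. Combining,
\[
	\E[z \sim Z]{\D\diver{\Ext(X|_{Z=z}, U_d)}{U_m}}
	\leq \eps + \sum_{t \geq 1} \PR{Z \in L_t} \cdot 2^{t+1}\eps
	\leq 5\eps.
\]
For the strong-extractor variant, the same argument applies verbatim after replacing each $\D\diver{\Ext(X|_{Z=z}, U_d)}{U_m}$ by $\E[s \sim U_d]{\D\diver{\Ext(X|_{Z=z}, s)}{U_m}}$ and invoking the strong versions of the hypotheses, since the bucket-wise bounds and the Markov step are unaffected.

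The hard part will be recovering the stated constant $3$ exactly: the natural bucketing above readily yields an $O(\eps)$ bound (explicitly $\leq 5\eps$), but squeezing down to $3\eps$ requires sharper accounting, for example tracking how the good bucket $L_0$ itself consumes part of the $2^{-k}$ conditional min-entropy budget (reducing what is available for the bad buckets), or alternatively a mixture-decomposition argument that writes each $X|_{Z=z}$ as a convex combination of a genuine min-entropy-$k$ distribution and a small residual to which the worst-case graceful bound is applied via convexity of $\D$ in its first argument.
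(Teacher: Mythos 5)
Your bucketing argument is sound as far as it goes, but it proves a $5\eps$ bound, not the stated $3\eps$, and the loss is not an artifact of sloppy accounting that can be recovered by ``tracking how $L_0$ consumes part of the budget'': it is inherent to discretizing the min-entropy levels into integer-width buckets. Within bucket $L_t$ you are forced to use the pessimistic extractor hypothesis $(k-t,\, 2^{t+1}\eps)$ (because $\minent(X|_{Z=z})$ could be as low as $k-t$), yet the budget constraint only charges you $2^{t-k-1}$ per unit of probability (because $\minent(X|_{Z=z})$ could be as high as $k-t+1$). The mismatch between these two endpoints is a factor of $2$, and it shows up as $5\eps = \eps + 4\eps$ rather than $3\eps = \eps + 2\eps$ in your final sum.

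The fix is simply not to bucket at all. The hypothesis is stated for \emph{every} $0\leq t\leq k$, real-valued, so for each $z$ you may take $t = k - \minent(X|_{Z=z})$ exactly (when this is nonnegative). This gives the pointwise bound, valid for all $z$,
\[
	\E[s\sim U_d]{\D\diver{\Ext(X|_{Z=z}, s)}{U_m}}
	\;\leq\;
	\begin{cases}\eps & \text{if }\minent(X|_{Z=z})\geq k\\
	2^{\,k-\minent(X|_{Z=z})+1}\cdot\eps & \text{otherwise}\end{cases}
	\;\leq\; \eps\cdot\of[\big]{1 + 2^{\,k-\minent(X|_{Z=z})+1}},
\]
where the last inequality holds in both cases. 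Taking $\E[z\sim Z]{\cdot}$ and using the \emph{exact} identity $\E[z\sim Z]{2^{-\minent(X|_{Z=z})}} = 2^{-\condminent(X|Z)} \leq 2^{-k}$ then gives $\eps\cdot\of[\big]{1 + 2^{k+1}\cdot 2^{-k}} = 3\eps$, with no discretization loss. In other words, the paper's proof replaces your Markov-plus-bucketing step with a single integration against the definition of conditional min-entropy; everything else in your proposal (including the strong variant) carries over unchanged.
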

\begin{proof}
	We prove this for strong extractors, the non-strong case is analogous.
	For every $(Z, X)$ with $X$ distributed on $\zo^n$ and
	$\condminent(X|Z) \geq k$, we have
	\begin{align*}
		\E[z\sim Z,s\sim U_d]{\D\diver{\Ext(X|_{Z=z}, s)}{U_m}}
		&=
		\E[z\sim Z]{\E[s\sim U_d]{\D\diver{\Ext(X|_{Z=z}, s)}{U_m}}}\\
		&\leq \E[z\sim Z]{\begin{cases}\eps&\text{if }\minent(X|_{Z=z})\geq k\\ 2^{k - \minent(X|_{Z=z})+1}\cdot \eps &\text{otherwise}\end{cases}}\\
		&\leq \eps \cdot \E[z\sim Z]{1 + 2^{k - \minent(X|_{Z=z})+1}}
		\leq 3\eps
	\end{align*}
	where the last inequality follows from the fact that $\E[z\sim
	Z]{2^{-\minent(X|_{Z=z})}} = 2^{-\condminent(X|Z)}$ by definition of
	conditional min-entropy.
\end{proof}

\begin{proof}[Proof of \cref{thm:func_ext_are_average}]
	By the previous lemma, it suffices to prove that for every $t\geq 0$,
	$\Ext$ is a $(k - t, (2^{t+1}-1)\cdot \eps)$ extractor (respectively
	strong extractor) for $\D^{\F}$.  Since $\D^{\F}$ is convex in its
	first argument by \cref{lem:test_function_gives_metric}, following
	Chor and Goldreich \cite{cho_gol_unbiased_1988} it is enough
	to consider only distributions with min-entropy $k - t$ that are
	supported on a set of at most $2^{n-1}$. Fix such a distribution $X$
	and a collection of test functions $f_1, \dotsc, f_D \in \F$ with $f_1
	= \cdots = f_D$ if $\Ext$ is not strong.  Then since $X$ is supported
	on a set of size at most $2^{n-1}$, the distribution $Y$ that is
	uniform over the complement of $\Supp(X)$ has min-entropy at least $n
	- 1 \geq k$, and furthermore the mixture $2^{-t}X + (1-2^{-t})Y$ has
	min-entropy at least $k$.  Hence, as $\Ext$ is a $(k,\eps)$ extractor
	(respectively strong extractor) for $\D^{\F}$,
	\begin{align*}
		\eps
			&\geq \E[i\sim U_{[D]}]{\D^{\set{f_i}}\diver{\Ext\of{2^{-t}X + (1-2^{-t})Y, i}}{U_m}}\\
				&= 2^{-t}\E[i\sim U_{[D]}]{\D^{\set{f_i}}\diver{\Ext\of{X,i}}{U_m}}
			+ (1-2^{-t})\E[i\sim U_{[D]}]{\D^{\set{f_i}}\diver{\Ext\of{Y,i}}{U_m}}\\
				&= 2^{-t}\E[i\sim U_{[D]}]{\D^{\set{f_i}}\diver{\Ext\of{X,i}}{U_m}}
			- (1-2^{-t})\E[i\sim U_{[D]}]{\D^{\set{c_i-f_i}}\diver{\Ext\of{Y,i}}{U_m}}\\
				&\geq 2^{-t}\E[i\sim U_{[D]}]{\D^{\set{f_i}}\diver{\Ext\of{X,i}}{U_m}}
				- (1 - 2^{-t})\cdot  \eps\tag{since $\minent(Y)\geq k$}\\
		\of{2^{t+1} - 1}\cdot \eps &\geq\E[i\sim U_{[D]}]{\D^{\set{f_i}}\diver{\Ext\of{X,i}}{U_m}}
	\end{align*}
	where $c_i\in \R$ is such that $c_i - f_i \in \F$ as guaranteed to
	exist by the symmetry of $\F$.
\end{proof}

\section{Subgaussian distance and connections to other notions}

Now that we've introduced the general machinery we need, we can go back
to our motivation of subgaussian samplers. We will need some standard
facts about subgaussian and subexponential random variables, we
recommend the book of Vershynin \cite{vershynin_highdimensional_2018}
for an introduction.

\begin{defn}\label{defn:subgaussian}
	A real-valued mean-zero random variable $Z$ is said to be \emph{subgaussian
	with parameter $\sigma$} if for every $t \in \R$ the moment generating
	function of $Z$ is bounded as
	\[
		\ln\E{e^{t Z}} \leq \frac{t^2\sigma^2}{2}.
	\]
	If this is only holds for $\abs{t} \leq b$ then $Z$ is said to be
	\emph{$(\sigma, b)$-subgamma}, and if $Z$ is $(\sigma,
	1/\sigma)$-subgamma then $Z$ is said to be \emph{subexponential with
	parameter $\sigma$}.
\end{defn}
\begin{rmrk}
	There are many definitions of subgaussian (and especially
	subexponential) random variables in the literature, but they are all
	equivalent up to constant factors in $\sigma$ and only affect
	constants already hidden in big-$O$'s.
\end{rmrk}
\begin{lem}\label{lem:subgaussian_properties}
	Let $Z$ be a real-valued random variable. Then
	\begin{enumerate}
		\item (Hoeffding's lemma) If $Z$ is bounded in the interval
			$[0, 1]$, then $Z - \E{Z}$ is subgaussian with parameter
			$1/2$.
		\item If $Z$ is mean-zero, then $Z$ is subgaussian (respectively
			subexponential) with parameter $\sigma$ if and only if $cZ$ is
			subgaussian (respectively subexponential) with parameter
			$\abs{c}\sigma$ for every $c\neq 0$.
	\end{enumerate}
	Furthermore, if $Z$ is mean-zero and subgaussian with parameter $\sigma$,
	then
	\begin{enumerate}
		\item For all $t > 0$, $\max\of[\big]{\PR{Z > t}, \PR{Z < -t}}
			\leq e^{-t^2/2\sigma^2}$.
		\item $\norm{Z}_p \defeq \E{\abs{Z}^p}^{1/p} \leq
			2\sigma \sqrt{p}$ for all $p \geq 1$.
		\item $Z$ is subexponential with parameter $\sigma$.
	\end{enumerate}
\end{lem}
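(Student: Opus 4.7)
The plan is to handle each claim separately since they are all standard facts about subgaussian random variables. For Hoeffding's lemma, I would use the textbook convexity argument: for $Z\in [0,1]$ with mean $\mu$, write $Z = Z\cdot 1 + (1-Z)\cdot 0$ and use convexity of $x\mapsto e^{tx}$ to bound $\E{e^{t(Z-\mu)}}$ by the exponential of some function $\varphi(t)$ of $t$ alone, then check that $\varphi(0) = \varphi'(0) = 0$ and $\varphi''(t) \leq 1/4$ to conclude $\varphi(t) \leq t^2/8 = t^2(1/2)^2/2$ via Taylor's theorem. Alternatively (and maybe cleaner), apply the symmetrization trick together with the inequality $\cosh(x) \leq e^{x^2/2}$.

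For the scaling property, I would just substitute: if $Z$ is mean-zero subgaussian with parameter $\sigma$, then for any $c\neq 0$ and any $t\in\R$,
\[
\ln\E{e^{t(cZ)}} = \ln\E{e^{(tc)Z}} \leq \frac{(tc)^2 \sigma^2}{2} = \frac{t^2(|c|\sigma)^2}{2},
\]
and the same computation restricted to $|tc|\leq b$ gives the subgamma/subexponential versions (noting that $(|c|\sigma)\cdot 1/(|c|\sigma) = 1$ preserves the subexponential parameter relationship). The reverse direction follows by replacing $c$ with $1/c$.

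For the tail bound, I would apply the Chernoff method: for $t > 0$ and any $\lambda > 0$,
\[
\PR{Z > t} \leq e^{-\lambda t}\E{e^{\lambda Z}} \leq \exp\of{-\lambda t + \lambda^2\sigma^2/2},
\]
then optimize with $\lambda = t/\sigma^2$ to obtain $e^{-t^2/2\sigma^2}$. The lower tail $\PR{Z < -t}$ follows by applying the same argument to $-Z$, which is also mean-zero subgaussian with parameter $\sigma$ by the scaling property with $c = -1$.

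For the $L_p$ moment bound, the standard approach is to integrate the tail bound via the layer-cake formula,
\[
\E{|Z|^p} = \int_0^{\infty} p t^{p-1} \PR{|Z| > t}\,dt \leq 2p\int_0^{\infty} t^{p-1} e^{-t^2/2\sigma^2}\,dt,
\]
then substitute $u = t^2/(2\sigma^2)$ to recognize a Gamma integral, $\E{|Z|^p} \leq p(2\sigma^2)^{p/2}\Gamma(p/2)$, and finally use $\Gamma(p/2) \leq (p/2)^{p/2}$ (valid for $p \geq 2$; the regime $p\in[1,2)$ can be handled by Jensen's inequality from $p=2$) to extract the bound $\|Z\|_p \leq 2\sigma\sqrt{p}$. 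The constant $2$ is loose but within what the statement asks. Finally, the subexponential claim is immediate from the definition, since the subgaussian bound $\ln\E{e^{tZ}} \leq t^2\sigma^2/2$ holds for all $t\in \R$, and in particular for $|t| \leq 1/\sigma$, matching the $(\sigma, 1/\sigma)$-subgamma condition. None of these steps is a real obstacle; the only place one has to be slightly careful is getting the constant in the $L_p$ bound right, which I would address by using $\Gamma(p/2+1) \leq (p/2+1)^{p/2+1} e^{-p/2}$ or similar Stirling-type estimates and absorbing the slack into the factor of $2$.
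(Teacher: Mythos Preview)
Your proof is correct and follows the standard textbook arguments. The paper itself does not prove this lemma: it states these as standard facts and refers the reader to Vershynin's book \cite{vershynin_highdimensional_2018}, so there is no paper proof to compare against. Your approach---Hoeffding via convexity/Taylor, scaling by substitution, Chernoff for the tails, layer-cake plus a Gamma integral for the moments, and the trivial implication for subexponential---is exactly the route taken in that reference.

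One small caution you already flagged: the bound $\Gamma(p/2)\leq (p/2)^{p/2}$ fails for $p$ near $1$ (e.g.\ $\Gamma(1/2)=\sqrt{\pi}>1/\sqrt{2}$), and your Jensen fallback from $p=2$ gives $\|Z\|_p\leq \|Z\|_2\leq 2\sigma\sqrt{2}$, which is larger than the target $2\sigma\sqrt{p}$ when $p<2$. The clean fix is to use $p\Gamma(p/2)=2\Gamma(p/2+1)$ and then a uniform Stirling-type estimate such as $\Gamma(x+1)\leq x^x$ for $x\geq 1$ together with a direct check on $[1,2)$, or simply note that Vershynin's version of the constant absorbs this; since the paper treats the lemma as a black-box citation, the precise bookkeeping here is not load-bearing for anything downstream.
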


We are now in a position to formally define the \emph{subgaussian
distance}.
\begin{defn}\label{defn:subgaussian_distance}
	For every finite set $\X$, we define the set $\cG_{\X}$ of
	\emph{subgaussian test functions on $\X$} (respectively the set
	$\cE_{\X}$ of \emph{subexponential test functions on $\X$}) to be the
	set of functions $f:\X\to \R$ such that the random variable
	$f(U_{\X})$ is mean-zero and subgaussian (respectively subexponential)
	with parameter $1/2$.
	Then $\cG_{\X}$ and $\cE_{\X}$ are symmetric and distinguishing, so by
	Lemma~\ref{lem:test_function_gives_metric} the respective distances
	induced by $\cG_{\X}$ and $\cE_{\X}$ are jointly convex metrics called the
	\emph{subgaussian distance} and \emph{subexponential distance}
	respectively and are denoted as $d_{\cG}(P, Q)$ and $d_{\cE}(P, Q)$.
\end{defn}
\begin{rmrk}\label{rmrk:whyvarhalf}
	We choose subgaussian parameter $1/2$ in
	\cref{defn:subgaussian_distance} as by Hoeffding's lemma, all
	functions $f:\zo^m \to [0,1]$ have that $f(U_m) - \E{f(U_m)}$ is
	subgaussian with parameter $1/2$, so this choice preserves the same
	``scale'' as total variation distance. However, the choice of
	parameter is essentially irrelevant by linearity, as different choices
	of parameter simply scale the metric $d_{\cG}$.
\end{rmrk}

Note that absolute averaging samplers for $\cG_{\zo^m}$ from
\cref{defn:test_function_sampler} are exactly subgaussian samplers as
defined in the introduction.  Thus, by
\cref{rmrk:symmetric_test_function_are_abs,thm:func_ext_are_samplers}, to
construct subgaussian samplers it is enough to construct extractors for
the subgaussian distance $d_{\cG}$.

\subsection{Composition}\label{sec:composition}
Unfortunately, the subgaussian distance has a major disadvantage
compared to total variation distance that complicates extractor
construction: it does not satisfy the data-processing inequality, that is, there are
probability distributions $P$ and $Q$ over a set $A$ and a function $f:
A\to B$ such that
\[ d_{\cG}(f(P), f(Q)) \not\leq d_{\cG}(P, Q). \]
This happens because subgaussian distance is defined by functions which
are required to be subgaussian only with respect to the \emph{uniform
distribution}. A simple explicit counterexample comes from taking
$f:\zo^{1} \to \zo^{m}$ defined by $x \mapsto (x, 0^{m-1})$ and taking
$P$ to be the point mass on $0$ and $Q$ the point mass on $1$.  Their
subgaussian distance in $\zo^1$ is obviously $O(1)$, but the subgaussian
distance of $f(P)$ and $f(Q)$ in $\zo^m$ is $\Theta(\sqrt{m})$.

The reason this matters because a standard operation (cf.~Nisan and
Zuckerman \cite{nis_zuc_randomness_1996}; Goldreich and Wigderson
\cite{gol_wig_tiny_1997}; Reingold, Vadhan, and Wigderson
\cite{rei_vad_wig_entropy_conference}) in the construction of samplers
and extractors for bounded functions is to do the following: given
extractors
\begin{align*}
	\Ext_{out}&:\zo^n \times \zo^d \to \zo^m\\
	\Ext_{in}&:\zo^{n'} \times \zo^{d'} \to \zo^d,
\end{align*}
define $\Ext:\zo^{n+n'} \times \zo^{d'} \to \zo^m$ by
\[
	\Ext\of[\big]{(x,y),s} = \Ext_{out}\of[\big]{x, \Ext_{in}\of{y,s}}.
\]
The reason this works for total variation distance is exactly the
data-processing inequality: if $Y$ has enough min-entropy given $X$, then
$\Ext_{in}(Y, U_{d'})$ will be close in total variation distance to
$U_d$, and by the data-processing inequality for total variation
distance this closeness is not lost under the application of
$\Ext_{out}$. The assumption that $Y$ has min-entropy given $X$ means
that $(X, Y)$ is a so-called \emph{block-source}, and is implied by $(X,
Y)$ having enough min-entropy as a joint distribution. From the sampler
perspective, this construction uses the inner sampler $\Ext_{in}$ to
subsample the outer sampler. On the other hand, for subgaussian
distance, the distribution $\Ext_{in}\of{Y,U_{d'}}$ can be $\eps$-close
to uniform but still have some element with excess probability mass
$\Omega(\eps/\sqrt{d})$, and this element (seed) when mapped by
$\Ext_{out}$ can retain\footnote{Given a subgaussian extractor $\Ext$
	with $d\geq \log(m/\eps)$, adding a single extra seed $*$ to $\Ext$
	such that $\Ext(x, *) = 0^m$ results in a subgaussian extractor with
error at most $2^{-d}\cdot \sqrt{2m} + \eps \leq 3\eps$ by convexity of
$d_{\cG}$ and the fact that $\norm{d_{\cG_{\zo^m}}}_{\infty} <
\sqrt{2m}$.} this excess mass in $\zo^m$, which results in subgaussian
distance $\Theta(\eps\sqrt{m/d}) \gg \eps$. Similarly, from the sampler
perspective, even when the outer sampler $\Ext_{out}$ is a good
subgaussian sampler for $\zo^m$, there is no reason that a good
subgaussian sampler $\Ext_{in}$ for $\zo^d$ the seeds of $\Ext_{out}$
will preserve the larger sampler property when $m\gg d$.

Thus, since this composition operation is needed to construct high-min
entropy extractors with the desired seed length even for total variation
distance, to construct such extractors for subgaussian distance we need
to bypass this barrier. The natural approach is to construct extractors
for a better-behaved weak divergence that bounds the subgaussian
distance.

\begin{rmrk}
	Similar reasoning shows that if $\Ext$ is a strong $(k, \eps)$
	subgaussian extractor, then it is not necessarily the case that the
	function $(x, s) \mapsto (s, \Ext(x, s))$ that prepends the seed to
	the output is a (non-strong) $(k, \eps)$ subgaussian extractor (in
	contrast to extractors for total variation distance), though the
	converse does hold.
\end{rmrk}

\subsection{Connections to other weak divergences}

Therefore, to aid in extractor construction, we show how $d_{\cG}$
relates to other statistical weak divergences.

Most basically, the subgaussian distance over $\zo^m$ differs from
total variation distance up to a factor of $O(\sqrt{m})$.
\begin{lem}\label{lem:tv_vs_dg}
	Let $P$ and $Q$ be distributions on $\zo^m$. Then
	\[ \tv(P, Q) \leq d_{\cG}(P, Q) \leq \sqrt{2\ln 2 \cdot m} \cdot \tv(P, Q) \]
\end{lem}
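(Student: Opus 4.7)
The plan is to prove each inequality separately, with the lower bound being a direct consequence of Hoeffding's lemma and the upper bound requiring a pointwise bound on subgaussian test functions obtained from their tail behavior and the finiteness of $\zo^m$.

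For the lower bound $\tv(P,Q) \leq d_{\cG}(P,Q)$, I would start from the representation $\tv(P,Q) = \sup_{g \in [0,1]^{\zo^m}}\of{\E{g(P)} - \E{g(Q)}}$ from Example~\ref{exmp:tv}. Given any such $g$, Hoeffding's lemma (\cref{lem:subgaussian_properties}) guarantees that $g(U_m) - \E{g(U_m)}$ is mean-zero and subgaussian with parameter $1/2$, so the shifted function $f \defeq g - \E{g(U_m)}$ lies in $\cG_{\zo^m}$. Since adding a constant does not change $\E{g(P)} - \E{g(Q)}$, this value is witnessed by a test function in $\cG_{\zo^m}$ and therefore upper bounded by $d_{\cG}(P,Q)$.

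For the upper bound, I would first establish that every $f \in \cG_{\zo^m}$ is pointwise bounded by $M \defeq \sqrt{(\ln 2 \cdot m)/2}$. Indeed, by the subgaussian tail bound (\cref{lem:subgaussian_properties}) with parameter $\sigma = 1/2$, for every $t > 0$,
\[
    \max\of{\PR{f(U_m) > t}, \PR{f(U_m) < -t}} \leq e^{-2t^2}.
\]
For any $t > M$ one has $e^{-2t^2} < 2^{-m}$, and since every nonempty event under $U_m$ has probability at least $2^{-m}$, the above probabilities must vanish. Letting $t \downarrow M$ shows $|f(x)| \leq M$ for all $x \in \zo^m$. Now the function $h(x) \defeq (f(x)/M + 1)/2$ takes values in $[0,1]$, so by the $[0,1]$-characterization of $\tv$,
\[
    \E{f(P)} - \E{f(Q)} = 2M\of{\E{h(P)} - \E{h(Q)}} \leq 2M \cdot \tv(P,Q) = \sqrt{2\ln 2 \cdot m} \cdot \tv(P,Q).
\]
Taking the supremum over $f \in \cG_{\zo^m}$ yields the claim.

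Neither direction is hard: the main thing to be careful about is matching the constants in the tail bound to the logarithm base. Since $\cG_{\zo^m}$ is defined with subgaussian parameter $1/2$ in base-$e$ moment generating functions while the probability floor $2^{-m}$ is naturally stated in base $2$, the factor $\ln 2$ in the final bound arises exactly from this conversion, and matching it precisely is the one place to be attentive.
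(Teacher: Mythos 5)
Your proof is correct and follows essentially the same route as the paper's: Hoeffding's lemma (via \cref{rmrk:whyvarhalf}) gives the lower bound, and for the upper bound the paper likewise uses the tail bound from \cref{lem:subgaussian_properties} to deduce that subgaussian test functions are pointwise bounded by $\sqrt{\ln 2\cdot m/2}$, then rescales to a $[0,1]$-valued function (indeed your $h = (f/M+1)/2$ is exactly the paper's $1/2 + f/\sqrt{2\ln 2\cdot m}$). The only difference is that you spell out the "probability floor $2^{-m}$" step in detail, which the paper leaves implicit.
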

\begin{proof}
	That $\tv \leq d_{\cG}$ is immediate from Hoeffding's lemma and the
	discussion in \cref{rmrk:whyvarhalf}. The reverse bound holds since
	any subgaussian function takes values at most $\sqrt{\ln2/2 \cdot m}$
	away from the mean by the tail bounds
	from part 3 of \cref{lem:subgaussian_properties}, and so any subgaussian test
	function $f$ has the property that $1/2 + f/\sqrt{2 \ln 2\cdot m}$
	is $[0,1]$-valued and thus lower bounds the total variation distance.
\end{proof}

While this allows constructing subgaussian extractors and samplers from
total variation extractors, as discussed in the introduction the fact
that the upper bound depends on $m$ leads to suboptimal bounds. By starting
with a stronger measure of error, we pay a much smaller penalty.
\begin{lem}\label{lem:lp_vs_dg}
	Let $P$ and $Q$ be distributions on $\zo^m$. Then for every
	$\alpha > 0$
	\begin{align*}
		2\tv(P, Q) = d_{\ell_1}(P, Q) &\leq 2^{m\alpha/(1+\alpha)}
		\cdot d_{\ell_{1+\alpha}}(P, Q)\\
		d_{\cG}(P, Q) &\leq 2^{m\alpha/(1+\alpha)} \sqrt{1+\frac1\alpha}
		\cdot d_{\ell_{1+\alpha}}(P, Q)
	\end{align*}
\end{lem}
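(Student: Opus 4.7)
The plan is to view both inequalities as instances of Hölder-type comparisons: the first reduces the $\ell_1$ norm to a higher $\ell_p$ norm via a cardinality factor, and the second reduces the subgaussian distance to the same $\ell_p$ distance by combining \cref{thm:lp_is_bounded_q_moments} with the moment bound on subgaussian random variables.

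For the first inequality, I would apply the standard Hölder inequality $\|v\|_1 \leq N^{1-1/p}\,\|v\|_p$ (which follows from $\sum_x \abs{v_x} = \inner{\abs{v}, \mathbf{1}} \leq \|v\|_p\cdot \|\mathbf{1}\|_{p/(p-1)}$) to the vector $v_x = P_x - Q_x$ with $N = 2^m$ and $p = 1+\alpha$. Then $1 - 1/p = \alpha/(1+\alpha)$, so the cardinality factor is exactly $2^{m\alpha/(1+\alpha)}$, giving $d_{\ell_1}(P,Q) \leq 2^{m\alpha/(1+\alpha)}\cdot d_{\ell_{1+\alpha}}(P,Q)$.

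For the second inequality, I would set $q = (1+\alpha)/\alpha$, so $q$ is the Hölder conjugate of $1+\alpha$ and $1/q = \alpha/(1+\alpha)$. By \cref{thm:lp_is_bounded_q_moments},
\[
    d_{\mathcal{M}_q}(P,Q) = 2^{m/q}\cdot d_{\ell_{1+\alpha}}(P,Q) = 2^{m\alpha/(1+\alpha)}\cdot d_{\ell_{1+\alpha}}(P,Q).
\]
The key observation is that every subgaussian test function $f \in \cG_{\zo^m}$ lies in $\sqrt{q}\cdot \mathcal{M}_q$: by part 4 of \cref{lem:subgaussian_properties} applied with subgaussian parameter $\sigma = 1/2$, we have $\|f(U_m)\|_q \leq 2\cdot(1/2)\cdot\sqrt{q} = \sqrt{q}$, so $f/\sqrt{q} \in \mathcal{M}_q$. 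Hence $d_{\cG}(P,Q) \leq \sqrt{q}\cdot d_{\mathcal{M}_q}(P,Q)$, and substituting $\sqrt{q} = \sqrt{1+1/\alpha}$ together with the expression for $d_{\mathcal{M}_q}$ yields the claimed bound.

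There is no real obstacle here: the proof is just a careful bookkeeping of dual-norm duality and the subgaussian moment bound. The only minor subtlety is making sure the choice of $q$ as the Hölder conjugate of $1+\alpha$ makes the exponents line up correctly, which it does since $1 - 1/(1+\alpha) = \alpha/(1+\alpha) = 1/q$.
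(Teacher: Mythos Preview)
Your proposal is correct and essentially matches the paper's proof. The only cosmetic difference is that for the first inequality you apply H\"older directly to the vector $P-Q$, whereas the paper handles both inequalities uniformly through \cref{thm:lp_is_bounded_q_moments} by observing that $[-1,1]$-valued test functions satisfy $\norm{f(U_m)}_q \leq 1$; these are the same computation viewed from the primal and dual sides, so there is no substantive distinction.
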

In particular, that there is only an additional $\sqrt{1+1/\alpha}$
factor when moving to subgaussian distance compared to total
variation, which in particular does not depend on $m$ and is constant
for constant $\alpha$.
\begin{proof}
	By \cref{thm:lp_is_bounded_q_moments}, for any function $f:\zo^m \to
	\R$ it holds that
	\[\D^{\set f}\diver PQ \leq \norm{f(U_m)}_{1+\frac1\alpha} \cdot
	d_{\mathcal{M}_{1+\frac1\alpha}}(P, Q) =
	\norm{f(U_m)}_{1+\frac1\alpha}\cdot 2^{m\alpha/(1+\alpha)}
	\cdot
	d_{\ell_{1+\alpha}}(P, Q).\]
	The result follows since $[-1,1]$-valued functions $f$ satisfy moment
	bounds $\norm{f(U_m)}_q \leq 1$ for all $q\geq 1$, and functions $f$
	which are subgaussian satisfy moment bounds $\norm{f(U_m)}_{q} \leq
	\sqrt{q}$ by \cref{lem:subgaussian_properties}.
\end{proof}

One downside of starting with bounds on $\ell_{1+\alpha}$ is that,
extending a well-known linear seed length linear bound for
$\ell_2$-extractors (e.g.~\cite[Problem
6.4]{vadhan_pseudorandomness_2012}), we show in \cref{cor:l1pa-lb} that
for every $1 > \alpha > 0$, there is a constant $c_\alpha > 0$ such any
$\ell_{1+\alpha}$ extractor with error smaller than $c_\alpha\cdot
2^{-m\alpha/(1 + \alpha)}$ requires seed length linear in $\alpha \cdot
\min(n-k, m)$, for $n - k$ the entropy deficiency and $m$ the output
length.  One might hope that sending $\alpha$ to $0$ would eliminate
this linear lower bound but still bound the subgaussian distance, but
phrased this way sending $\alpha$ to $0$ just results in a total
variation extractor.

However, with a shift in perspective essentially the same approach
works: by \cref{defn:renyi_divergence}, $d_{\ell_2}(P, U_m) \leq \eps
\cdot 2^{-m/2}$ implies $\D_2\diver P{U_m} \leq \eps^2/\ln 2$, and there
is an analogous linear seed length lower bound on constant error
$\D_{1+\alpha}$ extractors for every $\alpha > 0$.  In this case,
however, sending $\alpha$ to $0$ results in the \emph{KL divergence},
which does upper bound the subgaussian distance, and in fact with the
same parameters as for total variation distance.
\begin{lem}\label{lem:dg_vs_kl}
	Let $P$ be a distribution on $\zo^m$. Then
	\begin{align*}
		d_{\cG}(P, U_m)
		&\leq \sqrt{\frac{\ln2}{2}\cdot \KL\diver{P}{U_m}}\\
		d_{\cE}(P, U_m)
		&\leq
		\begin{cases}
			\sqrt{\frac{\ln2}{2}\cdot \KL\diver{P}{U_m}} &\text{if }\KL\diver{P}{U_m} \leq \frac{1}{2\ln 2}\\
			\frac{\ln2}{2}\cdot \KL\diver{P}{U_m} + \frac{1}{4} &\text{if }\KL\diver{P}{U_m} > \frac{1}{2\ln 2}
		\end{cases}\\
	\intertext{where these bounds are concave in $\KL\diver P{U_m}$.
	In the reverse direction, it holds that}
		\KL\diver P{U_m} &\leq m\cdot \tv(P, U_m) + h(\tv(P, U_m))
	\end{align*}
	where $h(x) = x\log(1/x) + (1-x)\log(1/(1-x))$ is the (concave) binary
	entropy function.
\end{lem}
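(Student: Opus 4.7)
The first two bounds are instances of the Cram\'er--Chernoff / Donsker--Varadhan change-of-measure inequality for KL divergence: for any test function $f:\zo^m \to \R$ with $\E{f(U_m)} = 0$ and any $\lambda$ in the domain of the MGF $\psi(\lambda) = \ln \E{e^{\lambda f(U_m)}}$,
\[
    \E{f(P)} \leq \frac{\ln 2 \cdot \KL\diver{P}{U_m} + \psi(\lambda)}{\lambda},
\]
the factor $\ln 2$ arising because our KL is measured in bits. The plan is to apply this with $f$ ranging over $\cG_{\zo^m}$ or $\cE_{\zo^m}$, optimize $\lambda$, and take the supremum.

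For $f\in \cG_{\zo^m}$, Definition~\ref{defn:subgaussian} gives $\psi(\lambda) \leq \lambda^2/8$ for every $\lambda\in\R$, and the unconstrained optimum $\lambda = \sqrt{8\ln 2\cdot \KL\diver{P}{U_m}}$ yields exactly $\sqrt{\frac{\ln 2}{2}\cdot \KL\diver{P}{U_m}}$. For $f\in \cE_{\zo^m}$ the same MGF bound holds only on $\abs{\lambda} \leq 2$. When $\KL\diver{P}{U_m} \leq 1/(2\ln 2)$ the subgaussian optimum lies inside this window and the square-root bound still applies; otherwise the constrained optimum sits at the boundary $\lambda = 2$, yielding the linear bound $\frac{\ln 2}{2}\cdot \KL\diver{P}{U_m} + \frac{1}{4}$. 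Crucially, the transition $\KL\diver{P}{U_m} = 1/(2\ln 2)$ is precisely where the unconstrained subgaussian optimum equals $2$, so the two pieces match not only in value (both $1/2$) but also in derivative (both $(\ln 2)/2$); this makes the combined bound concave essentially for free.

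For the reverse direction, the plan is to invoke a maximal coupling $(X, Y)$ with $X\sim P$, $Y\sim U_m$, and $\PR{X\neq Y} = \tv(P, U_m) =: t$, and combine it with a Fano-style chain rule argument. Setting $E = \mathbf{1}[X\neq Y]$,
\[
    \Ent(Y\mid X) \leq \Ent(Y, E\mid X) = \Ent(E\mid X) + \Ent(Y\mid X, E) \leq h(t) + t\cdot m,
\]
since $\Ent(E\mid X)\leq h(t)$, $\Ent(Y\mid X, E = 0) = 0$ (because $X = Y$ on that event), and $\Ent(Y\mid X, E = 1) \leq m$. Then the chain rule for entropy gives $\Ent(X) = \Ent(Y) + \Ent(X\mid Y) - \Ent(Y\mid X) \geq m - \Ent(Y\mid X)$, and substituting into $\KL\diver{P}{U_m} = m - \Ent(P) = m - \Ent(X)$ yields the claim.

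I expect the main obstacle to be pinning down the constants and the concavity claim in the subexponential case. The key insight that disarms it is the observation above: the boundary choice $\lambda = 2$ is not a blunt fallback but is exactly the subgaussian optimum at the transition point, so the resulting linear bound automatically coincides with the tangent line to the square-root bound rather than being a strictly weaker affine upper bound. Everything else is routine computation.
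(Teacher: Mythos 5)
Your proof of the two upper bounds follows the paper's approach exactly: apply the Donsker--Varadhan change-of-measure inequality with $g = \lambda f$, bound the MGF by the subgaussian/subexponential hypothesis, and optimize $\lambda$ (with $\lambda = \min(\sqrt{8\ln 2\cdot \KL\diver{P}{U_m}}, 2)$ in the subexponential case). Your observation that the two pieces meet with matching value and derivative at $\KL\diver{P}{U_m} = 1/(2\ln 2)$, so that concavity is automatic, is precisely how the paper justifies the concavity claim (``continuous and nonincreasing derivative''). The only genuine divergence is in the reverse inequality: the paper cites the Fano-type entropy-difference bound $\abs{H(P) - H(Q)} \leq \log(S-1)\cdot \tv(P,Q) + h(\tv(P,Q))$ from Goldreich and Vadhan, whereas you re-derive it from scratch via a maximal coupling of $P$ with $U_m$ and the chain rule $\Ent(X) = \Ent(Y) + \Ent(X\mid Y) - \Ent(Y\mid X) \geq m - \Ent(Y\mid X)$, then bound $\Ent(Y\mid X)$ by introducing the error indicator $E$. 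This is correct and self-contained; it yields the coefficient $m$ rather than the marginally sharper $\log(2^m - 1)$ of the cited fact, but that matches what the lemma actually asserts, so nothing is lost.
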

\begin{proof}
	The upper bound on subgaussian distance follows from a general form of
	Pinsker's inequality as in \cite[Lemma
	4.18]{bou_lug_mas_concentration_2013}, but for the extension to
	subexponential functions we reproduce its proof here, based on the
	Donsker--Varadhan ``variational'' formulation of KL divergence
	\cite{don_var_asymptotic_1976} (cf.~\cite[Corollary
	4.15]{bou_lug_mas_concentration_2013})
	\[
		\KL\diver{P}{U_m} = \frac{1}{\ln 2} \cdot \sup_{g:\zo^m\to \R}\of{\E{g(P)} - \ln\E{e^{g(U_m)}}}.
	\]
	Now if $f:\zo^m \to \R$ satisfies $\E{f(U_m)} = 0$, then by letting
	$g(x) = t\cdot f(x)$, this implies
	\[
		\E{f(P)} - \E{f(U_m)}
		=
		\frac{1}{t} \cdot \E{g(P)}
		\leq
		\frac{\ln 2\cdot \KL\diver{P}{U_m} + \ln\E{e^{t\cdot f(U_m)}}}{t}
	\]
	for all $t > 0$. Thus, when $\ln \E{e^{t\cdot f(U_m)}} \leq t^2/8$, we
	have $\E{f(P)} - \E{f(U_m)} \leq \ln2\cdot\KL\diver{P}{U_m}/t + t/8$.

	Then since subgaussian random variables satisfy such a bound for all
	$t$, we can make the optimal choice $t = \sqrt{8\ln2\cdot
	\KL\diver{P}{U_m}}$ to get the claimed bound on $d_{\cG}$.  For
	subexponential random variables, which satisfy such a bound only for
	$\abs t \leq 2$, we choose $t = \min(\sqrt{8\ln 2\cdot
	\KL\diver{P}{U_m}}, 2)$, which gives
	\[
		d_{\cE}(P, U_m)
		\leq
		\begin{cases}
			\sqrt{\frac{\ln2}{2}\cdot \KL\diver{P}{U_m}} &\text{if }\KL\diver{P}{U_m} \leq \frac{1}{2\ln 2}\\
			\frac{\ln2}{2}\cdot \KL\diver{P}{U_m} + \frac{1}{4} &\text{if }\KL\diver{P}{U_m} > \frac{1}{2\ln 2}
		\end{cases}
	\]
	as desired. The concavity of this bound follows by noting that it has
	a continuous and nonincreasing derivative.

	For the reverse inequality, we use a bound on the difference in
	entropy between distributions $P$ and $Q$ on a set of size $S$ which
	states
	\[
		\abs{H(P) - H(Q)} \leq \lg\of{S-1} \cdot \tv(P, Q) + h(\tv(P, Q)).
	\]
	This inequality is a simple consequence of Fano's inequality as noted
	by Goldreich and Vadhan \cite[Fact B.1]{gol_vad_comparing_1999}, and
	implies the desired result by taking $Q = U_m$ as $\KL\diver{P}{U_m} =
	H(U_m) - H(P)$ and $\abs{\zo^m} = 2^m$.
\end{proof}
\begin{rmrk}
	There are sharper upper bounds on the KL divergence than given
	in \cref{lem:dg_vs_kl}, such as the bound of Audenaert and Eisert
	\cite[Theorem 6]{aud_eis_continuity_2005}, but the bound we use has
	the advantage of being defined for the entire range of the
	total variation distance and being everywhere concave.
\end{rmrk}

\section{Extractors for KL divergence}

By \cref{lem:dg_vs_kl}, the subgaussian distance can be bounded in terms
of the KL divergence to uniform, so by the following easy lemma to
construct subgaussian extractors it suffices to construct extractors for
KL divergence.
\begin{lem}\label{lem:divergence_bounds}
	Let $V_1$ and $V_2$ be weak divergences on the set $\zo^m$ and $f:\R
	\to \R$ be a function such that $V_1\diver P{U_M} \leq f\of{V_2\diver
	P{U_m}}$ for all distributions $P$ on $\zo^m$. Then if $f$ is
	increasing on $(0, \eps)$, every $(k,\eps)$ extractor $\Ext$ for $V_1$
	is also a $(k, f(\eps))$-extractor for $V_2$, and if $f$ is also
	concave, then if $\Ext$ is strong or average-case as a
	$V_1$-extractor, it has the same properties as a $(k, f(\eps))$
	extractor for $V_2$.
\end{lem}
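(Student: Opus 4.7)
The plan is to argue directly from the definitions by applying the hypothesized pointwise bound between $V_1$ and $V_2$ to the output distribution of $\Ext$, then chaining through monotonicity (and, for the strong and average-case versions, concavity) of $f$. I break the proof into three parts matching the three clauses of the lemma.

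For the basic (non-strong, non-average-case) statement, I fix an arbitrary source $X$ on $\zo^n$ with $\minent(X) \geq k$. The $V_1$-extractor hypothesis gives the bound $V_1\diver{\Ext(X,U_d)}{U_m} \leq \eps$, and specializing the pointwise relation in the hypothesis to the particular distribution $P = \Ext(X, U_d)$ reduces the claim to a one-line inequality. Using that $f$ is increasing on $(0, \eps)$ to propagate the bound through $f$, I then conclude $V_2\diver{\Ext(X,U_d)}{U_m} \leq f(\eps)$, which is exactly the $V_2$-extractor property with error $f(\eps)$.

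For the strong case, the $V_1$-extractor guarantee is instead an upper bound of $\eps$ on $\E_{s\sim U_d}\bigl[V_1\diver{\Ext(X,s)}{U_m}\bigr]$, so the proof must be carried out inside this seed expectation. The pointwise bound from the hypothesis applies for every fixed seed $s$, so I can first insert it under the expectation, then invoke Jensen's inequality to swap $f$ with the expectation over $s$ (this is precisely the step that requires $f$ to be concave), and finally use monotonicity of $f$ together with the $V_1$-guarantee to obtain the desired bound of $f(\eps)$. The average-case variants are handled by the same calculation with one additional outer expectation over the side information $z \sim Z$: concavity of $f$ lets Jensen's inequality be applied just once to the joint distribution of $(Z, U_d)$, collapsing everything to $f(\eps)$.

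I do not anticipate any real obstacle — the lemma is essentially a bookkeeping exercise unpacking the four extractor definitions in \cref{defn:divergence_extractor} and threading the pointwise bound through one or two layers of expectation. The only subtlety is being careful to separate what each conclusion requires of $f$: the basic case uses only that $f$ is increasing on $(0, \eps)$ (so that we can propagate a numeric bound through $f$ at a single distribution), while the strong and average-case extensions additionally require $f$ to be concave so that Jensen's inequality can be applied inside the seed and side-information expectations.
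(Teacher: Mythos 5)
Your central step does not go through. In the basic case you combine the guarantee $V_1\diver{\Ext(X,U_d)}{U_m}\leq \eps$ with the pointwise hypothesis $V_1\diver{P}{U_m}\leq f\of{V_2\diver{P}{U_m}}$ at $P=\Ext(X,U_d)$, and then ``propagate the bound through $f$'' to conclude $V_2\diver{\Ext(X,U_d)}{U_m}\leq f(\eps)$. But the hypothesis bounds $V_1$ \emph{by} a function of $V_2$: together with the extractor guarantee it only yields a lower bound on $f\of{V_2\diver{P}{U_m}}$, never an upper bound on $V_2\diver{P}{U_m}$, and monotonicity of $f$ cannot reverse the inequality. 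Indeed, the statement read literally is false: take $V_1=\tv$, $V_2 = 2^m\cdot \tv$, and $f$ the identity (increasing and concave). The hypothesis $V_1\leq f(V_2)$ holds, and a standard $(k,\eps)$ extractor is a $(k,\eps)$ $V_1$-extractor, but it is in general nowhere near a $(k,\eps)$ extractor for $V_2$. So any proof of the printed statement must break somewhere, and yours breaks exactly at this ``one-line inequality''; the strong and average-case parts inherit the same reversal underneath the expectations.

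What the paper actually needs --- and how the lemma is invoked in \cref{lem:tv_extractor_is_kl}, \cref{thm:subgaussian_sampler}, and \cref{cor:subgaussian_nonexplicit} --- is the statement with the roles of $V_1$ and $V_2$ exchanged: if $V_1\diver{P}{U_m}\leq f\of{V_2\diver{P}{U_m}}$ for all $P$, then a $(k,\eps)$ extractor for $V_2$ is a $(k, f(\eps))$ extractor for $V_1$ (equivalently, keep the printed conclusion but assume $V_2\diver{P}{U_m}\leq f\of{V_1\diver{P}{U_m}}$); the printed form contains a typo. Once the direction is fixed, your architecture is exactly the intended one and is complete: in the plain case $V_1\diver{\Ext(X,U_d)}{U_m}\leq f\of{V_2\diver{\Ext(X,U_d)}{U_m}}\leq f(\eps)$ by monotonicity, and in the strong and average-case variants one applies the pointwise bound per seed $s$ (and per $z\sim Z$), then Jensen's inequality using concavity, then monotonicity, just as you outline. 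A blind proof here had to either prove this corrected form or flag that the statement as written is unprovable; as it stands, your write-up certifies an implication that is false.
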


Importantly, the KL divergence does not have the flaws of subgaussian
distance discussed in \cref{sec:composition}. The classic
\emph{data-processing inequality} says that KL divergence is
non-increasing under postprocessing by (possibly randomized) functions,
and the \emph{chain rule} for KL divergence says that
\[
	\KL\diver{A, B}{X, Y} = \KL\diver AX + \E[a\sim A]{\KL\diver{B|_{A=a}}{Y|_{X=a}}}
\]
for all distributions $A$, $B$, $X$, and $Y$, so that in particular
\[\E[s\sim U_d]{ \KL\diver{\Ext(X, s)}{U_m}} = \KL\diver{U_d, \Ext(X,
U_d)}{U_d, U_m}\]
and prepending the seed of a strong KL-extractor
does in fact
give a non-strong KL-extractor:
\begin{lem}\label{lem:kl_strong_prepend}
	A function $\Ext:\zo^n \times \zo^d \to \zo^m$ is a $(k,\eps)$ strong
	KL-extractor (respectively strong average-case KL-extractor) if
	and only if the function $\Ext':\zo^n \times \zo^d \to \zo^{d + m}$
	defined by $\Ext'(x, s) = (s, \Ext(x, s))$ is a (non-strong) $(k,
	\eps)$ KL-extractor (respectively average-case KL-extractor).
\end{lem}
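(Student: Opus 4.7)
The plan is to reduce both directions of the equivalence to a single application of the chain rule for KL divergence, which the text has already recorded. Specifically, fix a distribution $X$ on $\zo^n$ with $\minent(X) \geq k$. Writing $U_{d+m} = (U_d, U_m)$ where the two coordinates are independent uniform, the joint distribution $\Ext'(X, U_d) = (U_d, \Ext(X, U_d))$ has first marginal exactly $U_d$, so the chain rule gives
\begin{align*}
\KL\diver{\Ext'(X, U_d)}{U_{d+m}}
&= \KL\diver{U_d}{U_d} + \E[s\sim U_d]{\KL\diver{\Ext(X, s)}{U_m}}\\
&= \E[s\sim U_d]{\KL\diver{\Ext(X, s)}{U_m}}.
\end{align*}
The left-hand side is the (non-strong) KL-extractor error of $\Ext'$ on input $X$, and the right-hand side is the strong KL-extractor error of $\Ext$ on input $X$, so the two quantities are identically equal for every $X$. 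In particular the bound by $\eps$ holds for one iff it holds for the other, proving the first direction of the equivalence.

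For the average-case version, fix jointly distributed $(Z, X)$ with $\condminent(X|Z) \geq k$. Applying the identity above to each conditional distribution $X|_{Z=z}$ and then taking expectation over $z\sim Z$ yields
\[
\E[z\sim Z]{\KL\diver{\Ext'(X|_{Z=z}, U_d)}{U_{d+m}}}
= \E[z\sim Z, s\sim U_d]{\KL\diver{\Ext(X|_{Z=z}, s)}{U_m}},
\]
which relates the average-case KL-extractor error of $\Ext'$ to the strong average-case KL-extractor error of $\Ext$ in precisely the same way, completing the equivalence.

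There is no real obstacle here: the only content is the chain rule, which is exactly the property of KL divergence that fails for the subgaussian distance (as discussed in \cref{sec:composition}) and which is the reason for pivoting to KL-extractors in the first place. The proof is essentially a one-line calculation; the point of stating it as a lemma is to record that KL-extractors share the convenient strong-vs-non-strong interchangeability enjoyed by total variation extractors, in contrast with the situation for $d_{\cG}$.
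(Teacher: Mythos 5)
Your proof is correct and matches the paper's intended argument: the lemma follows directly from the chain rule identity $\E[s\sim U_d]{\KL\diver{\Ext(X,s)}{U_m}} = \KL\diver{U_d, \Ext(X,U_d)}{U_d, U_m}$, which the paper displays immediately before the lemma statement as its (implicit) proof. Your write-up correctly observes that this is an equality for every source $X$ (and every conditional $X|_{Z=z}$), which gives both directions of the ``if and only if'' at once.
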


Furthermore, KL divergence satisfies
a type of triangle inequality when combined with higher R\'enyi
divergences:
\begin{lem}\label{lem:kl_triangle}
	Let $P$, $Q$, and $R$ be distributions over a finite set $\X$. Then
	for all $\alpha > 0$, it holds that
	\[
		\KL\diver{P}{R} \leq \of{1 + \frac1\alpha}\cdot \KL\diver{P}{Q} +
		\D_{1+\alpha}\diver{Q}{R}
	\]
\end{lem}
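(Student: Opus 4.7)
My plan is to prove this by a direct calculation that reduces, after some algebraic rearrangement, to a single application of Jensen's inequality applied to $\log$.

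First, I would split the KL divergence on the left-hand side via the telescoping identity
\[ \log\frac{P_x}{R_x} = \log\frac{P_x}{Q_x} + \log\frac{Q_x}{R_x}, \]
which gives $\KL\diver{P}{R} = \KL\diver{P}{Q} + \sum_x P_x \log(Q_x/R_x)$. Hence the claim is equivalent to
\[ \sum_x P_x \log\frac{Q_x}{R_x} \leq \tfrac{1}{\alpha}\,\KL\diver{P}{Q} + \D_{1+\alpha}\diver{Q}{R}. \]
Next I would multiply through by $\alpha$ and consolidate the logarithms on each side. Writing $\D_{1+\alpha}\diver{Q}{R} = \tfrac{1}{\alpha}\log\sum_x Q_x^{1+\alpha}/R_x^\alpha$ and moving the $\KL\diver{P}{Q}$ term to the left, the inequality becomes
\[ \sum_x P_x \log\frac{Q_x^{1+\alpha}}{P_x\,R_x^\alpha} \leq \log\sum_x \frac{Q_x^{1+\alpha}}{R_x^\alpha}. \]

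Finally, this last inequality is exactly Jensen's inequality for the concave function $\log$ with respect to the probability distribution $P$: the left-hand side is $\E_{x\sim P}\!\left[\log\!\left(\tfrac{1}{P_x}\cdot \tfrac{Q_x^{1+\alpha}}{R_x^\alpha}\right)\right]$, which is bounded above by $\log \E_{x\sim P}\!\left[\tfrac{1}{P_x}\cdot \tfrac{Q_x^{1+\alpha}}{R_x^\alpha}\right] = \log\sum_x Q_x^{1+\alpha}/R_x^\alpha$. Reversing the algebraic manipulations recovers the stated bound.

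There isn't really a major obstacle here; the only subtle point is recognizing the correct regrouping of terms so that the target inequality collapses into a single Jensen application, and one should take a little care with boundary cases where some $R_x$ or $Q_x$ vanishes (using the standard conventions that make KL and R\'enyi divergences equal to $+\infty$ in those cases, the inequality then holds trivially, so one can assume throughout that the supports are nested appropriately).
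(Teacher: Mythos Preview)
Your proof is correct. The algebra is right, and the final inequality is precisely Jensen for the concave function $\log$ against the measure $P$; the only minor point you elide is that the expectation $\E_{x\sim P}[Q_x^{1+\alpha}/(P_x R_x^\alpha)]$ sums only over $\Supp(P)$, but since all summands are nonnegative this is bounded by the full sum $\sum_x Q_x^{1+\alpha}/R_x^\alpha$, and your handling of the degenerate-support cases is fine.

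Your route differs in presentation from the paper's. The paper does not do the computation directly but instead invokes a variational characterization of R\'enyi divergence due to van~Erven--Harremo\"es and Shayevitz,
\[
(1-\beta)\,\D_\beta\diver{X}{Y} = \inf_Z\bigl\{\beta\,\KL\diver{Z}{X} + (1-\beta)\,\KL\diver{Z}{Y}\bigr\},
\]
and then simply plugs in $\beta = 1+\alpha$, $X=Q$, $Y=R$, $Z=P$ to upper-bound the infimum. Your argument is essentially an in-line proof of the one direction of that variational formula that is needed here (indeed, the cited formula is itself proved by the same Jensen step). So the two arguments are morally the same; the paper's version is shorter by outsourcing to the literature, while yours is self-contained and makes clear that nothing beyond concavity of $\log$ is required.
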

\begin{proof}
	This follows from a characterization of R\'enyi divergence due to van
	Erven and Harremo\"es \cite[Lemma 6.6]{vanerven_when_2010} \cite[Theorem
	30]{van_har_renyi_2014} and Shayevitz \cite[Theorem
	1]{shayevitz_renyi_2011}, who prove that for for every positive real
	$\beta\neq 1$ and distributions $X$ and $Y$ that
	\[
		(1 - \beta) \D_\beta\diver{X}{Y} = \inf_Z\setd[\big]{\beta \KL\diver{Z}{X}
		+ (1 - \beta) \KL\diver{Z}{Y}}.
	\]
	In particular, choosing $\beta = 1 + \alpha$, $X = Q$, and $Y = R$ and
	upper bounding the infimum by the particular choice of $Z = P$ gives
	the claim.
\end{proof}

\subsection{Composition}
These properties imply that composition does work as we want (without
any loss depending on the output length $m$) assuming we have extractors
for KL and higher divergences.
\begin{thm}[Composition for high min-entropy R{\'e}nyi entropy extractors, cf.~\cite{gol_wig_tiny_1997}]\label{thm:divergence_composition}
	Suppose
	\begin{enumerate}
		\item $\Ext_{out}:\zo^n\times\zo^d \to\zo^m$ is an
	$(n-\log(1/\delta), \eps_{out})$ extractor for $\D_{1+\alpha}$ with $\alpha
	> 0$,
		\item $\Ext_{in}:\zo^{n'}\times \zo^{d'}\to \zo^{d}$ is an $(n' -
	\log(1/\delta), \eps_{in})$ average-case KL-extractor,
	\end{enumerate}
	and define $\Ext:\zo^{n + n'} \times \zo^{d'} \to \zo^m$ by
	$\Ext\of[\big]{(x, y), s} = \Ext_{out}(x, \Ext_{in}(y, s))$.
	Then $\Ext$ is an $\of{n + n' - \log(1/\delta),
	\eps_{out}+(1+1/\alpha)\cdot \eps_{in}}$ extractor for KL.
	Furthermore, if $\Ext_{in}$ is a strong average-case KL-extractor,
	then $\Ext$ is a strong KL-extractor, and if $\Ext_{out}$ is
	average-case then so is $\Ext$.
\end{thm}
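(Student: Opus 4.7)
The plan combines three tools: the KL triangle-type inequality of \cref{lem:kl_triangle} to split the target KL divergence into a ``weak'' $\KL$ leg and a ``strong'' $\D_{1+\alpha}$ leg; data processing and the chain rule for $\KL$ to propagate $\Ext_{in}$'s guarantee through the outer composition; and the standard block-source argument to turn the joint min-entropy hypothesis on $(X,Y)$ into separate min-entropy conditions on $X$ and on $Y$ conditioned on $X$.

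Given a joint distribution $(X, Y)$ on $\zo^{n+n'}$ with $\minent(X, Y) \geq (n + n') - \log(1/\delta)$, I would first establish the block-source structure: summing over $y \in \zo^{n'}$ gives $\Pr[X = x] \leq 2^{n'}\cdot 2^{-(n+n'-\log(1/\delta))}$, hence $\minent(X) \geq n - \log(1/\delta)$, and an analogous calculation applied to $\sum_x \max_y \Pr[X=x,Y=y]$ yields $\condminent(Y \mid X) \geq n' - \log(1/\delta)$. Next, introduce the intermediate distribution $Q = \Ext_{out}(X, U_d)$ with $U_d$ independent of $(X,Y)$, set $P = \Ext_{out}\of{X, \Ext_{in}(Y, U_{d'})}$, and apply \cref{lem:kl_triangle}:
\[
	\KL\diver{P}{U_m} \leq \of{1 + \tfrac{1}{\alpha}}\cdot \KL\diver{P}{Q} + \D_{1+\alpha}\diver{Q}{U_m}.
\]
The $\D_{1+\alpha}\diver{Q}{U_m}$ term is at most $\eps_{out}$ directly from the assumption that $\Ext_{out}$ is a $(n - \log(1/\delta), \eps_{out})$ $\D_{1+\alpha}$-extractor, applied to the marginal $X$.

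To control $\KL\diver{P}{Q}$, I would view $P$ and $Q$ as images under the deterministic map $(x, z) \mapsto \Ext_{out}(x, z)$ of the joint distributions $\tilde P = (X, \Ext_{in}(Y, U_{d'}))$ and $\tilde Q = (X, U_d)$ on $\zo^n \times \zo^d$. Data processing gives $\KL\diver{P}{Q} \leq \KL\diver{\tilde P}{\tilde Q}$, and since $\tilde P$ and $\tilde Q$ share the same marginal $X$, the chain rule collapses the first term and leaves
\[
	\KL\diver{\tilde P}{\tilde Q} = \E[x\sim X]{\KL\diver{\Ext_{in}(Y|_{X=x}, U_{d'})}{U_d}} \leq \eps_{in},
\]
where the last inequality uses the bound $\condminent(Y \mid X) \geq n' - \log(1/\delta)$ together with the average-case KL-extractor guarantee of $\Ext_{in}$ on the pair $(X, Y)$. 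Substituting yields the desired bound $\eps_{out} + (1 + 1/\alpha)\cdot \eps_{in}$.

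For the ``furthermore'' claims, I would replace the fresh seed $U_{d'}$ by a fixed $s$ throughout and average the triangle inequality over $s \sim U_{d'}$ at the end; the analogous first term then becomes $\E[s, x]{\KL\diver{\Ext_{in}(Y|_{X=x}, s)}{U_d}}$, which is at most $\eps_{in}$ precisely when $\Ext_{in}$ is strong average-case, proving the strong claim. For the average-case conclusion, adjoin an external conditioning variable $Z$ with $\condminent((X,Y) \mid Z) \geq (n+n') - \log(1/\delta)$, verify by the same marginal arguments that $\condminent(X \mid Z) \geq n - \log(1/\delta)$ and $\condminent(Y \mid X, Z) \geq n' - \log(1/\delta)$, and apply the average-case hypotheses on $\Ext_{out}$ (to the $\D_{1+\alpha}$ leg) and on $\Ext_{in}$ (to the KL leg). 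The only step where care is needed is to align the two legs of the triangle inequality with the two extractors so that $\Ext_{out}$'s guarantee sits on the cheap $\D_{1+\alpha}$ side and $\Ext_{in}$'s guarantee sits on the KL side paying only the $(1 + 1/\alpha)$ factor, rather than anything depending on the output length $m$.
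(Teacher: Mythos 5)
Your proposal is correct and follows essentially the same route as the paper's proof: split with the $\KL$/$\D_{1+\alpha}$ triangle inequality of \cref{lem:kl_triangle}, push $\Ext_{in}$'s guarantee through $\Ext_{out}$ via data processing plus the chain rule, and derive the block-source min-entropy bounds from the joint min-entropy hypothesis. The only cosmetic difference is that you re-derive the marginal and conditional min-entropy bounds directly, where the paper cites the standard lemma from \cite{dod_ost_rey_smi_fuzzy_2008}.
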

\begin{proof}
	Let $(Z, X, Y)$ be jointly distributed random variables with $X$
	distributed over $\zo^n$ and $Y$ over $\zo^{n'}$ such that
	$\condminent(X, Y|Z) \geq n + n' - \log(1/\delta)$. Let $S'$ be a
	distribution over $\zo^{d'}$ which is independent of $X$, $Y$, and
	$Z$. Then for every $z\in \Supp(Z)$, we have by
	\cref{lem:kl_triangle} and the data-processing inequality for KL
	divergence that
	\begin{align*}
		&\KL\diver{\Ext\of{(X|_{Z=z},Y|_{Z=z}), S'}}{U_m}\\
		&\hspace{2em}=\KL\diver{\Ext_{out}\of{X|_{Z=z}, \Ext_{in}(Y|_{Z=z}, S')}}{U_m}\\
		&\hspace{2em}\leq \of{1+1/\alpha}\cdot \KL\diver{\Ext_{out}\of{X|_{Z=z}, \Ext_{in}(Y|_{Z=z}, S')}}{\Ext_{out}\of{X|_{Z=z}, U_d}}\\
		&\hspace{2em}\hphantom{\leq \of{1\vphantom{+1/\alpha}}}+ \D_{1+\alpha}\diver{\Ext_{out}\of{X|_{Z=z}, U_d}}{U_m}\\
		&\hspace{2em}\leq \of{1+1/\alpha}\cdot \KL\diver{X|_{Z=z}, \Ext_{in}(Y|_{Z=z}, S')}{X|_{Z=z}, U_{d}} + \D_{1+\alpha}\diver{\Ext_{out}\of{X|_{Z=z}, U_d}}{U_m}\\
		&\hspace{2em}= \of{1+1/\alpha}\cdot \E[x\sim X|_{Z=z}]{\KL\diver{\Ext_{in}(Y|_{X=x,Z=z}, S')}{U_{d}}} + \D_{1+\alpha}\diver{\Ext_{out}\of{X|_{Z=z}, U_d}}{U_m}
	\end{align*}
	where the last equality follows from the chain rule for
	KL divergence. Now by standard properties of conditional
	min-entropy (see for example \cite[Lemma
	2.2]{dod_ost_rey_smi_fuzzy_2008}), we know that $\condminent(X|Z) \geq
	\condminent(X,Y|Z) - \log\abs{\Supp(Y)} \geq n - \log(1/\delta)$ and
	$\condminent(Y|X,Z) \geq \condminent(X,Y|Z) - \log\abs{\Supp(X)} \geq
	n' - \log(1/\delta)$.

	If $\Ext_{out}$ is not average-case, take $Z$ to be a constant
	independent of $X$ and $Y$, and if $\Ext_{out}$ is average-case then
	take the average of both sides over $Z$. The claim for non-strong
	$\Ext_{in}$ then follows by taking $S' = U_d$ which bounds the first
	term by $(1+1/\alpha)\cdot \eps_{in}$ and the second by $\eps_{out}$.
	The claim for strong $\Ext_{in}$ follows by choosing $S' = U_{\set s}$
	to be the point mass on $s\in \zo^d$ and then taking the expectation
	of both sides over a uniform $s\in \zo^d$.
\end{proof}
\begin{rmrk}\label{rmrk:composition_entropy_loss}
	\Cref{thm:divergence_composition} in fact a construction of a
	\emph{block-source} KL-extractor, meaning that the claimed error
	bounds hold for any joint distributions $(X, Y)$ such that $\minent(Y)
	\geq n' - \log(1/\delta)$ and $\condminent(X|Y) \geq n -
	\log(1/\delta)$ rather than just those distributions with $\minent(X,
	Y) \geq n + n' - \log(1/\delta)$. The extra $\log(1/\delta)$ entropy
	loss inherent in the non-block analysis is why Reingold, Wigderson,
	and Vadhan \cite{rei_vad_wig_entropy_conference} introduced the
	zig-zag product for extractors, which we will apply for KL-extractors
	in \cref{cor:zigzag}.
\end{rmrk}

\subsection{Existing explicit constructions}

The construction of \cref{thm:divergence_composition} required both a
$\D_{1+\alpha}$-extractor and an average-case KL-extractor, so for
the result not to be vacuous we need to show the existence of such
extractors. Thankfully, \cref{defn:renyi_divergence} implies that
extractors for $\ell_2$ are also extractors for $\D_2$, so we can use
existing $\ell_2$ extractors from the literature, such as the Leftover
Hash Lemma of Impagliazzo, Levin, and Luby
\cite{imp_lev_lub_pseudorandom_1989} (see also
\cite{mcinnes_cryptography_1987,ben_bra_rob_privacy_1988}) and its
variant using almost-universal hash functions due to Srinivasan and
Zuckerman \cite{sri_zuc_computing_1999}.

\begin{prop}[\cite{mcinnes_cryptography_1987,ben_bra_rob_privacy_1988,%
	imp_lev_lub_pseudorandom_1989,imp_zuc_how_1989,%
	sri_zuc_computing_1999,dod_ost_rey_smi_fuzzy_2008}]\label{prop:lhl}
	Let $\mathcal{H}$ be a collection of $\eps$-almost universal hash
	functions from the set $\zo^n$ to the set $\zo^m$, meaning that for
	all $x\neq y\in\zo^n$ it holds that $\PR[h\sim \mathcal{H}]{h(x) =
	h(y)} \leq (1+\eps)/2^m$. Then the function $\Ext:\zo^n\times
	\mathcal{H}\to\mathcal{H}\times \zo^m$ defined by $\Ext(x, h) = (h,
	h(x))$ is an average-case $(m + \log(1/\eps), 2/\ln 2 \cdot \eps)$
	$\D_2$-extractor.

	In particular, for every $k,n\in \mathbb N$ and $1 >\eps > 0$ there is
	an explicit strong average-case $(k, \eps)$ extractor for $\D_2$ (and
	KL) with seed length $d = O(k + \log(n/\eps))$ and $m = k -
	\log(1/\eps) - O(1)$, given by $\Ext'(x, h) = h(x)$ for $h$ drawn from
	an appropriate almost-universal hash family.
\end{prop}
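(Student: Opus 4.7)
The plan is to reduce the $\D_2$-divergence bound to a collision-probability calculation. By \cref{defn:renyi_divergence},
\[\D_2\diver{P}{U_M} = \log\of{M\cdot \sum_y P_y^2}.\]
For $X$ on $\zo^n$ independent of $H$ uniform on $\mathcal{H}$, and $X'$ an i.i.d.\ copy of $X$, I would first establish
\[\sum_{h,y}\PR{(H, h(X))=(h,y)}^2 = \frac{1}{\abs{\mathcal{H}}}\cdot \E[h\sim \mathcal{H}]{\PR[X,X']{h(X)=h(X')}},\]
and bound the right-hand side by splitting on whether $X=X'$: the diagonal contribution is $\sum_x \PR{X=x}^2/\abs{\mathcal{H}}$, while the off-diagonal contribution is at most $(1+\eps)/(\abs{\mathcal{H}}\cdot 2^m)$ by the $\eps$-almost-universality of $\mathcal{H}$. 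Multiplying through by $M = \abs{\mathcal{H}}\cdot 2^m$ yields
\[M\cdot \sum_{h,y}\PR{(H, h(X))=(h,y)}^2 \leq 2^m\cdot \sum_x \PR{X=x}^2 + 1 + \eps.\]

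For the main (average-case, non-strong) claim, let $(Z,X)$ satisfy $\condminent(X|Z)\geq m+\log(1/\eps)$. The elementary bound $\sum_x\PR{X=x|Z=z}^2\leq 2^{-\minent(X|_{Z=z})}$ combined with the definition of conditional min-entropy gives $\E[z\sim Z]{\sum_x\PR{X=x|Z=z}^2}\leq 2^{-\condminent(X|Z)}\leq \eps/2^m$. Applying the displayed bound to each $X|_{Z=z}$, averaging over $z$, and then applying Jensen's inequality to the concave logarithm yields
\[\E[z\sim Z]{\D_2\diver{\Ext(X|_{Z=z}, U_\mathcal{H})}{U_M}} \leq \log(1+2\eps)\leq 2\eps/\ln 2,\]
as desired.

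For the ``In particular'' claim, I would instantiate the first part with an explicit $\eps$-almost-universal hash family from $\zo^n$ to $\zo^{m'}$ of description length $O(m' + \log(n/\eps))$, such as the construction of Srinivasan and Zuckerman \cite{sri_zuc_computing_1999}, setting $m' = k - \log(1/\eps) - O(1)$ to obtain seed length $d = O(k + \log(n/\eps))$. For the strong variant $\Ext'(x,h) = h(x)$, the same calculation applies with the average taken over both $z$ and $h$ before the $\log$; Jensen then gives $\E[z, h]{\D_2\diver{h(X|_{Z=z})}{U_m}}\leq \log(1+2\eps)$. Finally, since R\'enyi divergences are nondecreasing in order (\cref{defn:renyi_divergence}), $\KL\leq \D_2$, so the same construction is automatically a strong average-case KL-extractor with the same parameters.

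The one subtlety is the \emph{order} of operations: the logarithm is taken only after averaging the (linear) collision probability, since the sharp $\log(1+2\eps)\leq 2\eps/\ln 2$ bound relies on the argument of $\log$ being close to $1$; reversing the order with Jensen would produce a strictly weaker bound. This is also the reason the strong $\D_2$-variant requires a fresh Jensen step rather than following from a chain-rule identity in the style of \cref{lem:kl_strong_prepend}.
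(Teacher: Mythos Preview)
Your proposal is correct and follows essentially the same approach as the paper: both reduce the $\D_2$-bound to a collision-probability estimate, split on $X=X'$ versus $X\neq X'$, use the $\eps$-almost-universality for the off-diagonal term, and then apply Jensen's inequality to the concave logarithm after averaging over $z$. The only presentational difference is that for the strong claim the paper factors your ``fresh Jensen step'' into a standalone lemma (\cref{lem:d1pa-prepend-strong}) showing that a seed-prepending $\D_{1+\alpha}$-extractor yields a strong one, whereas you inline the argument; the underlying computation is the same.
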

\begin{proof}
	The $\D_2$ claim is implicit in Rackoff's proof of the Leftover Hash
	Lemma (see \cite{imp_zuc_how_1989}) and Srinivasan and Zuckerman's
	proof of the claim for total variation \cite{sri_zuc_computing_1999},
	which both analyzed the \emph{collision probability} of the output,
	and the average-case claim was proved by Dodis, Ostrovsky, Reyzin, and
	Smith \cite{dod_ost_rey_smi_fuzzy_2008}, though we include a proof
	here for completeness.

	Given a joint distribution $(Z, X)$ such that $X$ is distributed over
	$\zo^n$ with $\condminent(X|Z)\geq m + \log(1/\eps)$, we have
	\begin{align*}
		&\E[z\sim Z]{\D_2\diver{\Ext(X|_{Z=z}, \mathcal{H})}{\mathcal{H} \times
			U_m}}\\
		&\hspace{3em}= \E[z\sim Z]{\log\of{2^m\cdot \abs{\mathcal H}\cdot \PR[h,h'\sim
			\mathcal{H}, x,x'\sim X|_{Z=z}]{(h, h(x)) = (h', h'(x'))}}}\\
		&\hspace{3em}= \E[z\sim Z]{\log\of{2^m\cdot \PR[h\sim
			\mathcal{H}, x,x'\sim X|_{Z=z}]{x = x' \vee \of[\big]{x\neq x' \wedge h(x) = h(x')}}}}\\
		&\hspace{3em}\leq \E[z\sim Z]{\log\of{2^m\cdot \of{2^{-\minent(X|_{Z=z})} + \frac{1+\eps}{2^m}}}}\\
		&\hspace{3em}\leq \log\of{\E[z\sim Z]{2^{m-\minent(X|_{Z=z})}} + 1+\eps}\tag{by Jensen's inequality}\\
		&\hspace{3em}=\log\of{2^{m-\condminent(X|Z)} + 1 + \eps} \leq \log\of{1+2\eps}\leq \frac{2}{\ln 2}\cdot \eps.
	\end{align*}
	The in particular statement follows from
	\cref{lem:d1pa-prepend-strong} below and from the existence of
	$\eps$-almost universal hash families with size $\poly(2^k, n,
	1/\eps)$ as constructed by \cite{sri_zuc_computing_1999}.
\end{proof}

To establish the claim about strong extractors, we generalize
\cref{lem:kl_strong_prepend} to extractors for $\D_{1+\alpha}$ for
$\alpha > 0$:
\begin{lem}\label{lem:d1pa-prepend-strong}
	If $\Ext:\zo^n\times\zo^d \to \zo^d \times \zo^m$ is a $(k, \eps)$
	$\D_{1+\alpha}$-extractor (respectively average-case
	$\D_{1+\alpha}$-extractor) for $\alpha > 0$ such that $\Ext(x, s) =
	(s, \Ext'(x, s))$, then $\Ext'$ is a strong $(k, \eps)$
	$\D_{1+\alpha}$-extractor (respectively strong average-case $(k,
	\eps)$ $\D_{1+\alpha}$-extractor).
\end{lem}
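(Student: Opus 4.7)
The plan is to reduce the statement to a direct computation with the R\'enyi divergence formula combined with Jensen's inequality.

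First I would unpack what needs to be shown. By definition $\Ext'$ is a strong $(k,\eps)$ $\D_{1+\alpha}$-extractor if for every $X$ over $\zo^n$ with $\minent(X)\geq k$, $\E_{s\sim U_d}[\D_{1+\alpha}\diver{\Ext'(X,s)}{U_m}] \leq \eps$, and the average-case version replaces $X$ by $X|_{Z=z}$ and adds an outer expectation over $z\sim Z$. So the goal is to compare $\D_{1+\alpha}\diver{\Ext(X,U_d)}{U_{d+m}}$ with $\E_{s}[\D_{1+\alpha}\diver{\Ext'(X,s)}{U_m}]$.

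The key step is the identity
\[
    \D_{1+\alpha}\diver{\Ext(X,U_d)}{U_{d+m}}
    = \frac{1}{\alpha}\log\E_{s\sim U_d}\bigl[2^{\alpha\,\D_{1+\alpha}\diver{\Ext'(X,s)}{U_m}}\bigr].
\]
To see it, write $P_{(s,y)} = 2^{-d}\PR{\Ext'(X,s)=y}$ for the distribution of $\Ext(X,U_d)$ and $Q_{(s,y)} = 2^{-(d+m)}$ for $U_{d+m}$, and plug into the definition $\D_{1+\alpha}\diver{P}{Q} = \tfrac{1}{\alpha}\log\sum_{s,y}P_{(s,y)}^{1+\alpha}Q_{(s,y)}^{-\alpha}$. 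The factor of $2^{-d(1+\alpha)}$ from $P$ and $2^{(d+m)\alpha}$ from $Q$ combine into $2^{-d}\cdot 2^{m\alpha}$, and for each fixed $s$ the inner sum $2^{m\alpha}\sum_y\PR{\Ext'(X,s)=y}^{1+\alpha}$ is exactly $2^{\alpha\D_{1+\alpha}\diver{\Ext'(X,s)}{U_m}}$ by the same definition applied on $\zo^m$. This is a routine one-line computation.

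Given the identity, Jensen's inequality applied to the concave function $\log$ yields
\[
    \D_{1+\alpha}\diver{\Ext(X,U_d)}{U_{d+m}}
    \;\geq\;
    \E_{s\sim U_d}\bigl[\D_{1+\alpha}\diver{\Ext'(X,s)}{U_m}\bigr].
\]
For the non-strong hypothesis, apply this with $X$ of min-entropy at least $k$ to get the left side $\leq \eps$ and conclude. For the average-case version, take the expectation of both sides over $z\sim Z$ for $(Z,X)$ with $\condminent(X|Z)\geq k$; the assumption bounds $\E_z[\D_{1+\alpha}\diver{\Ext(X|_{Z=z},U_d)}{U_{d+m}}]$ by $\eps$, and the above Jensen bound applied pointwise in $z$ then gives $\E_{z,s}[\D_{1+\alpha}\diver{\Ext'(X|_{Z=z},s)}{U_m}] \leq \eps$.

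I don't expect a real obstacle: the only content is the identity in the second paragraph, and the only subtle point is using Jensen on $\log$ in the \emph{right} direction (the inequality goes the useful way because $\log$ is concave, not convex, so $\log\E[2^{\alpha D}] \geq \E[\log 2^{\alpha D}] = \alpha\,\E[D]$). For the average-case case it is worth noting that one does not need Jensen across the $z$-expectation -- the pointwise bound for each $z$ is enough -- which avoids any concern about mismatched convexity directions when combining the two expectations.
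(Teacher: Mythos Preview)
Your proposal is correct and follows essentially the same approach as the paper: both arguments unfold the definition of $\D_{1+\alpha}$, use Jensen's inequality for the concave logarithm to move the expectation over $s$ inside the $\log$, and then recognize the resulting sum as $\D_{1+\alpha}\diver{\Ext(X,U_d)}{U_{d+m}}$. The only cosmetic difference is that you first isolate the identity $\D_{1+\alpha}\diver{\Ext(X,U_d)}{U_{d+m}} = \tfrac1\alpha\log\E_s[2^{\alpha\,\D_{1+\alpha}\diver{\Ext'(X,s)}{U_m}}]$ and then apply Jensen, whereas the paper writes the same computation as a single chain of (in)equalities.
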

\begin{proof}
	\begin{align*}
		\E[s\sim U_d]{\D_{1+\alpha}\diver{\Ext'(X, s)}{U_m}}
		&= \E[s\sim U_d]{\frac1\alpha \log\of{1 + 2^{m\alpha} \sum_{y\in \zo^m} \PR{\Ext'(X, s) = y}^{1+\alpha}}}\\
		&\leq \frac1\alpha \log\of{1 + 2^{m\alpha} \E[s\sim U_d]{\sum_{y\in \zo^m} \PR{\Ext'(X, s) = y}^{1+\alpha}}}\\
		&= \frac1\alpha \log\of{1 + 2^{\alpha(m+d)}\sum_{(s, y)\in \zo^{d+m}} \PR{(U_d, \Ext'(X, U_d)) = (s, y)}^{1+\alpha}}\\
		&= \D_{1+\alpha}\diver{\Ext(X, U_d)}{U_d, U_m}\tag*{\qedhere}
	\end{align*}
\end{proof}

Following Vadhan \cite{vadhan_pseudorandomness_2012}, we also note that
the extractor based on expander walks due to Goldreich and Wigderson
\cite{gol_wig_tiny_1997}, which has the nice property that its seed
length depends only on $n - k$ the entropy deficiency of the source
rather than $n$ itself, is also an $\ell_2$ extractor. Before stating
the extractor formally, we introduce some notation and terminology
we will need.

\begin{defn}
	Let $G$ be a $D$-regular graph on $\zo^n$ with adjacency matrix $A_G$
	and transition matrix $M_G = \frac1D A_G$. Then if $M_G$ has
	eigenvalues $1 = \lambda_1\geq \lambda_2\geq\cdots\geq \lambda_n\geq
	-1$, the \emph{spectral expansion} of $G$ is $\lambda =
	\max\set{\lambda_2, -\lambda_n}$. A function $\Gamma_G:\zo^n\times [D]
	\to \zo^n$ is a \emph{neighbor function of $G$} if there is some
	labelling of the edges of $G$ for which $\Gamma_G(v, i)$ is the vertex
	obtained by following the $i$th edge out of $v$ in $G$. $\Gamma_G$ is
	\emph{consistently labelled} if for all $v\neq v'\in\zo^n$ and $i\in
	[D]$ we have $\Gamma(v, i)\neq \Gamma(v', i)$, that is, at most one
	incoming edge is labelled by $i$.
\end{defn}

\begin{lem}\label{lem:expander_is_l2_ext}
	Let $\Gamma:\zo^n\times\zo^d\to\zo^n$ be the neighbor function of
	a graph $G$ with spectral expansion $\lambda$. Then for every $0\leq k
	\leq n$, $\Gamma$ is a $\of{k, \lambda\sqrt{2^{-k}-2^{-n}}}$
	$\ell_2$-extractor and a $\of{k, \log\of{1 +
	\lambda^2\of{2^{n-k}-1}}}$ $\D_2$-extractor. Furthermore, if
	$\Gamma_G$ is consistently labelled, then the function $\Waste(x,
	s)=s$ is such that $(\Gamma_G, \Waste)$ is an injection out of
	$\zo^n\times\zo^d$.

	In particular, if $\lambda^2 \leq \eps \cdot 2^{k-n}$ then $\Ext$ is
	an average-case $(k, \sqrt{\eps} \cdot 2^{-n/2})$ $\ell_2$-extractor
	and an average-case $(k, \eps/\ln 2)$ $\D_2$-extractor.
\end{lem}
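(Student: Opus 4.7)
The plan is to identify the output distribution $\Gamma(X, U_d)$ with $M_G X$, viewing $X$ as a probability mass function written as a column vector indexed by $\zo^n$, so that the natural action of the transition matrix gives the distribution of a uniform neighbor. Writing $U = U_n$ for the uniform distribution on $\zo^n$, the facts that $M_G U = U$ and that $(X - U)$ is orthogonal to the all-ones vector (since both $X$ and $U$ sum to $1$) combined with the definition of spectral expansion give
\[
\norm{M_G X - U}_2 = \norm{M_G(X-U)}_2 \leq \lambda\cdot \norm{X-U}_2.
\]
Next, to convert this into a min-entropy statement, I would compute
\[
\norm{X-U}_2^2 = \sum_{x\in \zo^n} X_x^2 - 2\cdot 2^{-n}\sum_x X_x + 2^{-n} = \sum_x X_x^2 - 2^{-n},
\]
and use $\minent(X)\geq k$ to bound $\sum_x X_x^2 \leq 2^{-k}\sum_x X_x = 2^{-k}$, yielding $\norm{X-U}_2^2 \leq 2^{-k}-2^{-n}$. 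This proves the $\ell_2$-extractor claim, and the $\D_2$-extractor claim follows by substituting into the identity $\D_2\diver{P}{U_n} = \log\of{1 + 2^n\cdot d_{\ell_2}(P, U_n)^2}$ from \cref{defn:renyi_divergence}.

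For the consistent labeling claim, I would just observe that if $(\Gamma_G(x,s), s) = (\Gamma_G(x', s'), s')$, then $s = s'$ and then $x = x'$ by the definition of consistent labeling, so $(\Gamma_G, \Waste)$ is an injection from $\zo^n\times\zo^d$ to itself.

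For the ``In particular'' statement, the plain extractor bounds follow by substituting $\lambda^2 \leq \eps\cdot 2^{k-n}$ and applying the elementary inequalities $2^{-k} - 2^{-n} \leq 2^{-k}$ and $\log(1+x) \leq x/\ln 2$. To upgrade to average-case extractors, I would apply the pointwise bounds for each realization $z$ of $Z$ with $\minent(X|_{Z=z})$ in place of $k$ and then average over $Z$: for the $\ell_2$ bound, Jensen's inequality on the concave square root gives
\[
\E[z\sim Z]{\lambda\sqrt{2^{-\minent(X|_{Z=z})}-2^{-n}}} \leq \lambda\sqrt{\E[z\sim Z]{2^{-\minent(X|_{Z=z})}}-2^{-n}} = \lambda\sqrt{2^{-\condminent(X|Z)}-2^{-n}},
\]
and similarly Jensen on $\log$ handles the $\D_2$ case, using the identity $\E[z\sim Z]{2^{-\minent(X|_{Z=z})}} = 2^{-\condminent(X|Z)}$ from the definition of conditional min-entropy. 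The only place requiring any real care, and really the only potential ``obstacle,'' is ensuring Jensen is applied in the direction that moves the expectation inside the concave function; the remaining steps are routine algebra.
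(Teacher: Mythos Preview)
Your proposal is correct and is essentially the same argument as the paper's: both bound $d_{\ell_2}(X,U_n)\le\sqrt{2^{-k}-2^{-n}}$ from the min-entropy assumption (the paper phrases this via $\D_2\le\D_\infty$, which unwinds to exactly your computation $\sum_x X_x^2\le 2^{-k}$), apply the spectral contraction $\norm{M_G(X-U)}_2\le\lambda\norm{X-U}_2$, convert to $\D_2$ via the identity $\D_2\diver{P}{U_n}=\log(1+2^n d_{\ell_2}(P,U_n)^2)$, and handle the injection and average-case claims via consistent labeling and Jensen respectively. Your exposition is slightly more explicit in the algebra, but the route is the same.
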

\begin{proof}
	If $X$ is a distribution over $\zo^n$ with $\minent(X) \geq k$, then
	$\log\of{1 + 2^n d_{\ell_2}\of{X, U_n}} = \D_2\diver X{U_n} \leq
	\D_\infty\diver X{U_n} \leq n - k$ so that $d_{\ell_2}\of{P, U_n} \leq
	\sqrt{2^{-k} - 2^{-n}}$. The $\ell_2$-extractor result follows since
	the action of $\Gamma_G$ reduces the $\ell_2$ distance to uniform by a
	factor of $\lambda$, and the $\D_2$-extractor result from the fact
	that $D_2\diver P{U_n} = \log\of{1 + 2^n d_{\ell 2}\of{P, U_n}^2}$ for
	every distribution $P$ on $\zo^n$.

	For the furthermore claim, we need to show that $\of{x, s}\mapsto
	\of{\Gamma_G(x, s), s}$ is an injection, or equivalently that given
	$\Gamma_G(x, s)$ and $s$, one can recover $x$. But by definition of
	consistent labelling, at most one edge into $\Gamma_G(x, s)$ is
	labelled by $s$, and so taking this edge from $\Gamma_G(x, s)$ gives
	$x$, as desired.  Finally, the in particular claim follows by Jensen's
	inequality, since $\log$ and square-root are concave, and $\E[z\sim
	Z]{2^{-\minent(X|_{Z=z})}} = 2^{-\condminent(X|Z)}$ by definition.
\end{proof}
\begin{rmrk}
	The fact that $(s, \Ext(x, s))$ is an injection implies that, unlike
	for the extractors from hashing of \cref{prop:lhl}, the result of
	prepending the seed to the output of the expander-walk extractor does
	\emph{not} give a $\D_2$ extractor. However, it will be very useful in
	concert with Reingold, Vadhan, and Wigderson's zig-zag product for
	extractors \cite{rei_vad_wig_entropy_conference} to avoid the entropy
	loss in \cref{thm:divergence_composition}.
\end{rmrk}

\begin{cor}[\cite{gol_wig_tiny_1997} {\cite[Discussion after Theorem
	6.22]{vadhan_pseudorandomness_2012}}]\label{cor:expander_extractor}
	There is a universal constant $C \geq 1$ such that for every
	$1>\eps>0$, $\Delta > 0$, and $n\in\mathbb N$ there is an explicit $(n
	- \Delta, \eps/\ln 2)$ average-case $\D_2$-extractor (respectively $(n -
	\Delta, \sqrt{\eps}\cdot 2^{-n/2})$ average-case $\ell_2$-extractor)
	$\Ext:\zo^n \times \zo^d \to \zo^n$ with $d=\ceil{C\cdot (\Delta +
	\log(1/\eps))} + O(1)$ such that the function $(x, s) \mapsto (s,
	\Ext(x, s))$ is an injection.

	Moreover, if there is an explicit construction of consistently
	labelled neighbor functions for Ramanujan graphs over $\zo^n$ with
	degree $D = O\of{2^{\Delta}/\eps}$, then one can take $C = 1$.
\end{cor}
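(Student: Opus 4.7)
The plan is to apply \cref{lem:expander_is_l2_ext} to a suitable explicit expander on $\zo^n$. Setting $k = n - \Delta$, the ``in particular'' clause of that lemma tells us that it suffices to find an explicit consistently-labelled $2^d$-regular graph on $\zo^n$ with spectral expansion $\lambda$ satisfying $\lambda^2 \leq \eps \cdot 2^{-\Delta}$. Once such a graph is in hand, its neighbor function is automatically an average-case $(n-\Delta,\eps/\ln 2)$ $\D_2$-extractor and an average-case $(n-\Delta,\sqrt\eps\cdot 2^{-n/2})$ $\ell_2$-extractor, and the consistent labelling furnishes the required injection $(x,s)\mapsto(s,\Ext(x,s))$.

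For the main claim, I would invoke any standard explicit construction of a family of constant-degree expanders whose degree $D$ achieves spectral expansion $\lambda$ with $D \leq (1/\lambda)^{C'}$ for some absolute constant $C'$ (for example, the Cayley-graph constructions of Margulis or Gabber--Galil, or the iterated zig-zag construction), noting that these families admit consistent labellings, e.g.\ by labelling each edge with the group generator used to traverse it, which is a bijection on the vertex set. Taking $\lambda = \sqrt\eps \cdot 2^{-\Delta/2}$ yields $D = 2^d$ with $d \leq (C'/2)(\Delta + \log(1/\eps)) + O(1)$, so the corollary's universal constant $C$ can be taken to be any $C \geq C'/2$.

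For the Ramanujan addendum, I would simply use the Ramanujan bound $\lambda \leq 2\sqrt{D-1}/D \leq 2/\sqrt D$. Choosing $D = \lceil 4\cdot 2^\Delta/\eps\rceil$ then gives $\lambda^2 \leq 4/D \leq \eps\cdot 2^{-\Delta}$, so $d = \log D = \Delta + \log(1/\eps) + O(1)$, i.e.\ one can take $C=1$ as claimed.

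The only nontrivial point is not a calculation but a bookkeeping issue: one must ensure that the chosen explicit expander family actually admits a consistent labelling (otherwise the injection property, and hence the use of this corollary in the subsequent zig-zag-style composition, would fail). As noted above, for Cayley graphs of abelian (or more generally any) groups this comes for free by labelling edges with generators, and the standard zig-zag expanders can be set up to inherit consistent labellings from their component graphs; so this is routine but worth flagging as the one place where the construction choice actually matters.
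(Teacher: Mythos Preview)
Your proposal is correct and follows essentially the same route as the paper: reduce via \cref{lem:expander_is_l2_ext} to finding a consistently-labelled explicit expander on $\zo^n$ with $\lambda^2 \leq \eps\cdot 2^{-\Delta}$, then take a power of a Margulis--Gabber--Galil-type constant-degree expander for the main claim and invoke the Ramanujan bound $\lambda^2 \leq 4/D$ for the addendum. The paper is only marginally more explicit about the powering step (writing out $G^w$ and the resulting constant $C$) and about two small bookkeeping fixes you omit---handling odd $n$ by joining two copies on $\zo^{n-1}$ via the canonical matching, and padding with self-loops to make the degree a power of $2$---but these do not affect the substance of the argument.
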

\begin{proof}
	By \cref{lem:expander_is_l2_ext} it suffices to demonstrate the
	existence of an explicit $D$-regular expander graph over $\zo^n$ with
	a consistently labelled neighbor function $\Gamma_G$, spectral
	expansion $\lambda^2 \leq \eps \cdot 2^{-\Delta}$, and $D =
	O\of{\of{2^{\Delta}/\eps}^C}$.  The claim about Ramanujan graphs is
	thus immediate since a Ramanujan graph with degree $O(2^\Delta/\eps)$
	has $\lambda^2 \leq 4/D \leq \eps\cdot 2^{-\Delta}$.

	Without the assumption of good Ramanujan graphs, we can use a power of
	the the explicit constant degree expander of Margulis--Gabber--Galil
	\cite{margulis_explicit_1973,gab_gal_explicit_1981} (technically this
	requires $n$ even, which following Goldreich
	\cite{goldreich_basic_2011} we can fix when $n$ is odd by joining two
	graphs on $\zo^{n-1}$ by the canonical perfect matching, and we can
	add self-loops to ensure the degree is a power of $2$). This graph $G$
	is consistently labelled with degree $D_{MGG} = O(1)$ and constant
	spectral expansion $\lambda_{MGG} < 1$. Then the graph $G^w$ on
	$\zo^n$ with edges representing $w$-length paths has spectral
	expansion $\lambda_{MGG}^w$ and degree $D_{MGG}^w$, which for $w =
	\ceil{\log_{\lambda_{MGG}}(1/2) \cdot \of{\Delta + \log(1/\eps)}}$
	gives $\lambda \leq \eps\cdot 2^{-\Delta}$ and degree $D =
	O\of{\of{2^{\Delta}/\eps}^C}$ for $C \leq \log\of{D_{MGG}}\cdot
	\log_{\lambda_{MGG}}(1/2)$ as desired.
\end{proof}

We argued that the above extractors are KL-extractors using the fact
they are $\ell_2$ (and thus $\D_2)$ extractors, but one can also show
that any total variation extractor with sufficiently small error is a
KL-extractor, albeit with some loss of parameters.
\begin{lem}\label{lem:tv_extractor_is_kl}
	For every $(k, \eps)$ extractor $\Ext:\zo^n \times \zo^d \to \zo^m$ for
	total variation distance such that $\eps \leq 1/2$, $\Ext$ is also a $(k,
	m\cdot \eps + h(\eps))$-KL-extractor, where $h(x) = x\log(1/x) +
	(1-x)\log(1/(1-x))$ is the binary entropy function. Furthermore, if $\Ext$
	is strong, average-case, or both as a total variation extractor, then
	it has the same properties as a KL-extractor.

	In particular, if $\eps' = \frac{\min(\eps, 1/2)}{48(m +
	\log(1/\eps))}$, then every $(k, \eps')$ extractor (respectively
	strong extractor) is an average-case $(k, \eps)$ KL-extractor
	(respectively strong average-case $(k, \eps)$ KL-extractor).
\end{lem}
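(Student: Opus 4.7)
The plan is to derive the main bound directly from the reverse Pinsker inequality in \cref{lem:dg_vs_kl}, applied pointwise to $\Ext(X, U_d)$, and then assemble the ``in particular'' statement by combining the main claim with \cref{thm:func_ext_are_average}.

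Define $\phi(t) = m \cdot t + h(t)$; this function is concave (sum of linear and concave) and increasing on $[0, 1/2]$ since $h$ has these properties. For any source $X$ with $\minent(X) \geq k$, \cref{lem:dg_vs_kl} together with the hypothesis $\tv(\Ext(X, U_d), U_m) \leq \eps \leq 1/2$ yields $\KL\diver{\Ext(X, U_d)}{U_m} \leq \phi(\tv(\Ext(X, U_d), U_m)) \leq \phi(\eps)$ by monotonicity, establishing the base case. For strong, average-case, or strong average-case variants, the reverse Pinsker bound is applied pointwise for each seed $s$ (and each conditioning value $z$), and then one takes the expectation over $s \sim U_d$ (and $z \sim Z$). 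Jensen's inequality applied to the concave $\phi$ pulls the expectation inside $\phi$, and the assumption that the average total variation distance is at most $\eps \leq 1/2$ allows invoking monotonicity of $\phi$ to recover the same bound $\phi(\eps) = m\eps + h(\eps)$.

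For the ``in particular'' statement, the degenerate case $k = n$ is immediate (the only min-entropy-$\geq n$ source is $U_n$, and average-case reduces trivially to non-average-case), so assume $k \leq n - 1$. The class $\F = [0, 1]^{\zo^m}$ is symmetric via $f \mapsto 1 - f$, so \cref{thm:func_ext_are_average} promotes a $(k, \eps')$ TV extractor into a $(k, 3\eps')$ average-case TV extractor (preserving the strong property if present). Applying the main claim of this lemma then yields a $(k, m \cdot 3\eps' + h(3\eps'))$ average-case KL-extractor (strong if present), and it remains to verify $m \cdot 3\eps' + h(3\eps') \leq \eps$ for the given choice of $\eps'$.

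Substituting $\eps' = \min(\eps, 1/2)/(48(m + \log(1/\eps)))$ bounds the first term by $\eps/16$ directly. For the second, the elementary bound $h(y) \leq 2y\log(1/y)$ for $y \in (0, 1/2]$ (easily verified since equality holds at $y = 1/2$ and the ratio $h(y)/(2y\log(1/y))$ tends to $1/2$ as $y \to 0$) gives $h(3\eps') \leq 6\eps'\log(1/(3\eps'))$. Expanding $\log(1/(3\eps')) \leq 4 + \log(m + \log(1/\eps)) + \log(\max(1/\eps, 2))$ and using the crude bound $\log(m + \log(1/\eps)) \leq m + \log(1/\eps)$, a short calculation yields $h(3\eps') \leq 3\eps/4$, so the total is at most $13\eps/16 \leq \eps$. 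The main obstacle is simply this numerical bookkeeping; the constant $48$ in the definition of $\eps'$ is chosen precisely to make these estimates combine to at most $\eps$, and all the structural content of the argument is contained in \cref{lem:dg_vs_kl} and \cref{thm:func_ext_are_average}.
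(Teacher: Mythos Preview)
Your proof is correct and follows essentially the same approach as the paper. The paper's proof invokes \cref{lem:dg_vs_kl} and \cref{lem:divergence_bounds} for the main claim and \cref{thm:func_ext_are_average} for the ``in particular'' statement, which is exactly what you do, except you unwind \cref{lem:divergence_bounds} by hand (the Jensen argument for concave $\phi$) and you carry out the numerical verification of $m\cdot 3\eps' + h(3\eps') \leq \eps$ explicitly, whereas the paper simply asserts that $\eps'$ was chosen to make this hold. Your handling of the degenerate case $k = n$ (which \cref{thm:func_ext_are_average} excludes) is a nice touch the paper omits.
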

\begin{proof}
	The main claim is an immediate corollary of
	\cref{lem:dg_vs_kl,lem:divergence_bounds}. The in particular statement
	follows since $\Ext$ being a $(k, \eps')$ extractor (respectively
	strong extractor) implies by \cref{thm:func_ext_are_average} that it
	is a $(k, 3\eps')$ average-case (respectively strong average-case)
	extractor, so since we have chosen $\eps'$ to make $m\cdot 3\eps' +
	h(3\eps') \leq \eps$, we know $\Ext$ is an average-case $(k, \eps)$
	KL-extractor (respectively strong average-case KL-extractor).
\end{proof}
\begin{rmrk}
	Reducing $\eps$ by a factor of $m + \log(1/\eps)$ increases the seed
	length and entropy loss of the input extractor. For the former, this
	is often (but not always) tolerable since the input extractor may
	already depend suboptimally on $\log(n/\eps)$. For the latter, we will
	show in \cref{cor:rrv_entropy_loss} how to use the transform of Raz,
	Reingold, and Vadhan \cite{raz_rei_vad_extracting_2002} to recover
	$O(\log(m/\eps))$ bits of lost entropy (at least this much must be
	lost by Radhakrishnan and Ta-Shma \cite{rad_ta-_bounds_2000}) at a
	cost of $O(\log(n/\eps))$ in the seed length.
\end{rmrk}

Instantiating \cref{lem:tv_extractor_is_kl} with the
Guruswami--Umans--Vadhan \cite{gur_uma_vad_unbalanced_2009} extractor
for total variation distance, we see that the increased seed length and
entropy loss are simply absorbed into the existing hidden constants:
\begin{thm}[KL-analogue of {\cite[Theorem 1.5]{gur_uma_vad_unbalanced_2009}}]\label{cor:kl_guv}
	For every $n\in \mathbb N$, $k\leq n$, and $1>\alpha,\eps > 0$, there
	is an explicit average-case (respectively strong average-case)
	$(k,\eps)$ KL-extractor $\Ext:\zo^n\times \zo^d \to \zo^m$ with $d
	\leq \lg n + O_{\alpha}(\lg(k/\eps))$ and $m \geq (1-\alpha) k$
	(respectively $m \geq (1-\alpha)k - O_\alpha(\log(n/\eps))$).
\end{thm}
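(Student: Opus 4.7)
The plan is to apply the ``in particular'' clause of Lemma \ref{lem:tv_extractor_is_kl} to the Guruswami--Umans--Vadhan total variation extractor \cite[Theorem 1.5]{gur_uma_vad_unbalanced_2009}, showing that the logarithmic blowup in error required by the KL conversion is absorbed into the hidden constants of the GUV parameters.

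Concretely, given $\alpha, \eps > 0$, I would set the target TV error to $\eps' = \frac{\min(\eps, 1/2)}{48(n + \log(1/\eps))}$. Since the eventual output length $m$ will satisfy $m \leq k \leq n$, this $\eps'$ is at most $\frac{\min(\eps, 1/2)}{48(m + \log(1/\eps))}$, so by Lemma \ref{lem:tv_extractor_is_kl} any $(k, \eps')$ TV extractor (respectively strong extractor) is automatically an average-case $(k, \eps)$ KL-extractor (respectively strong average-case KL-extractor).

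I would then instantiate with the GUV extractor at error $\eps'$ and deficiency parameter $\alpha$, producing an explicit $(k, \eps')$ TV extractor (respectively strong extractor) $\Ext:\zo^n\times\zo^d \to \zo^m$ with $d \leq \log n + O_\alpha(\log(k/\eps'))$ and $m \geq (1-\alpha)k$ (respectively $m \geq (1-\alpha)k - O_\alpha(\log(n/\eps'))$). The final step is to observe that $\log(1/\eps') = \log(1/\eps) + O(\log(n + \log(1/\eps))) = O(\log(n/\eps))$, so that $\log(k/\eps') = O(\log(k/\eps))$ and $\log(n/\eps') = O(\log(n/\eps))$, and the resulting seed length and output length match the claimed bounds.

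This is essentially a parameter-chasing exercise, and I do not expect any substantive obstacle. The one thing to verify carefully is that the multiplicative factor of $n + \log(1/\eps)$ lost in the TV-to-KL reduction contributes only an additive $O(\log n)$ to $\log(1/\eps')$, which is dominated by the $O_\alpha(\log(n/\eps))$ terms already present in the GUV guarantees.
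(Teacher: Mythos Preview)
Your approach is exactly the one the paper takes: apply the ``in particular'' clause of \cref{lem:tv_extractor_is_kl} to the GUV total variation extractor and check that the losses are absorbed into the hidden constants.

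There is one small slip in your parameter chase, however. By replacing $m$ with $n$ in the definition of $\eps'$ you pick up an additive $\Theta(\log n)$ in $\log(1/\eps')$, and then the claim $\log(k/\eps') = O(\log(k/\eps))$ is false whenever $\log n \gg \log(k/\eps)$ (e.g.\ $n = 2^{100}$, $k = 10$, $\eps = 1/2$). This would turn the seed-length bound into $O_\alpha(\log n) + O_\alpha(\log(k/\eps))$ rather than the claimed $\lg n + O_\alpha(\lg(k/\eps))$. The fix is to use the tighter bound $m \leq k$ instead of $m \leq n$: set $\eps' = \frac{\min(\eps,1/2)}{48(k + \log(1/\eps))}$, which still satisfies the hypothesis of \cref{lem:tv_extractor_is_kl} (since $m \leq (1-\alpha)k \leq k$ in both GUV instantiations) and now genuinely gives $\log(1/\eps') = O(\log(k/\eps))$. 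With this correction both the seed-length and output-length bounds go through as you describe.
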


\subsection{Reducing the entropy loss of KL-extractors}

In this section, we show how to avoid the entropy loss inherent in
\cref{thm:divergence_composition} using the zig-zag product for
extractors, introduced by Reingold, Vadhan, and Wigderson
\cite{rei_vad_wig_entropy_conference}.  This product combines a
technique of Raz and Reingold \cite{raz_rei_recycling_1999} to preserve
entropy and the method of Wigderson and Zuckerman
\cite{wig_zuc_expanders_1999} to extract entropy left over in a source
after an initial extraction, and we show that these techniques extend to
the setting of KL-extractors.  Furthermore, these techniques (along
with the Leftover Hash Lemma) are also the key to the transformation
of Raz, Reingold, and Vadhan \cite{raz_rei_vad_extracting_2002} to
convert an arbitrary extractor into one with optimal entropy loss, so we
show that this transformation works for KL-extractors as well.

For all of these results, the key is the following lemma:
\begin{lem}[Re-extraction from leftovers]\label{lem:weak_waste_reextraction}
	Let
	\begin{enumerate}
		\item $\Ext_1:\zo^n \times \zo^{d_1} \to \zo^{m_1}$ be a $(k_1,
			\eps_1)$ KL-extractor,
		\item $\Waste_1:\zo^n \times \zo^{d_1} \to
			\zo^{w}$ be a function such that $(\Ext_1, \Waste_1):\zo^n \times
			\zo^{d_1} \to \zo^{m_1}\times \zo^w$ is an injective map,
		\item $\Ext_2:\zo^w \times \zo^{d_2} \to \zo^{m_2}$ be a $(k_2, \eps_2)$
	average-case KL-extractor for $k_2 \leq k_1 + d_1 - m_1$.
	\end{enumerate}
	Then $\Ext:\zo^n \times \zo^{d_1 + d_2} \to \zo^{m_1 + m_2}$ defined
	by $\Ext\of[\big]{x, (s, t)} = \of{\Ext_1(x, s),
	\Ext_2\of[\big]{\Waste_1(x, s), t}}$ is a $(k_1, \eps_1 + \eps_2)$
	KL-extractor. Furthermore, if $\Ext_1$ is average-case then so is
	$\Ext$.
\end{lem}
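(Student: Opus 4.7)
\emph{Proof plan.} The plan is to invoke the chain rule for KL-divergence on the pair output $\Ext\of[\big]{x, (s,t)} = \of[\big]{\Ext_1(x,s), \Ext_2(\Waste_1(x,s), t)}$. This decomposes the error into a first-block contribution, controlled directly by $\Ext_1$, and a conditional second-block contribution, to be controlled by $\Ext_2$ via the injectivity of $(\Ext_1, \Waste_1)$, which serves as the bridge that transfers min-entropy from the input pair $(X,S)$ to the conditional source on which $\Ext_2$ is run.

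Concretely, fix a source $X$ on $\zo^n$ with $\minent(X) \geq k_1$, and let $S \sim U_{d_1}$ and $T \sim U_{d_2}$ be independent uniform seeds. Set $A_1 = \Ext_1(X, S)$, $L = \Waste_1(X, S)$, and $A_2 = \Ext_2(L, T)$. Since $U_{m_1 + m_2}$ has independent coordinates, the chain rule for KL-divergence gives
\[
    \KL\diver{(A_1, A_2)}{U_{m_1 + m_2}}
    = \KL\diver{A_1}{U_{m_1}} + \E[a \sim A_1]{\KL\diver{A_2|_{A_1 = a}}{U_{m_2}}}.
\]
The first term is at most $\eps_1$ since $\Ext_1$ is a $(k_1, \eps_1)$ KL-extractor. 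For the second term, injectivity of $(\Ext_1, \Waste_1)$ implies $\minent(A_1, L) = \minent(X, S) \geq k_1 + d_1$, so the standard bound on conditional min-entropy (\cite[Lemma 2.2]{dod_ost_rey_smi_fuzzy_2008}) yields $\condminent(L | A_1) \geq k_1 + d_1 - m_1 \geq k_2$. Because $T$ is independent of $(X, S)$, we have $A_2|_{A_1 = a} = \Ext_2(L|_{A_1 = a}, U_{d_2})$, so applying the average-case KL-extractor property of $\Ext_2$ to the joint source $(A_1, L)$ bounds the expectation by $\eps_2$, for a total bound of $\eps_1 + \eps_2$ as desired.

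For the average-case strengthening, introduce a jointly distributed $Z$ with $\condminent(X|Z) \geq k_1$ and apply the same decomposition conditioned on each value of $Z$; averaging over $Z$, the first-block term is bounded by $\eps_1$ via the average-case property of $\Ext_1$, and the second-block term is still bounded by $\eps_2$ since the inequality $\condminent(L | Z, A_1) \geq k_2$ follows from the same injectivity and chain-rule argument applied conditionally. The main subtlety I anticipate is purely bookkeeping: verifying that injectivity of $(\Ext_1, \Waste_1)$ really does transfer the full $k_1 + d_1$ bits of input min-entropy to the pair $(A_1, L)$, so that after the $m_1$-bit side-information $A_1$ is revealed the leftover $L$ retains the $\geq k_2$ conditional min-entropy that $\Ext_2$ requires.
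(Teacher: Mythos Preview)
Your proposal is correct and follows essentially the same approach as the paper: apply the KL chain rule to split the output into the $\Ext_1$ block and the conditional $\Ext_2$ block, use injectivity of $(\Ext_1,\Waste_1)$ together with the DORS conditional min-entropy bound to show $\condminent(L\mid A_1)\geq k_1+d_1-m_1\geq k_2$, and invoke the average-case property of $\Ext_2$. The only cosmetic difference is that the paper carries the side-information $Z$ through from the start (handling the average-case and non-average-case claims simultaneously), whereas you do the non-average-case first and then sketch the conditioning on $Z$; both are fine.
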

\begin{rmrk}
	The pair $(\Ext_1, \Waste_1)$ is a special case of what Raz and
	Reingold \cite{raz_rei_recycling_1999} called an
	\emph{extractor-condenser pair}. One can think of $\Waste_1$ as
	preserving ``leftovers'' or ``waste,'' which is then ``re-extracted''
	or ``recycled'' by $\Ext_2$. The identity function on $\zo^n \times
	\zo^{d_1}$ is a valid choice of $\Waste_1$, but the advantage of the
	more general formulation is that $w$ can be much smaller than $n +
	d_1$, and most known explicit constructions of extractors have seed
	length depending on the input length of the source.
\end{rmrk}
\begin{proof}
	Given any joint distribution $(Z, X)$ such that $X$ is distributed
	over $\zo^n$ and $\condminent(X|Z)\geq k_1$, we have for every $z\in
	\Supp(Z)$ that
	\begin{align}
		&\KL\diver{\Ext(X|_{Z=z},\of{U_{d_1},U_{d_2}})}{U_{m_1+m_2}}\nonumber\\
		&\hspace{3em}=\KL\diver{\Ext_1(X|_{Z=z},U_{d_1}), \Ext_2\of[\big]{\Waste_1\of{X|_{Z=z}, U_{d_1}}, U_{d_2}}}{U_{m_1},U_{m_2}}\nonumber\\
		&\hspace{3em}=\KL\diver{\Ext_1\of{X|_{Z=z},U_{d_1}}}{U_{m_1}}\nonumber\\
		&\hspace{3em}\qquad+ \E[o_1 \sim \Ext_1(X|_{Z=z},s)]{\KL\diver{\Ext_2\of[\Big]{\Waste_1\of{X,U_{d_1}}|_{Z=z,\Ext_1(X,U_{d_1})=o_1}, U_{d_2}}}{U_{m_2}}}\label{eqn:reext-err}
	\end{align}
	where the last line follows from the chain rule for KL divergence.
	Note that
	\begin{align*}
		&\condminent\of[\Big]{\Waste_1\of[\big]{X, U_{d_1}} \:\Big\vert\:
		Z, \Ext_1\of[\big]{X, U_{d_1}}}\\
		&\hspace{5em}=\condminent\of[\Big]{\Ext_1\of[\big]{X, U_{d_1}}, \Waste_1\of[\big]{X, U_{d_1}} \:\Big\vert\:
Z, \Ext_1\of[\big]{X, U_{d_1}}}\\
		&\hspace{5em}=
		\condminent\of[\Big]{X, U_{d_1} \:\Big\vert\:
		Z, \Ext_1\of[\big]{X, U_{d_1}}}\tag{$(\Ext_1, \Waste_1)$ is an injection}\\
		&\hspace{5em}\geq
		\condminent\of{X, U_{d_1}\:|\:Z} - \log\abs{\Supp\of{\Ext_1\of[\big]{X, U_{d_1}}}}\tag{*}\\
		&\hspace{5em}=
		\condminent\of{X\:|\:Z} + \minent(U_{d_1}) - \log\abs{\Supp\of{\Ext_1\of[\big]{X, U_{d_1}}}}\tag{by independence}\\
		&\hspace{5em}\geq k_1 + d_1 - m_1 \geq k_2
	\end{align*}
	where the line (*) follows from standard properties of conditional
	min-entropy (e.g.~\cite[Lemma 2.2]{dod_ost_rey_smi_fuzzy_2008}). That
	$\Ext$ is a $(k_1, \eps_1 + \eps_2)$ KL-extractor now follows
	immediately from \cref{eqn:reext-err} by taking $Z$ independent of
	$X$, and the average-case claim follows from taking expectations over
	$z\sim Z$.
\end{proof}
\begin{rmrk}
	The proof above in fact works any weak divergence $\D$ such that
	$\D\diver{X, Y}{U_{m_1}, U_{m_2}} \leq \D\diver X{U_{m_1}} + \E[x\sim
	X]{\D\diver{Y|_{X=x}}{U_{m_2}}}$ for all joint distributions $(X, Y)$
	independent of $(U_{m_1}, U_{m_2})$.  In particular, the proof also
	gives \cref{lem:weak_waste_reextraction} for standard (total
	variation) extractors.
\end{rmrk}
By \cref{lem:kl_strong_prepend}, we get an analogous result for strong
KL-extractors.
\begin{cor}\label{cor:strong_waste_reextraction}
	Let
	\begin{enumerate}
		\item $\Ext_1:\zo^n \times \zo^{d_1} \to \zo^{m_1}$ be a strong $(k_1,
	\eps_1)$ KL-extractor,
		\item $\Waste_1:\zo^n \times \zo^{d_1} \to
	\zo^{w}$ be a function such that the map $(x, s)\mapsto (s, \Ext_1(x,
	s), \Waste_1(x, s))$ is an injection,
		\item $\Ext_2:\zo^w \times
	\zo^{d_2} \to \zo^{m_2}$ be a $(k_2, \eps_2)$ strong average-case
	KL-extractor for $k_2 \leq k_1 - m_1$.
	\end{enumerate}
	Then $\Ext:\zo^n \times \zo^{d_1 + d_2} \to \zo^{m_1 + m_2}$ defined
	by $\Ext\of[\big]{x, (s, t)} = \of{\Ext_1(x, s),
	\Ext_2\of[\big]{\Waste_1(x, s), t}}$ is a strong $(k_1, \eps_1 +
	\eps_2)$ KL-extractor. Furthermore, if $\Ext_1$ is average-case then
	so is $\Ext$.
\end{cor}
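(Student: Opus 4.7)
The plan is to deduce this from Lemma~\ref{lem:weak_waste_reextraction} by ``prepending seeds'' throughout, using the equivalence of Lemma~\ref{lem:kl_strong_prepend}. First, I would define the seed-prepended versions $\Ext_1'(x,s) \defeq (s, \Ext_1(x,s))$ and $\Ext_2'(w, t) \defeq (t, \Ext_2(w, t))$; by one direction of Lemma~\ref{lem:kl_strong_prepend}, these are a (non-strong) $(k_1, \eps_1)$ KL-extractor and a (non-strong) average-case $(k_2, \eps_2)$ KL-extractor, with output lengths $d_1 + m_1$ and $d_2 + m_2$ respectively. The new injection condition that $(\Ext_1', \Waste_1)$ be injective unwinds to exactly the hypothesis that $(x, s) \mapsto (s, \Ext_1(x,s), \Waste_1(x,s))$ is injective, and the new min-entropy condition $k_2 \leq k_1 + d_1 - (d_1 + m_1) = k_1 - m_1$ matches the hypothesis of the corollary.

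Next I would invoke Lemma~\ref{lem:weak_waste_reextraction} with $(\Ext_1', \Waste_1, \Ext_2')$ in place of $(\Ext_1, \Waste_1, \Ext_2)$, which yields that
\[
\Ext''\of{x, (s,t)} \defeq \of[\big]{\Ext_1'(x, s),\; \Ext_2'\of{\Waste_1(x,s), t}} = \of[\big]{s,\; \Ext_1(x,s),\; t,\; \Ext_2\of{\Waste_1(x,s), t}}
\]
is a (non-strong) $(k_1, \eps_1 + \eps_2)$ KL-extractor, and average-case if $\Ext_1$ is.

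Finally, I would translate back to a strong statement about $\Ext$. The output $\Ext''(x, (s,t))$ is, up to a fixed permutation of its output coordinates, equal to $\of{(s,t),\; \Ext\of{x, (s,t)}}$, i.e.\ the seed-prepended version of $\Ext$. Since KL divergence to uniform is invariant under bijective reorderings of the output (the uniform distribution being permutation-invariant), this prepended version of $\Ext$ inherits the $(k_1, \eps_1 + \eps_2)$ KL-extractor property, and the other direction of Lemma~\ref{lem:kl_strong_prepend} then gives that $\Ext$ itself is a strong $(k_1, \eps_1 + \eps_2)$ KL-extractor (still average-case if $\Ext_1$ is). The only thing to check carefully is this bookkeeping---that the seed-prepending transform cleanly absorbs the extra $d_1$ bits into the min-entropy condition ($k_1 + d_1 - m_1'$ becomes $k_1 - m_1$) and that the injection hypothesis survives the transform; once that is set up, Lemma~\ref{lem:weak_waste_reextraction} does all the real work.
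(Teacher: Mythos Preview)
Your proposal is correct and follows exactly the approach the paper intends: the paper's proof is just the one-line remark ``By \cref{lem:kl_strong_prepend}, we get an analogous result for strong KL-extractors,'' and you have spelled out precisely the seed-prepending reduction to Lemma~\ref{lem:weak_waste_reextraction} that this sentence encodes. The bookkeeping you flag (the min-entropy condition collapsing to $k_2 \leq k_1 - m_1$, the injection hypothesis matching, and the final output-permutation step) is all correct.
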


The zig-zag product for extractors due to Reingold, Vadhan, and
Wigderson \cite{rei_vad_wig_entropy_conference} (in the special case of
injective $(\Ext, \Waste)$-pairs) is an immediate consequence of
\cref{lem:weak_waste_reextraction} and \cref{thm:divergence_composition}
our basic composition result. Recall that
\cref{thm:divergence_composition} was able to combine an ``outer''
extractor, generally taken to have seed length depending only (but
linearly) on $n - k$, with an ``inner'' extractor to produce seeds for
the outer extractor with logarithmic seed length.  However, as discussed
in \cref{rmrk:composition_entropy_loss} that basic composition
necessarily lost $\log(1/\delta)$ bits of entropy, so the zig-zag
product uses \cref{lem:weak_waste_reextraction} to recover this entropy,
using an $(\Ext, \Waste)$-pair to ensure that the re-extraction adds
additional seed length depending logarithmically on $n-k$ rather than
$n$.

\begin{cor}[Zig-zag product for KL-extractors, analogous to
	{\cite[Theorem
	3.6]{rei_vad_wig_entropy_conference}}]\label{cor:zigzag}
	Let
	\begin{enumerate}
		\item $\Ext_{out}:\zo^n\times\zo^d \to\zo^m$ be an
			$(n-\log(1/\delta), \eps_{out})$ extractor for $\D_{1+\alpha}$
			with $\alpha > 0$,
		\item $\Waste_{out}:\zo^n \times \zo^d\to \zo^{w}$ be a function
			such that the pair $(\Ext_{out}, \Waste_{out})$ is an injection from
			$\zo^n\times \zo^d$,
		\item $\Ext_{in}:\zo^{n'}\times \zo^{d'}\to \zo^{d}$ be an $(n' -
			\log(1/\delta), \eps_{in})$ average-case KL-extractor,
		\item $\Waste_{in}:\zo^{n'} \times \zo^{d'} \to \zo^{w'}$ be such
			that the pair $(\Ext_{in}, \Waste_{in})$ is an injection from
			$\zo^{n'}\times \zo^{d'}$,
		\item $\Ext_{waste}: \zo^{w + w'} \times \zo^{d''}
			\to \zo^{m''}$ be an average-case $(n+n'-\log(1/\delta)-m,
			\eps_{waste})$ KL-extractor,
	\end{enumerate}
	and define
	\begin{enumerate}
		\item $\Ext_{comp}:\zo^{n + n'} \times \zo^{d'} \to \zo^{m}$ by
			$\Ext_{comp}\of[\big]{(x, y), s} = \Ext_{out}\of[\big]{x,
			\Ext_{in}(y, s)}$ as in \cref{thm:divergence_composition},
		\item $\Waste_{comp}:\zo^{n+n'}\times \zo^{d'} \to \zo^{w + w'}$ by
			$\Waste_{comp}\of{(x, y), s} = \of{\Waste_{out}(x, \Ext_{in}(y,
			s)), \Waste_{in}(y, s)}$,
		\item $\Ext:\zo^{n + n'} \times \zo^{d'+d''} \to \zo^{m+m''}$ by
			\[\Ext\of[\big]{(x, y), (s, t)} =
			\of[\Bigg]{\Ext_{comp}\of[\big]{(x, y), s},
			\Ext_{waste}\of[\Big]{\Waste_{comp}\of[\big]{(x, y), s}, t}}\]
			as in \cref{lem:weak_waste_reextraction}.
	\end{enumerate}
	Then $\Ext$ is an $\of{n + n' - \log(1/\delta),
	\eps_{out}+(1+1/\alpha)\cdot \eps_{in} + \eps_{waste}}$-extractor for
	KL. Furthermore, if $\Ext_{in}$ and $\Ext_{waste}$ are strong
	average-case KL-extractors, then $\Ext$ is a strong KL-extractor, and
	if $\Ext_{out}$ is average-case then so is $\Ext$.
\end{cor}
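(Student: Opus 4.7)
The plan is a direct two-step composition: first apply \cref{thm:divergence_composition} to combine $\Ext_{out}$ and $\Ext_{in}$ into the intermediate extractor $\Ext_{comp}$, and then apply \cref{lem:weak_waste_reextraction} (or \cref{cor:strong_waste_reextraction} for the strong case) with $\Ext_{comp}$ as the first extractor and $\Ext_{waste}$ as the second, using $\Waste_{comp}$ as the connecting extractor-condenser pair. Tracking the parameters and the average-case/strong properties through each of these two steps should give the claimed bounds.

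First I would invoke \cref{thm:divergence_composition} on $\Ext_{out}$ and $\Ext_{in}$ to conclude that $\Ext_{comp}$ is an $(n+n'-\log(1/\delta),\ \eps_{out}+(1+1/\alpha)\eps_{in})$ KL-extractor, with average-case status inherited from $\Ext_{out}$ and strong status inherited from $\Ext_{in}$. The key auxiliary step, and I expect the only one that requires real verification, is to check that $(\Ext_{comp}, \Waste_{comp})$ is an injection from $\zo^{n+n'}\times \zo^{d'}$. Given the image triple $\of[\big]{\Ext_{out}(x, \Ext_{in}(y,s)),\ \Waste_{out}(x,\Ext_{in}(y,s)),\ \Waste_{in}(y,s)}$, injectivity of $(\Ext_{out},\Waste_{out})$ recovers both $x$ and the intermediate value $\Ext_{in}(y,s)$, and then injectivity of $(\Ext_{in},\Waste_{in})$ applied to $(\Ext_{in}(y,s),\Waste_{in}(y,s))$ recovers $(y,s)$. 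For the strong case, the stronger injectivity hypothesis of \cref{cor:strong_waste_reextraction} holds by the same argument since $s$ is already part of the image.

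Second I would apply \cref{lem:weak_waste_reextraction} with $\Ext_1 \defeq \Ext_{comp}$, $\Waste_1 \defeq \Waste_{comp}$, and $\Ext_2 \defeq \Ext_{waste}$, after checking the min-entropy condition $k_2 \le k_1 + d_1 - m_1$. Here $k_1 = n+n'-\log(1/\delta)$, $d_1 = d'$, $m_1 = m$, and $k_2 = n+n'-\log(1/\delta)-m$, so the condition reduces to $0 \le d'$ and trivially holds. The conclusion is exactly that $\Ext$ is a $\of{n+n'-\log(1/\delta),\ \eps_{out}+(1+1/\alpha)\eps_{in}+\eps_{waste}}$ KL-extractor, with the average-case property inherited from $\Ext_{comp}$ (and hence from $\Ext_{out}$). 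For the strong variant, substituting \cref{cor:strong_waste_reextraction} and its condition $k_2 \le k_1 - m_1$ (which also holds with equality here) yields that $\Ext$ is a strong KL-extractor whenever both $\Ext_{in}$ and $\Ext_{waste}$ are strong average-case, since strong average-case of $\Ext_{in}$ makes $\Ext_{comp}$ strong via \cref{thm:divergence_composition}.

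The main (minor) obstacle is the bookkeeping of the injectivity condition for $(\Ext_{comp},\Waste_{comp})$ and its strong analogue; everything else is a direct plug-in of the earlier composition and re-extraction lemmas. No new probabilistic argument is needed beyond what is already packaged in \cref{thm:divergence_composition} and \cref{lem:weak_waste_reextraction,cor:strong_waste_reextraction}.
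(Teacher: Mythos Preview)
Your proposal is correct and follows essentially the same approach as the paper: verify that $(\Ext_{comp},\Waste_{comp})$ is an injection by chaining the two given injectivity assumptions, invoke \cref{thm:divergence_composition} for the parameters of $\Ext_{comp}$, and then apply \cref{lem:weak_waste_reextraction} (resp.\ \cref{cor:strong_waste_reextraction}) to finish. Your explicit check of the min-entropy condition $k_2 \le k_1 + d_1 - m_1$ is a small bit of extra bookkeeping the paper leaves implicit, but otherwise the arguments coincide.
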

\begin{proof}
	We claim that $\Waste_{comp}$ is such that $(\Ext_{comp},
	\Waste_{comp})$ is an injection: by assumption on $(\Ext_{out},
	\Waste_{out})$ we have that given $\Ext_{out}(x, \Ext_{in}(y, s))$ and
	$\Waste_{out}(x, \Ext_{in}(y, s))$ we can recover $x$ and
	$\Ext_{in}(y, s)$, and by assumption on $(\Ext_{in}, \Waste_{in})$
	given $\Ext_{in}(y, s)$ and $\Waste_{in}(y, s)$ we can recover $(y,
	s)$, so that $(\Ext_{comp}, \Waste_{comp})$ has an inverse and is
	injective as desired.  Therefore, since
	\cref{thm:divergence_composition} implies $\Ext_{comp}$ is an $(n + n'
	- \log(1/\delta), \eps_{out} + (1 + 1/\alpha)\cdot \eps_{in})$
	KL-extractor, the result follows from
	\cref{lem:weak_waste_reextraction}. The furthermore claims follow from
	the corresponding claims of these lemmas (and
	\cref{cor:strong_waste_reextraction} for the strong case).
\end{proof}
\begin{rmrk}
	\Cref{cor:zigzag} was presented by Reingold, Vadhan, and Wigderson
	\cite{rei_vad_wig_entropy_conference} as a transformation that
	combined three extractor-condenser pairs into a new
	extractor-condenser pair.  We do not use this generality, so for
	simplicity we do not present it here, but both
	\cref{lem:weak_waste_reextraction} and \cref{cor:zigzag} can be easily
	extended in this manner if required.
\end{rmrk}

The Raz--Reingold--Vadhan \cite{raz_rei_vad_extracting_2002}
transformation to avoid entropy loss follows similarly using the
Leftover Hash Lemma (\cref{prop:lhl}).
\begin{cor}[KL-extractor analogue of {\cite[Lemma 28]{raz_rei_vad_extracting_2002}}]\label{cor:rrv_entropy_loss}
	Let $\Ext_1:\zo^n \times \zo^{d_1} \to \zo^{m_1}$ be a strong $(k,
	\eps/2)$ KL-extractor with entropy loss $\Delta_1$, meaning $m_1 =
	k - \Delta_1$. Then for every $d_{extra} \leq \Delta_1$ there is an
	explicit $(k, \eps)$ strong KL-extractor $\Ext:\zo^n \times
	\zo^{d'} \to \zo^{m'}$ with seed length $d' = d_1 + O(d_{extra} +
	\log(n/\eps))$ and entropy loss $\Delta_1 - d_{extra} + \log(1/\eps) -
	O(1)$, meaning $m' = k - (\Delta_1 - d_{extra}) - \log(1/\eps) +
	O(1)$, which is computable in polynomial time making one oracle call
	to $\Ext_1$.  Furthermore, if $\Ext_1$ is average-case then so is
	$\Ext$.

	In particular, by taking $d_{extra} = \Delta_1$ we get an extractor
	with optimal entropy loss $\log(1/\eps) + O(1)$ by paying an
	additional $O(\Delta + \log(n/\eps))$ in seed length.
\end{cor}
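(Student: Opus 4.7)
The plan is to follow Raz, Reingold, and Vadhan by instantiating the strong re-extraction result of Corollary~\ref{cor:strong_waste_reextraction} with $\Ext_1$ as the ``outer'' strong KL-extractor, the identity function on the source as the waste $\Waste_1(x,s)=x$ (so $w=n$), and the strong average-case Leftover Hash Lemma from Proposition~\ref{prop:lhl} as the ``inner'' re-extractor $\Ext_2$ that recovers entropy left over in the source after applying $\Ext_1$.

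Taking $\Waste_1(x,s)=x$ makes the map $(x,s)\mapsto(s,\Ext_1(x,s),x)$ trivially injective, so the injection hypothesis of Corollary~\ref{cor:strong_waste_reextraction} is satisfied. I would then instantiate Proposition~\ref{prop:lhl} with error $\eps/2$ and min-entropy parameter $k_2 = \min(d_{extra}+C,\Delta_1)$ for a suitable constant $C$, obtaining a strong average-case $(k_2,\eps/2)$ KL-extractor $\Ext_2:\zo^n \times \zo^{d_2} \to \zo^{m_2}$ with seed length $d_2 = O(k_2 + \log(n/\eps)) = O(d_{extra} + \log(n/\eps))$ and output length $m_2 = k_2 - \log(1/\eps) - O(1) = d_{extra} - \log(1/\eps) + O(1)$. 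The constraint $k_2 \leq k - m_1 = \Delta_1$ required by Corollary~\ref{cor:strong_waste_reextraction} holds by construction, and the evaluation of $\Ext_2$ is just a hash computation so no additional oracle call to $\Ext_1$ is incurred.

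Plugging these ingredients into Corollary~\ref{cor:strong_waste_reextraction} then directly yields the desired $\Ext:\zo^n \times \zo^{d_1+d_2} \to \zo^{m_1+m_2}$ with seed length $d' = d_1 + d_2 = d_1 + O(d_{extra} + \log(n/\eps))$, output length $m' = m_1 + m_2 = k - (\Delta_1 - d_{extra}) - \log(1/\eps) + O(1)$, and total KL-error at most $\eps/2 + \eps/2 = \eps$, as a strong KL-extractor. The average-case preservation follows immediately from the ``furthermore'' clause of Corollary~\ref{cor:strong_waste_reextraction}. There is no real obstacle in this argument: all the technical work was already carried out in establishing Corollary~\ref{cor:strong_waste_reextraction} (where we verified that the conditional min-entropy of the waste given the seed and $\Ext_1$'s output is at least $\Delta_1$) and in the KL-version of the Leftover Hash Lemma (Proposition~\ref{prop:lhl}), so the proof amounts to a routine check that the parameters fit together as claimed.
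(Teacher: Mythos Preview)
Your proposal is correct and follows essentially the same approach as the paper: the paper also takes $\Waste_1(x,s)=x$, instantiates $\Ext_2$ as the strong average-case KL-extractor from Proposition~\ref{prop:lhl} with error $\eps/2$ and min-entropy parameter $d_{extra}$, and concludes via Corollary~\ref{cor:strong_waste_reextraction}. Your choice $k_2=\min(d_{extra}+C,\Delta_1)$ is slightly more elaborate than necessary---since $d_{extra}\leq \Delta_1$ by hypothesis, the paper simply sets $k_2=d_{extra}$ directly---but the argument is otherwise identical.
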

\begin{proof}
	Let $\Waste_1:\zo^n \times \zo^{d_1} \to \zo^n$ be given by
	$\Waste_1(x, s) = x$, and let $\Ext_2:\zo^n \times \zo^{d_2} \to
	\zo^{m_2}$ be the strong average-case $(d_{extra}, \eps/2)$
	KL-extractor of \cref{prop:lhl} using almost-universal hash
	functions, so that $d_2 = O(d_{extra} + \log (n/\eps))$ and $m_2 =
	d_{extra} - \log(1/\eps) - O(1)$. The result follows from taking
	$\Ext$ to be the extractor of \cref{cor:strong_waste_reextraction}.
\end{proof}
\begin{rmrk}
	An analogous versions of the above claim for non-strong
	KL-extractors follows by taking $\Waste_1(x, s) = (x, s)$ and using
	\cref{lem:weak_waste_reextraction}.
\end{rmrk}

We can apply \cref{cor:rrv_entropy_loss} to \cref{cor:kl_guv} the KL-extractors from the
total variation extractors of Guruswami, Umans, and Vadhan
\cite{gur_uma_vad_unbalanced_2009}, thereby avoiding the extra
$O(\log(n/\eps))$ entropy loss in the strong extractors.
\begin{cor}\label{cor:kl_guv_lessloss}
	For every $n\in \mathbb N$, $1>\alpha,\eps > 0$, and $k,k'\geq0$ with
	$k + k' \leq n$, there is an explicit strong average-case
	$(k + k',\eps)$ KL-extractor $\Ext:\zo^n\times \zo^d \to
	\zo^m$ with $d \leq O_{\alpha}(\log(n/\eps)) + O(k')$ and
	$m \geq (1-\alpha)k + k' - \log(1/\eps) - O(1)$.
\end{cor}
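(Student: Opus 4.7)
The plan is to chain the strong average-case KL-extractor from \cref{cor:kl_guv} with the entropy-loss-reduction transformation of \cref{cor:rrv_entropy_loss}, since the former gives us a good extractor on min-entropy $k+k'$ but with entropy loss proportional to $\alpha(k+k')$, and we need an entropy loss of only $\alpha k + \log(1/\eps) + O(1)$ so as to preserve essentially all of the $k'$ ``extra'' bits of min-entropy.

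Concretely, first I would apply \cref{cor:kl_guv} with min-entropy parameter $k+k'$, the same $\alpha$, and error $\eps/2$, obtaining a strong average-case $(k+k', \eps/2)$ KL-extractor $\Ext_1:\zo^n \times \zo^{d_1} \to \zo^{m_1}$ with $d_1 \leq \log n + O_\alpha(\log(n/\eps))$ and entropy loss $\Delta_1 = (k+k') - m_1 \leq \alpha(k+k') + O_\alpha(\log(n/\eps))$. Next I would feed $\Ext_1$ into \cref{cor:rrv_entropy_loss} (whose ``furthermore'' clause preserves the strong average-case property) with the parameter $d_{\mathrm{extra}}$ chosen so that the resulting entropy loss $\Delta_1 - d_{\mathrm{extra}} + \log(1/\eps) - O(1)$ is at most $\alpha k + \log(1/\eps) + O(1)$; that is, $d_{\mathrm{extra}} = \min\bigl(\Delta_1,\ \max(0,\ \alpha k' + O_\alpha(\log(n/\eps)))\bigr)$. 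The hypothesis $d_{\mathrm{extra}} \leq \Delta_1$ of \cref{cor:rrv_entropy_loss} is satisfied by construction, and the degenerate case $\Delta_1 \leq \alpha k + O(1)$ is handled by setting $d_{\mathrm{extra}} = 0$ (equivalently, just outputting $\Ext_1$ itself), in which case the output length bound is already $(1-\alpha)k + k' - O(1)$.

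Plugging into the conclusion of \cref{cor:rrv_entropy_loss} yields a strong average-case $(k+k', \eps)$ KL-extractor with seed length
\[
d' \leq d_1 + O(d_{\mathrm{extra}} + \log(n/\eps)) \leq O(k') + O_\alpha(\log(n/\eps))
\]
and output length
\[
m' = (k+k') - (\Delta_1 - d_{\mathrm{extra}} + \log(1/\eps) - O(1)) \geq (1-\alpha)k + k' - \log(1/\eps) - O(1),
\]
matching the desired parameters. The error budget splits as $\eps/2$ for $\Ext_1$ and $\eps/2$ for the almost-universal hashing extractor inside \cref{cor:rrv_entropy_loss}.

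I do not anticipate a genuine obstacle; the argument is a direct parameter calculation combining two prior results. The one point that requires care is the bookkeeping around $d_{\mathrm{extra}}$ (ensuring it is nonnegative and at most $\Delta_1$, and that the hidden $O_\alpha(\log(n/\eps))$ terms in $\Delta_1$ are absorbed by those in the final seed length), which is routine.
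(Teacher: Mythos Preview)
Your proposal is correct and follows essentially the same approach as the paper, which simply states that one applies \cref{cor:rrv_entropy_loss} to the strong average-case KL-extractor of \cref{cor:kl_guv}; your writeup supplies the parameter bookkeeping the paper leaves implicit.
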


\subsection{Lower bounds}
\label{sec:lb}

In this section, we give lower bounds on extractors for the R\'enyi
divergences $D_\beta$ of all orders, including the special case $\beta =
1$ of KL-extractors. A reader primarily interested in explicit
constructions of subgaussian samplers can skip to
\cref{sec:subgaussian_samplers}.

For R\'enyi divergences $D_\beta$ with $\beta \leq 1$ we reduce to
Radhakrishnan and Ta-Shma's \cite{rad_ta-_bounds_2000} lower bounds for
total variation extractors and \emph{dispersers}, which can be
understood as a one-sided relaxation of total variation extractors.

\begin{defn}[Sipser \cite{sipser_expanders_1988}, Cohen and Wigderson \cite{coh_wig_dispersers_1989}]
	\label{defn:disperser}
	A function $\Disp:\zo^n\times \zo^d\to\zo^m$ is a \emph{$(k,\eps)$
	disperser} if for all random variables $X$ over $\zo^n$ with
	$\minent(X)\geq k$, it holds that $\abs{\Supp(\Disp(X, U_d))} \geq
	(1-\eps)2^m$.
\end{defn}
Dispersers are of interest in the context of R\'enyi extractors because
the R\'enyi $0$-entropy of a random variable is the logarithm of its
support size (see \cref{defn:renyi_divergence}), and hence dispersers
are equivalent to $\D_0$-extractors:
\begin{lem}\label{lem:disperser_d0}
	$\Disp$ is a $(k,\eps)$ disperser if and only if $\Disp$ is a
	$\of{k, \log\of[\big]{1/(1-\eps)}}$ $\D_0$-extractor.
\end{lem}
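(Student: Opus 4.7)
The plan is to simply unpack both definitions and observe that they are rephrasings of the same condition about the support size of $\Disp(X, U_d)$.

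First I would specialize the definition of $\D_0$ from \cref{defn:renyi_divergence} to the case where the reference distribution is uniform on $\zo^m$. Since $\D_0\diver{P}{Q} = \log(1/\PR[x\sim Q]{P_x\neq 0})$ and $\PR[x\sim U_m]{P_x \neq 0} = \abs{\Supp(P)}/2^m$, we get
\[
\D_0\diver{P}{U_m} = m - \log\abs{\Supp(P)},
\]
which is the instance of the identity $\D_\alpha\diver{P}{U_{\X}} = \log\abs{\X} - \Ent_\alpha(P)$ already noted in \cref{defn:renyi_divergence}.

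Next I would apply this with $P = \Disp(X, U_d)$ for an arbitrary distribution $X$ on $\zo^n$ with $\minent(X)\geq k$. The condition that $\Disp$ is a $(k, \log(1/(1-\eps)))$ $\D_0$-extractor becomes
\[
m - \log\abs{\Supp(\Disp(X, U_d))} \leq \log\of{\tfrac{1}{1-\eps}},
\]
which after rearranging is exactly $\abs{\Supp(\Disp(X, U_d))} \geq (1-\eps)2^m$, the disperser condition from \cref{defn:disperser}. Since this manipulation is an equivalence at each step and holds for every $X$ of min-entropy at least $k$, the two notions coincide.

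There isn't really a main obstacle here: the lemma is essentially a definitional reformulation, made possible by the fact that $\D_0$ exactly captures support size (equivalently, R\'enyi $0$-entropy is log support size). The only thing to be careful about is the direction of the logarithm: larger support corresponds to smaller $\D_0$ to uniform, which is why the $\D_0$-error $\log(1/(1-\eps))$ is small precisely when the disperser-error $\eps$ is small.
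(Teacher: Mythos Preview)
Your proposal is correct and is exactly the definitional unpacking the paper has in mind; in fact the paper states this lemma without proof, treating the equivalence as immediate from the identity $\D_0\diver{P}{U_m} = m - \log\abs{\Supp(P)}$ noted in \cref{defn:renyi_divergence}. Your write-up is a faithful expansion of that one-line observation.
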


Given \cref{lem:disperser_d0}, we can use the Radhakrishnan and Ta-Shma
\cite{rad_ta-_bounds_2000} lower bounds to give an optimal lower bound
on the seed length of $D_{\beta}$-extractors for $\beta\leq 1$ in terms
of the error $\eps$, input length $n$ and supported entropy $k$ (we will
give a matching non-explicit upper bound in the next section), as well
as lower bounds on the entropy loss. For the case $\beta = 1$ of
KL-extractors, the non-explicit upper bound (\cref{thm:kl_nonexplicit})
also shows that the entropy loss lower bound is optimal.
\begin{thm}\label{prop:small_renyi_lb}
	Let $0 \leq \beta \leq 1$ and $\Ext:\zo^n \times \zo^d \to \zo^m$ be a
	$(k, \eps)$ extractor for $D_\beta$ with $k\leq n - 2$, $d\leq m - 1$,
	and $2^{2-m} < \eps < 1/4$. Then $d\geq \log(n-k) + \log(1/\eps) -
	O(1)$ and $m \leq k + d - \log\log(1/\eps) + O(1)$. Furthermore, if
	$\eps$ is at most $\beta/(2\ln 2)$ then $m \leq k + d - \log(1/\eps) +
	\log(1/\beta) + O(1)$.
\end{thm}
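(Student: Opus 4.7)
The plan is to reduce the $\D_\beta$-extractor lower bound to the classical lower bounds of Radhakrishnan and Ta-Shma \cite{rad_ta-_bounds_2000} for dispersers (giving the main bound) and for total variation extractors (giving the tighter bound when $\eps$ is small enough).

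For the main claim, I would exploit the monotonicity of R\'enyi divergences in their order, noted after \cref{defn:renyi_divergence}: since $\beta \leq 1$, we have $\D_0\diver PQ \leq \D_\beta\diver PQ$ pointwise. Hence any $(k,\eps)$ $\D_\beta$-extractor is automatically a $(k,\eps)$ $\D_0$-extractor, which by \cref{lem:disperser_d0} is a $(k, 1 - 2^{-\eps})$ disperser. For $\eps < 1/4$ we have $1 - 2^{-\eps} = \Theta(\eps)$, so invoking the Radhakrishnan--Ta-Shma lower bound for dispersers (giving, for a $(k,\eps')$ disperser, $d \geq \log(n-k) + \log(1/\eps') - O(1)$ and $m \leq k + d - \log\log(1/\eps') + O(1)$) immediately yields the claimed bounds $d \geq \log(n-k) + \log(1/\eps) - O(1)$ and $m \leq k + d - \log\log(1/\eps) + O(1)$.

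For the tighter entropy-loss bound when $\eps \leq \beta/(2\ln 2)$, I would turn the $\D_\beta$-extractor into a total variation extractor with a controlled loss in the error parameter. The key tool is a generalized Pinsker inequality for R\'enyi divergences of order $\beta \in (0,1]$,
\[
	\tv(P, Q)^2 \leq \frac{\ln 2}{2\beta} \cdot \D_\beta\diver{P}{Q},
\]
which follows from the known bound $\D_\beta\diver{P}{Q} \geq (\beta/2) \cdot \norm{P - Q}_1^2$ in nats, valid throughout $\beta \in (0,1]$. Applied to $Q = U_m$ this shows the extractor is a $(k, \eps_{TV})$ total variation extractor for $\eps_{TV} = \sqrt{\ln 2 \cdot \eps/(2\beta)}$; the hypothesis $\eps \leq \beta/(2\ln 2)$ ensures $\eps_{TV} \leq 1/2$, placing $\eps_{TV}$ squarely in the regime where the Radhakrishnan--Ta-Shma lower bound for total variation extractors applies and forces entropy loss at least $2 \log(1/\eps_{TV}) - O(1) = \log(\beta/\eps) - O(1)$. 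Rearranging gives $m \leq k + d - \log(1/\eps) + \log(1/\beta) + O(1)$, as desired.

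The main technical obstacle will be verifying the generalized Pinsker inequality for $\beta \in (0,1)$: standard Pinsker covers only $\beta = 1$, and the monotonicity of $\D_\beta$ in $\beta$ that drove the first reduction goes the wrong way for smaller $\beta$, so one needs a genuinely different argument (rather than inheriting it from KL). Fortunately such inequalities are well known in the $f$-divergence literature (e.g.~Gilardoni), and once the inequality is in hand both claims follow by feeding the derived parameters into existing disperser and total variation extractor lower bounds with no further extractor-theoretic work required.
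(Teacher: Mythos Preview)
Your proposal is correct and matches the paper's proof essentially step for step: reduce to a $\D_0$-extractor via monotonicity, invoke \cref{lem:disperser_d0} and the Radhakrishnan--Ta-Shma disperser bounds for the main claim, and for the furthermore use Gilardoni's generalized Pinsker inequality to pass to a total variation extractor and apply the Radhakrishnan--Ta-Shma entropy-loss bound. The only cosmetic difference is that the paper cites Gilardoni directly rather than framing the inequality as an obstacle to verify.
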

\begin{proof}
	Since $D_\beta$ is nondecreasing in $\beta$ we have that $\Ext$ is a
	$(k, \eps)$ extractor for $\D_0$, and thus by \cref{lem:disperser_d0}
	it is a $(k, 1 - 2^{-\eps})$ disperser. Then the disperser seed length
	lower bound of Radhakrishnan and Ta-Shma \cite{rad_ta-_bounds_2000}
	tells us that $d\geq \log(n - k) + \log(1/(1-2^{-\eps})) - O(1) \geq
	\log(n-k) + \log(1/\eps) - O(1)$ and $m \leq k + d -
	\log\log(1/(1-2^{-\eps})) + O(1) \leq k + d - \log\log(1/\eps) +
	O(1)$.

	For the other entropy loss lower bound, we use Gilardoni's
	\cite{gilardoni_pinsker_2010} generalization of Pinsker's inequality,
	which shows in particular that $\tv(P, U_m) \leq \sqrt{\ln 2/(2\beta)
	\cdot \D_\beta\diver P{U_m}}$. Thus, $\Ext$ is also a $(k,
	\sqrt{\eps\cdot \ln 2/(2\beta)})$ total variation extractor, and if
	$\sqrt{\eps\cdot \ln 2/(2\beta)} \leq 1/2$ (equivalently $\eps \leq
	\beta/(2\ln 2)$) then the \cite{rad_ta-_bounds_2000} total variation
	extractor entropy loss lower bound implies that $m\leq k + d -
	2\log(1/\sqrt{\eps\cdot \ln 2/(2\beta)}) + O(1) \leq k + d -
	\log(1/\eps) + \log(1/\beta) + O(1)$.
\end{proof}
\begin{rmrk}
	For the case of $0 < \beta < 1$, we do not know whether the entropy
	loss lower bound of \cref{prop:small_renyi_lb} is tight.
\end{rmrk}

It is well-known that $\ell_2$-extractors (which are equivalent to
$\D_2$-extractors by \cref{defn:renyi_divergence}) require seed length
at least linear in $\min(n-k, m)$ (see e.g.~\cite[Problem
6.4]{vadhan_pseudorandomness_2012}). We generalize this to give a linear
seed length lower bound on $D_\beta$ extractors for all $\beta > 1$, in
the regime of constant $\eps$, improving on the logarithmic lower bound
given by \cref{prop:small_renyi_lb}.
\begin{thm}\label{prop:renyi_lb}
	Let $\Ext:\zo^n \times \zo^d \to \zo^m$ be a $(k, 0.99)$
	$\D_{1+\alpha}$-extractor for $\alpha > 0$.  Then $d \geq
	\min\setd[\big]{(n - k - 3)\cdot \alpha, (m-2)\cdot \alpha/(\alpha +
	1)}$.
\end{thm}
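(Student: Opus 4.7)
The plan is to prove the contrapositive: assume $d < (n-k-3)\alpha$ and $d < (m-2)\alpha/(\alpha+1)$, and exhibit a min-entropy $k$ source $X$ on $\zo^n$ whose output satisfies $\D_{1+\alpha}\diver{\Ext(X, U_d)}{U_m} > 0.99$. The two bounds inside the min correspond to two qualitatively different source constructions: one for the regime where $k$ is far from $n$ and one for when $k$ is close to $n$ (where the first bound is vacuous).

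The key technical ingredient is a H\"older-based generalization of the support lower bound: for any probability distribution $P$ on $\zo^m$ and any set $T\subseteq \zo^m$ of size $2^{m-c}$,
\[
	\D_{1+\alpha}\diver{P}{U_m} \geq c + \frac{1+\alpha}{\alpha}\log \PR[y\sim P]{y\in T},
\]
which reduces to $\D_{1+\alpha}\diver{P}{U_m}\geq m - \log\abs{\Supp P}$ when $T=\Supp(P)$. Thus, ruling out $(k, 0.99)$-extraction reduces to finding a $k$-source whose output mass on some small set $T$ is significantly larger than the uniform baseline $\abs T/2^m$.

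For the bound $d\geq (n-k-3)\alpha$ I would use a flat source: fix any seed $s^*$; by pigeonhole on the fibers of $\Ext(\cdot,s^*):\zo^n\to\zo^m$ (which have average size $2^{n-m}$), one can aggregate enough of the heaviest fibers to cover $2^k$ inputs whose $\Ext(\cdot,s^*)$-image lies in a set $T$ of size about $2^{m-(n-k)}$. Taking $X$ flat on these inputs makes $\PR{\Ext(X, U_d)\in T}\geq 1/2^d$ from seed $s^*$ alone, and the H\"older bound then gives $\D_{1+\alpha}\diver{\Ext(X, U_d)}{U_m}\geq (n-k) - (1+\alpha)d/\alpha$. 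For the bound $d\geq (m-2)\alpha/(\alpha+1)$ I would instead use a nearly-uniform source $X = (1-2^{-k})U_n + 2^{-k}\delta_{x^*}$ for a carefully chosen $x^*\in \zo^n$; this has $\minent(X)\geq k - O(1)$, and since its output inherits a $2^{-k}$-scaled copy of $\Ext(x^*, U_d)$ which is supported on at most $2^d$ points, H\"older applied directly gives $\sum_y P_y^{1+\alpha} \geq 2^{-k(1+\alpha) - d\alpha}$, hence $\D_{1+\alpha}\diver{\Ext(X, U_d)}{U_m} \geq m - k(1+\alpha)/\alpha - d$.

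The main obstacle will be tightening these bounds to exactly the stated form: the naive H\"older manipulation in the first case produces a bound of order $(n-k)\alpha/(\alpha+1)$ rather than the stronger $(n-k-3)\alpha$, and closing this extra $(1+\alpha)$ factor seems to require a sharper packing argument (perhaps combining contributions from many seeds rather than one) or an averaging step exploiting that $\Ext$ is simultaneously a $(n, 0.99)$-extractor on $U_n$, which controls the mass of $T$ under the other seeds. I would also need to handle the boundary case $k > n - m$, where the heaviest-fiber pigeonhole degenerates and the argument must use fibers of several seeds together.
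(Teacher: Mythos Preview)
Your H\"older lower bound
\[
	\D_{1+\alpha}\diver{P}{U_m} \geq \log\frac{2^m}{\abs T} + \frac{1+\alpha}{\alpha}\log \PR[y\sim P]{y\in T}
\]
is exactly the right tool, and the paper uses it too. The gap is in how you try to produce the source $X$ and the set $T$.

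Your plan to handle the two terms of the $\min$ with \emph{separate} constructions cannot work as stated. You correctly note that the single-seed fiber argument yields only $d \geq (n-k)\alpha/(1+\alpha)$ rather than $(n-k-3)\alpha$; this is not a matter of tightening constants but a genuine $(1+\alpha)$-factor shortfall, and it cannot be closed without invoking the hypothesis $d < (m-2)\alpha/(\alpha+1)$. Conversely, your second construction $X = (1-2^{-k})U_n + 2^{-k}\delta_{x^*}$ gives $\D_{1+\alpha}\diver{\Ext(X,U_d)}{U_m} \gtrsim m - k(1+1/\alpha) - d$, which is only positive when $k$ is small, so it does not establish $d \geq (m-2)\alpha/(\alpha+1)$ for the range of $k$ where the first construction fails. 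The contrapositive gives you \emph{both} hypotheses simultaneously, and the proof must exploit them together.

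The paper does this with a single construction. View $\Ext$ as a bipartite graph with left side $\zo^n$, right side $\zo^m$, and left-degree $D = 2^d$. Greedily build $T\subseteq\zo^m$: repeatedly add the right vertex of current highest degree, delete it and its left neighbors, and stop once at least $N/(8D^{1/\alpha})$ left vertices have been deleted. At each step the average right degree is at least $ND/(2M)$, so the number of iterations is at most $\lceil M/(4D^{1+1/\alpha})\rceil \leq M/(2D^{1+1/\alpha})$; here the bound $M \geq 4D^{1+1/\alpha}$ (from the second hypothesis) is essential both for the ceiling and to make $\abs T$ small, while the bound $N/(8D^{1/\alpha})\geq K$ (from the first hypothesis) ensures the deleted set has size at least $K$. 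Taking $X$ uniform on the deleted left vertices gives $\PR{\Ext(X,U_d)\in T}\geq 1/D$ (every $x$ in the support has some seed landing in $T$), and your H\"older bound with $\abs T \leq M/(2D^{1+1/\alpha})$ then gives $\D_{1+\alpha}\diver{\Ext(X,U_d)}{U_m} \geq 1 > 0.99$. The ``sharper packing argument combining many seeds'' you were reaching for is precisely this greedy covering, but it only goes through under both assumptions at once.
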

\begin{proof}
	We follow the strategy suggested by Vadhan \cite[Problem
	6.4]{vadhan_pseudorandomness_2012}, and view $\Ext$ as a bipartite
	graph with $N = \zo^n$ left-vertices, $M = \zo^m$ right-vertices, and
	$D = 2^d$ edges per left-vertex given by $E = \setd{(x\in \zo^n, y\in
		\zo^m) \given \exists s\in \zo^d: \Ext(x, s) = y}$.

	Assume for the sake of contradiction that $d \leq \alpha/(\alpha + 1)
	\cdot (m - 2)$ and $d\leq \alpha(n - k - 3)$, so that $M \geq
	4D^{1+1/\alpha}$ and $N/(8D^{1/\alpha}) \geq K$.
	Now, we claim there exists a set $T\subseteq \zo^m$ of size at most
	$M/(2D^{1+1/\alpha})$ such that $X = \setd{x\in \zo^n \given \exists
	s\in \zo^d \text{ s.t }\Ext(x, s)\in T}$ has size at least
	$N/(8D^{1/\alpha})\geq K$. This follows from the following iterative
	procedure: until $\abs{X} \geq N/(8D^{1/\alpha})$, choose the vertex
	$y\in \zo^m$ of highest degree, add it to $T$, and remove $y$ and its
	neighbors from the graph (the neighbors go in $X$). Then at each step
	we will add to $X$ a number of vertices at least the average degree
	\[
		\frac{(N - \abs{X})\cdot D}{M - \abs{T}}
		\geq \frac{(N - N/(8D^{1/\alpha}))\cdot D}{M}
		\geq \frac{ND}{2M},
	\]
	so that the size of $T$ will be at most $\ceil{N/(8D^{1/\alpha}) \cdot
	2M/ND} = \ceil{M/(4D^{1+1/\alpha})} \leq M/(2D^{1+1/\alpha})$ as
	desired. Now, since $X$ has size at least $K$ and $\Ext$ is a $(k,
	0.99)$ $\D_{1+\alpha}$-extractor, we have that
	\begin{align*}
		0.99
		&\geq \D_{1+\alpha}\diver{\Ext(U_X, U_d)}{U_m}\\
		&= \frac{1}{\alpha}\log\of{\sum_{y\in \zo^m} \frac{\PR{\Ext(U_X, U_D) = y}^{1+\alpha}}{2^{-m\alpha}}}\\
		&\geq \frac{1}{\alpha}\log\of{M^{\alpha}\sum_{y\in T} \PR{\Ext(U_X, U_D) = y}^{1+\alpha}}\\
		&\geq \frac{1}{\alpha}\log\of{M^{\alpha}\cdot \abs{T}^{-\alpha}\cdot \of{\sum_{y\in T}\PR{\Ext(U_X, U_d) = y}}^{1+\alpha}}\tag{By H\"older's inequality}\\
		&\geq \frac{1}{\alpha}\log\of{M^{\alpha} \cdot (M/(2D^{1+1/\alpha}))^{-\alpha} \cdot (1/D)^{1+\alpha}}
		= 1
		\tag{By definition of $T$}
	\end{align*}
	which is a contradiction, as desired.
\end{proof}
We can also use this lower bound to get a similar lower bound for
$d_{\ell_{1+\alpha}}$-extractors for all $\alpha > 0$, though in this
case the lower bound applies up to an error threshold that depends on
$\alpha$.
\begin{cor}\label{cor:l1pa-lb}
	Let $\Ext:\zo^n \times \zo^d \to \zo^m$ be a $\of{k, \eps_{\alpha} \cdot
	2^{-m\alpha/(1+\alpha)}}$ extractor for $d_{\ell_{1+\alpha}}$ where
	$\alpha > 0$ and $\eps_\alpha = (2/3) \cdot \alpha/(\alpha + 1)$.  Then
	$d \geq \min\setd[\big]{(n - k - 3)\cdot \alpha, (m-2)\cdot
	\alpha/(\alpha + 1)}$.
\end{cor}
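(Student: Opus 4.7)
The plan is to reduce this to the R\'enyi divergence lower bound in \cref{prop:renyi_lb} by showing that a $(k, \eps_\alpha \cdot 2^{-m\alpha/(1+\alpha)})$ extractor for $d_{\ell_{1+\alpha}}$ is automatically a $(k, 0.99)$ extractor for $\D_{1+\alpha}$; the claimed seed length bound then follows immediately from that proposition.

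The key step is a quantitative relation between $\ell_{1+\alpha}$-closeness to uniform and $\D_{1+\alpha}$-closeness, obtained via the triangle inequality for the $\ell_{1+\alpha}$-norm (which is indeed a norm for $1+\alpha \geq 1$, as noted in \cref{exmp:lp}). For any probability distribution $P$ on $\zo^m$, I would write
\[
\of{\sum_x P_x^{1+\alpha}}^{1/(1+\alpha)} = \norm{P}_{1+\alpha} \leq d_{\ell_{1+\alpha}}(P, U_m) + \norm{U_m}_{1+\alpha} = d_{\ell_{1+\alpha}}(P, U_m) + 2^{-m\alpha/(1+\alpha)},
\]
raise to the $(1+\alpha)$-th power, multiply by $2^{m\alpha}$, and take $\log$ to translate into the definition of $\D_{1+\alpha}$ from \cref{defn:renyi_divergence}. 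Applied to $P = \Ext(X, U_d)$ for any $X$ with $\minent(X) \geq k$, together with the hypothesized $\ell_{1+\alpha}$ bound, this yields
\[
\D_{1+\alpha}\diver{\Ext(X, U_d)}{U_m} \leq \frac{1+\alpha}{\alpha} \log\of{1 + \eps_\alpha}.
\]

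It remains to check that the specific choice $\eps_\alpha = (2/3) \cdot \alpha/(\alpha + 1)$ drives this below $0.99$. Using $\log_2(1+x) \leq x/\ln 2$, the bound becomes $\tfrac{1+\alpha}{\alpha} \cdot \tfrac{\eps_\alpha}{\ln 2} = \tfrac{2}{3 \ln 2} \approx 0.962 < 0.99$, so $\Ext$ indeed satisfies the hypothesis of \cref{prop:renyi_lb}, completing the reduction. I do not anticipate a real obstacle here; the one tempting misstep is to try to bound $\sum_x P_x^{1+\alpha} - 2^{-m\alpha}$ by expanding $P_x^{1+\alpha}$ pointwise around $2^{-m}$, which runs into sign and convexity issues for non-integer $\alpha$, but the $\ell_{1+\alpha}$-triangle inequality bypasses this entirely.
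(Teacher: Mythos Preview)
Your proposal is correct and follows essentially the same route as the paper: both arguments use the triangle inequality for the $\ell_{1+\alpha}$-norm to bound $\sum_x P_x^{1+\alpha}$ by $\bigl(d_{\ell_{1+\alpha}}(P,U_m)+2^{-m\alpha/(1+\alpha)}\bigr)^{1+\alpha}$, translate this into a bound $\D_{1+\alpha}\diver{P}{U_m}\leq \tfrac{1+\alpha}{\alpha}\log(1+\eps_\alpha)$, verify that the particular $\eps_\alpha$ makes this at most $0.99$, and then invoke \cref{prop:renyi_lb}. The only cosmetic difference is that the paper checks $\eps_\alpha \leq 2^{0.99\,\alpha/(\alpha+1)}-1$ directly, whereas you use $\log_2(1+x)\leq x/\ln 2$ to get $\tfrac{2}{3\ln 2}<0.99$; these are equivalent.
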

\begin{proof}
	Note that the proof of \cref{prop:renyi_lb} gave a lower bound on the
	sum $\sum_{y\in \zo^m} P_y^{1+\alpha}$ where $P = \Ext(U_X, U_d)$,
	whereas $d_{\ell_{1+\alpha}}(P, U_m)^{1+\alpha} = \sum_{y\in\zo^m}
	\abs{P_y - 2^{-m}}^{1+\alpha}$.  For $\ell_2$ these can be related
	without any loss, but in general we can use the triangle inequality to
	get
	\begin{align*}
		\D_{1+\alpha}\diver P{U_m}
		&\leq
		\frac{1}{\alpha} \cdot \log \of{2^{m\alpha}
			\cdot \of{d_{\ell_{1+\alpha}}(P, U_m)
		+ 2^{-m\alpha/(\alpha+1)}}^{1+\alpha}}
	\end{align*}
	so that if $d_{\ell_{1+\alpha}}(P, U_m) \leq  \eps_{\alpha} \cdot
	2^{-m\alpha/(1+\alpha)}$ where $\eps_\alpha = (2/3) \cdot
	\alpha/(\alpha + 1) \leq 2^{0.99 \cdot \alpha/(\alpha+1)} - 1$, then
	$\D_{1+\alpha}\diver P{U_m} \leq 0.99$, and we conclude by
	\cref{lem:divergence_bounds,prop:renyi_lb}.
\end{proof}

\subsection{Non-explicit construction}
\label{sec:nonexplicit}

In this section, we show non-constructively the existence of
KL-extractors matching the lower-bound of \cref{prop:small_renyi_lb}
and in particular implying the optimal parameters of standard extractors
for total variation distance. Formally, we will prove:

\begin{thm}\label{thm:kl_nonexplicit}
	For every $n\in\mathbb{N}$, $k\leq n$, and $1>\eps>0$ there is an
	average-case (respectively strong average-case) $(k,\eps)$
	KL-extractor $\Ext:\zo^n \times \zo^d \to \zo^m$ with seed length
	$d = \log(n - k + 1) + \log(1/\eps) + O(1)$ and output length $m = k +
	d - \log(1/\eps) + O(1)$ (respectively $m = k - \log(1/\eps) - O(1)$).
\end{thm}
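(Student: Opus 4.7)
The approach is the probabilistic method: sample $\Ext:\zo^n\times\zo^d\to\zo^m$ uniformly at random (each value $\Ext(x,s)$ independently uniform on $\zo^m$) and show that with positive probability $\Ext$ is a $(k,\eps)$ KL-extractor at the claimed parameters. Write $K \defeq 2^k$, $D \defeq 2^d$, $M \defeq 2^m$, and $N \defeq KD$. Since KL divergence is convex in its first argument, it suffices to verify the extractor property on flat sources $X = U_S$ with $|S| = K$, of which there are at most $\binom{2^n}{K} \leq 2^{K(n-k) + O(K)}$. For any such source, the random output $\Ext(U_S, U_d)$ is exactly the empirical distribution $\hat P$ on $\zo^m$ of $N$ i.i.d.\ uniform samples, one per pair $(x,s)\in S\times\zo^d$, since the family $\set{\Ext(x,s)}$ is jointly independent and uniform over $\zo^m$.

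To bound $\KL\diver{\hat P}{U_m}$ I use the pointwise inequality $\ln(1+t)\leq t$ to reduce to chi-squared, $\KL\diver{\hat P}{U_m} \leq \chi^2(\hat P, U_m)/\ln 2$, where $\chi^2(\hat P, U_m) \defeq (M/N^2)\sum_y (N_y - N/M)^2$ and $N_y$ counts the pairs $(x,s)\in S\times\zo^d$ with $\Ext(x,s) = y$. A direct second-moment computation yields $\E[\Ext]{\chi^2(\hat P, U_m)} = (M-1)/N$, which is at most $(\ln 2)\eps/2$ (giving expected KL at most $\eps/2$) as soon as $N \geq 2M/(\eps\ln 2)$; this is equivalent to the target $m \leq k + d - \log(1/\eps) + O(1)$. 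To upgrade to a high-probability bound I pass to the Poisson model (at the cost of a polynomial factor in $N$ on tail probabilities), in which the $N_y$ become independent Poisson$(N/M)$ with $N/M \geq 2/(\eps\ln 2) \geq 1$; each centered squared count $(N_y - N/M)^2$ is sub-exponential with variance $O((N/M)^2)$, and a Bernstein-type inequality on a sum of $M$ independent sub-exponentials gives $\PR[\Ext]{\chi^2(\hat P, U_m) > (\ln 2)\eps} \leq \exp(-\Omega(M))$. Choosing $m \geq k + \log(n-k+1) + C$ for a sufficiently large absolute constant $C$ pushes this probability below $2^{-K(n-k) - \Omega(K)}$, and the union bound over the $\binom{2^n}{K}$ flat sources yields the non-strong conclusion with $d = \log(n-k+1) + \log(1/\eps) + O(1)$ and $m = k + d - \log(1/\eps) + O(1)$.

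For the strong version I run the same analysis on the per-seed quantity: for each fixed $s\in\zo^d$, $\Ext(U_S, s)$ is the empirical of only $K$ i.i.d.\ uniform samples, so $\E[\Ext,s]{\chi^2(\Ext(U_S,s), U_m)} = (M-1)/K$, matching the strong output length $m = k - \log(1/\eps) - O(1)$, and concentration of $\E[s]{\chi^2}$ over the random $\Ext$ follows from the sub-exponential bound applied to the $DM$ independent Poisson squared residuals across all $(s,y)$. For the average-case guarantee I repeat the argument simultaneously at every min-entropy level $k - t$ for $0 \leq t \leq k$ with error $2^{t+1}\eps$, an additional $(k+1)$-fold union bound that is absorbed since each per-level failure probability is already $\exp(-\Omega(M))$, and then invoke \cref{lem:graceful_ext_is_average} to convert gracefully-decaying error into the average-case guarantee with a constant loss in error. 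The principal technical obstacle is the concentration step: a McDiarmid argument with bounded differences of order $\log(N)/N$ yields only polylogarithmically weaker tails and cannot sustain the $2^{K(n-k) + O(K)}$-size union bound, so the sub-exponential structure of squared Poisson residuals (equivalently, the chi-squared-like variance of the empirical divergence) is what lets the tail exponent reach $\Omega(M)$.
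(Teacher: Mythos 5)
Your outline matches the paper's high-level structure — probabilistic method, reduction to flat sources by convexity, a per-source tail bound followed by a union bound over sources and min-entropy levels, and \cref{lem:graceful_ext_is_average} for the average-case guarantee — but the per-source tail bound, the heart of the argument, does not hold as written. The paper uses a sharp moment generating function bound on the empirical KL divergence itself (\cref{lem:random_ext_kl_bound}, relying on a concentration result from~\cite{agrawal_concentration_2019}), and the chi-squared route you propose cannot recover it.

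The gap: after Poissonizing, the centered squares $(N_y-\lambda)^2$ of independent Poisson$(\lambda)$ counts are \emph{not} sub-exponential; $\E{e^{s(N_y-\lambda)^2}}$ diverges for every $s>0$, since the Poisson tail $\exp(-r\log(r/e\lambda))$ at $N_y=\lambda+r$ is overwhelmed by $\exp(sr^2)$ as $r\to\infty$. Bernstein's inequality for sub-exponentials therefore does not apply, and the actual upper tail of the Poissonized $\chi^2$ at twice its mean is governed by the event that a \emph{single} $N_y$ is of order $\lambda+\sqrt{N}$ — this one bin already contributes $\Theta(M/N)\approx\E{\chi^2}$ — which occurs with probability roughly $\exp(-\sqrt{N}\log N)$, not the $\exp(-\Omega(M))$ your argument requires. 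Since the union bound forces $M$ to be of order $K(n-k+1)$ while $N$ is only $K$ (or $KD$), the exponent $\sqrt{N}\log N$ is far smaller than $M$ once $K$ is large, so the union bound over the $2^{\Theta(K(n-k))}$ flat sources fails; de-Poissonization loses only a $\poly(N)$ factor and cannot rescue this. The deeper point is that the inequality $\KL\leq\chi^2/\ln 2$ discards exactly the structure that makes the empirical KL concentrate: the same anomalous count $N_y\approx\lambda+\sqrt{N}$ contributes only $\frac{N_y}{N}\log\frac{N_yM}{N}=O\big((\log N)/\sqrt{N}\big)$ to the KL, which is negligible. KL's logarithmic (rather than quadratic) penalty on overfull bins is precisely what gives it a finite, controllable mgf and hence the $1\cdot\log(1/\eps)$ seed length. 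Replacing the Poissonization/Bernstein step with the mgf bound of \cref{lem:random_ext_kl_bound} repairs the argument; the remaining steps then track the paper's proof (which proves the strong case and derives the non-strong parameters via \cref{lem:kl_strong_prepend} rather than directly analyzing the $KD$-sample empirical as you do, a cosmetic difference).
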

\begin{rmrk}\label{rmrk:epsgg1}
	For $\eps \gg 1$ the above parameters are not necessarily optimal,
	and it would be interested to get matching upper and lower bounds
	in this regime of parameters.
\end{rmrk}

We will prove \cref{thm:kl_nonexplicit} using the probabilistic method,
analogously to Zuckerman \cite{zuckerman_randomnessoptimal_1997} or
Radhakrishnan and Ta-Shma \cite{rad_ta-_bounds_2000} for total variation
extractors. However, rather than using Hoeffding's inequality, we use
the following lemma:
\begin{lem}\label{lem:random_ext_kl_bound}
	Let $X$ be uniform over a subset of $\zo^n$ of size $K$. Then if
	$\Ext:\zo^n \times \zo^d \to \zo^m$ is a random function, it holds
	for every $\eps > 0$ that
	\[
		\PR[\Ext]{\E[s\sim U_d]{\KL\diver{\Ext(X, s)}{U_m}} > \eps} \leq
		2^{MD-KD\eps/3}
	\]
	where $D = 2^d$ and $M = 2^m$.
\end{lem}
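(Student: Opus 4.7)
The plan is to use the probabilistic method via a Chernoff bound combined with the method of types, exploiting the independence structure of a uniformly random $\Ext$. Since the values $\{\Ext(x,s) : (x,s)\in\Supp(X)\times\zo^d\}$ are $KD$ i.i.d.\ uniform samples from $\zo^m$, for each fixed seed $s$ the distribution $\Ext(X,s)$ coincides with the empirical distribution $\hat P_s$ of $K$ i.i.d.\ uniform samples from $\zo^m$, and these empirical distributions are independent across the $D$ seeds.

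Applying the exponential Markov inequality at parameter $\lambda = K/3$---chosen precisely to produce the $1/3$ factor in the target bound---and using independence of the $\hat P_s$, yields
\[
\PR[\Ext]{\sum_{s\in\zo^d}\KL\diver{\hat P_s}{U_M} > D\eps}\ \leq\ 2^{-KD\eps/3}\cdot\bigl(\E{2^{(K/3)\KL\diver{\hat P}{U_M}}}\bigr)^D,
\]
reducing the lemma to establishing the per-seed bound $\E{2^{(K/3)\KL\diver{\hat P}{U_M}}}\leq 2^M$. By the method of types, the estimate $\binom{K}{KQ_1,\dots,KQ_M}\leq 2^{KH(Q)}$ gives $\Pr[\hat P = Q]\leq 2^{-K\KL\diver{Q}{U_M}}$ for each type $Q$, so this moment generating function is bounded by $\sum_Q 2^{-(2K/3)\KL\diver{Q}{U_M}}$.

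The main obstacle is bounding this last sum by $2^M$ without incurring a spurious $\polylog(K)$ factor: a naive count of $(K+1)^M$ types would lose an extra $\log(K+1)$ per seed and inflate the final exponent from $MD$ to $MD\log(K+1)$. I would split the sum into (i) a bulk contribution from types near $U_M$, where the quadratic expansion $\KL\diver{Q}{U_M}\approx(M/(2K^2\ln 2))\sum_y(KQ_y - K/M)^2$ reduces the sum to a Gaussian integral of order $(3/2)^{(M-1)/2}$, comfortably below $2^M$, and (ii) a tail contribution from types with large $\KL\diver{Q}{U_M}$, where the exponential $2^{-(2K/3)\KL}$ factor dominates any polynomial-in-$K$ type count. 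Equivalently, a refined Stirling estimate on $\binom{K}{KQ_1,\dots,KQ_M}$ retaining the $1/\sqrt{\prod_y (KQ_y + 1)}$ correction cancels the polynomial-in-$K$ loss and yields the desired $2^M$ bound directly.
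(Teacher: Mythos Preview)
Your high-level strategy matches the paper's exactly: apply the exponential Markov inequality at parameter $t=K/3$, use independence across seeds to factor the MGF, and reduce to a per-seed bound $\E\bigl[2^{(K/3)\KL\diver{\hat P}{U_m}}\bigr]\leq 2^M$ for $\hat P$ the empirical distribution of $K$ i.i.d.\ uniform samples. The difference is entirely in how that per-seed MGF bound is obtained.

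The paper does not derive it via the method of types. It instead invokes a closed-form inequality from the author's companion work \cite{agrawal_concentration_2019}:
\[
\E\bigl[2^{t\,\KL\diver{\hat P}{U_m}}\bigr]\ \le\ \Bigl(\tfrac{2^{t/K}}{1-t/K}\Bigr)^{M-1}\qquad(0\le t<K),
\]
which at $t=K/3$ evaluates to $\bigl(\tfrac{3}{2}\cdot 2^{1/3}\bigr)^{M-1}<2^{M}$ immediately, with no type counting and no $K$-dependence to cancel.

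Your type-based route is where the proposal becomes incomplete. You correctly flag that the crude upper bound $\Pr[\hat P=Q]\le 2^{-K\KL\diver{Q}{U_m}}$ together with the $(K+1)^M$ type count loses a $\log K$ factor per seed. But both proposed repairs are only sketched: the bulk/tail split relies on a quadratic \emph{lower} bound $\KL\diver{Q}{U_m}\gtrsim (M/\ln 2)\lVert Q-U_m\rVert_2^2$, which is only a local Taylor expansion and not a global inequality (the $\chi^2$-divergence \emph{upper}-bounds KL, not the reverse), so you would need to justify the region where it is valid and match it to the tail cutoff; and the refined-Stirling route must track the $(2\pi)^{-(M-1)/2}$ constants, handle boundary types with some $N_y=0$ where Stirling fails, and show the resulting sum is uniformly $\le 2^M$ in $K$. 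These are precisely the technical details that the cited MGF inequality is designed to encapsulate. Your outline is plausible, but as written it defers the genuinely hard step rather than resolving it; the paper's proof is complete because it imports that step as a lemma.
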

\begin{rmrk}
	For total variation extractors, the analogous bound is
	\[
		\PR[\Ext]{\tv\of[\big]{\of{U_d, \Ext(X, U_d)}, \of{U_d, U_m}} >
		\eps} \leq 2^{MD-2KD\eps^2/\ln 2}.
	\]
	One sees that the bounds are very similar, except the KL divergence
	version depends on $\eps$ rather than $\eps^2$. For the regime where
	$\eps < 1$ the linear dependence is preferable, and is responsible for
	the $1\cdot \log(1/\eps)$ seed length for KL-extractors compared to
	the $2\cdot \log(1/\eps)$ seed length for total variation extractors.
\end{rmrk}
\begin{proof}[Proof of \cref{lem:random_ext_kl_bound}]
	Note that for each $s\in \zo^d$ and fixed $\Ext$, the random variable
	$\Ext(X, s)$ is uniform over the multiset $\setof{\Ext(x,s)}{x\in
	\Supp(X)}$. Hence, since $\Ext$ is a random function, this multiset is
	distributed exactly as taking $K$ iid uniform samples from $\zo^m$, so
	we wish to bound the KL divergence between this empirical distribution
	and the true distribution. For this, the author
	\cite{agrawal_concentration_2019} gave the moment generating function
	bound
	\[
		\E[\Ext]{2^{t\cdot {\KL\diver{\Ext(X, s)}{U_m}}}}
		\leq
		\of{\frac{2^{t/K}}{1-t/K}}^{M-1}
	\]
	for every $0 \leq t < K$, which for $t = K/3$ is at most $2^M$.
	Then since $\Ext(X, s)$ is independent across $s\in \zo^d$, we
	have
	\begin{align*}
		\PR[\Ext]{\E[s\sim U_d]{\KL\diver{\Ext(X, s)}{U_m}} > \eps}
		&=
		\PR[\Ext]{2^{K/3\cdot \sum_{s\in \zo^d}{\KL\diver{\Ext(X, s)}{U_m}}} > 2^{K/3\cdot D\eps}}
		\\
		&\leq 2^{-KD\eps/3}\cdot \prod_{i=1}^{D}2^M\tag*{\qedhere}
	\end{align*}
\end{proof}
We can now prove \cref{thm:kl_nonexplicit}:
\begin{proof}[Proof of \cref{thm:kl_nonexplicit}.]
	We will show that a random function $\Ext:\zo^n\times\zo^d \to \zo^m$
	is a strong average-case $(k,\eps)$ KL-extractor with positive
	probability, the non-strong version then follows from
	\cref{lem:kl_strong_prepend}.  By \cref{lem:graceful_ext_is_average},
	it is enough to prove that $\Ext$ is a strong $(k-t, 2^{t+1}/3\cdot
	\eps)$ KL-extractor for every $t \geq 0$. To reduce the range of
	$t$ we need to consider, note that it suffices to be a
	$(\log\floor{2^{k-t}}, 2^{t+1}/3\cdot \eps)$ extractor for every
	$t\geq 0$, so that by rounding down it is enough to be a $(k - t,
	2^t/3\cdot \eps)$ strong KL-extractor for each $t\geq 0$ such that
	$2^{k-t}$ is an integer.

	Now, consider a fixed $t\geq 0$ such that $2^{k-t}$ is an integer.
	Since the KL divergence is convex in its first argument and all
	distributions of min-entropy at least $k-t$ are convex combinations of
	``flat'' distributions which are uniform over a set of size $2^{k-t}$
	(Chor and Goldreich \cite{cho_gol_unbiased_1988}), it suffices to
	analyze the behavior of $\Ext$ on such distributions. Then for every
	subset $X\subseteq \zo^n$ of size $2^{k-t}$,
	\cref{lem:random_ext_kl_bound} tells us that
	\[
		\PR[\Ext]{\E[s\sim U_d]{\KL\diver{\Ext(U_X, s)}{U_m}} > 2^t/3\cdot \eps} \leq
		2^{MD-2^{k-t}\cdot D\cdot (2^t/3\cdot \eps)/3}
		=
		2^{MD-KD\eps/9}
	\]
	where $M = 2^m$, $D = 2^d$, and $K = 2^k$. There are $\sum_{j=0}^K
	\binom{N}{j}$ such subsets $X$ of $\zo^n$ for which we simultaneously need to
	establish that $\E[s\sim U_d]{\KL\diver{\Ext(U_X, s)}{U_m}} \leq
	2^t/3\cdot \eps$, so we have by a union bound that the probability that
	$\Ext$ is not a strong average-case $(k, \eps)$ KL-extractor is at most
	\begin{align*}
		2^{MD-KD\eps/9}\cdot \sum_{j=0}^K \binom{N}{j}
		&\leq 2^{MD-KD\eps/9}\cdot \of{\frac{Ne}{K}}^K
		= 2^{MD+K\log(Ne/K)-KD\eps/9}.
	\end{align*}
	Hence, as long as
	\begin{align*}
		MD &< \frac{KD\eps}{18} & K\log\of{\frac{Ne}{K}}&< \frac{KD\eps}{18}\\
		m &\leq k - \log(1/\eps) - O(1) & d &\geq \log(n-k+1) + \log(1/\eps) + O(1)
	\end{align*}
	we know that a random function is a strong average-case $(k, \eps)$
	KL-extractor with positive probability as desired.
\end{proof}

\section{Constructions of subgaussian samplers}
\label{sec:subgaussian_samplers}
\subsection{Subconstant \titleeps/ and \titledelta/}

The goal of this section is to establish the following theorem, which is
our explicit construction of subgaussian samplers with sample complexity
having no dependence on $m$, and with randomness complexity and sample
complexity matching the best-known $[0,1]$-valued sampler when $\eps$
and $\delta$ are subconstant (up to the hidden polynomial in the sample
complexity).
\begin{thm}\label{thm:subgaussian_sampler}
	For all $m\in \mathbb{N}$, $1>\eps,\delta > 0$, and $\alpha > 0$ there
	exists an explicit $(\delta, \eps)$ absolute averaging sampler
	(respectively strong absolute averaging sampler) for subgaussian and
	subexponential functions $\Samp:\zo^n \to \of{\zo^m}^D$ with sample
	complexity $D = \poly(\log(1/\delta), 1/\eps)$ and randomness
	complexity $n = m + (1+\alpha)\cdot \log(1/\delta)$ (respectively $n =
	m + (1+\alpha)\cdot\log(1/\delta) + 2\log(1/\eps) +
	O(1)$).
\end{thm}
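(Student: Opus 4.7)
The plan is to build the sampler by chaining three previously-established conversions: KL-extractor $\to$ subgaussian/subexponential extractor $\to$ one-sided sampler $\to$ absolute sampler. Concretely, I will take the KL-extractor promised by the formal version of \cref{intro_thm:high_min_entropy_kl} (namely \cref{thm:high_entropy_kl_extractor}) with an appropriately reduced error parameter, push it through Pinsker's inequality to get a subgaussian/subexponential extractor with the desired error, and finally apply the equivalence with samplers.

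First, I would set $\eps' \defeq c\eps^2$ for a small absolute constant $c$ (chosen so that the Pinsker-type bounds from \cref{lem:dg_vs_kl} yield $\sqrt{(\ln 2/2)\eps'}\leq \eps$ in the subgaussian regime $\eps' \leq 1/(2\ln 2)$, and so that the linear subexponential branch also gives an $\eps$ bound; the range $\eps \in [1/2, 1)$ is handled by simply taking $\eps'$ to be a sufficiently small constant since the theorem's asymptotic parameters are unaffected). I would then invoke \cref{thm:high_entropy_kl_extractor} to obtain an explicit $(k, \eps')$ KL-extractor $\Ext:\zo^n\times\zo^d\to\zo^m$ with $n = m + (1+\alpha)\log(1/\delta)$, $k = n-\log(1/\delta)$, and $d = O(\log\log(1/\delta) + \log(1/\eps'))=O(\log\log(1/\delta) + \log(1/\eps))$. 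Applying \cref{lem:dg_vs_kl} combined with \cref{lem:divergence_bounds} (noting that the function $x\mapsto \sqrt{(\ln 2/2)x}$ and its subexponential extension are increasing and concave on the relevant range), this $\Ext$ is automatically a $(k,\eps)$-extractor for both $\D^{\cG_{\zo^m}}$ and $\D^{\cE_{\zo^m}}$.

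Next I would apply \cref{thm:func_ext_are_samplers} with $\F = \cG_{\zo^m}$ and $\F = \cE_{\zo^m}$, producing the averaging sampler $\Samp(x)_i \defeq \Ext(x, i)$ with sample complexity $D = 2^d = \poly(\log(1/\delta), 1/\eps)$ and randomness $n = m + (1+\alpha)\log(1/\delta)$, which is a $(\delta,\eps)$ sampler for subgaussian and subexponential functions. To upgrade to an \emph{absolute} sampler as required by the theorem statement, I would invoke \cref{rmrk:symmetric_test_function_are_abs}: both $\cG_{\zo^m}$ and $\cE_{\zo^m}$ are symmetric by \cref{defn:subgaussian_distance}, so we obtain a $(2\delta, \eps)$-absolute sampler; the factor of $2$ is absorbed by replacing $\delta$ with $\delta/2$ at the start, which only changes the $\log(1/\delta)$ term by an additive constant.

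For the strong version with the slightly larger randomness $n = m + (1+\alpha)\log(1/\delta) + 2\log(1/\eps) + O(1)$, I would repeat the same plan but begin with a \emph{strong} $(k,\eps')$ KL-extractor of the same form, which can be obtained either as the strong version of \cref{thm:high_entropy_kl_extractor} or, equivalently, by applying the entropy-loss reduction of \cref{cor:rrv_entropy_loss} to a strong KL-extractor of slightly larger entropy loss; the extra $2\log(1/\eps)$ in the randomness precisely accounts for the optimal entropy loss of a KL-extractor with error $\eps' = \Theta(\eps^2)$, which is $\log(1/\eps') + O(1) = 2\log(1/\eps) + O(1)$. \cref{thm:func_ext_are_samplers} then yields the strong sampler. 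The only non-routine step in this plan is the KL-extractor construction itself, which is handled in \cref{thm:high_entropy_kl_extractor}; everything else is bookkeeping through the reductions already developed in the paper.
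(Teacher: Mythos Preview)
Your proposal is correct and follows essentially the same route as the paper: invoke \cref{thm:high_entropy_kl_extractor} with error $\Theta(\eps^2)$ and failure parameter $\delta/2$, pass to a $d_{\cE}$-extractor via \cref{lem:dg_vs_kl,lem:divergence_bounds}, convert to a one-sided sampler via \cref{thm:func_ext_are_samplers}, and upgrade to an absolute sampler using the symmetry of $\cE$ (\cref{rmrk:symmetric_test_function_are_abs}). The paper handles the strong case directly from the strong version of \cref{thm:high_entropy_kl_extractor} (where the $\log(1/\eps')=2\log(1/\eps)$ in $n$ already appears), so you do not need \cref{cor:rrv_entropy_loss} separately.
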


We will use essentially the same construction used for bounded samplers
in this regime, namely applying the Reingold, Wigderson, and Vadhan
\cite{rei_vad_wig_entropy_conference} zig-zag product for extractors to
combine the expander extractor of Goldreich and Wigderson
\cite{gol_wig_tiny_1997} and an extractor with logarithmic seed length.
However, as described in detail in \cref{sec:composition}, even the
basic composition used in this construction does not work for general
subgaussian extractors, so we will instead use the zig-zag product for
KL-extractors (\cref{cor:zigzag}) combining extractors for R\'enyi
divergences, specifically the $\D_2$-extractor from
\cref{cor:expander_extractor} and the KL-extractor from
\cref{cor:kl_guv_lessloss}, to get the following high-entropy
KL-extractor:
\begin{thm}\label{thm:high_entropy_kl_extractor}
	For all integers $m$ and $1>\alpha,\delta,\eps > 0$ there is an
	explicit average-case (respectively strong average-case) $(k, \eps)$
	KL-extractor $\Ext:\zo^n \times \zo^d \to \zo^m$ with $n = m +
	(1+\alpha)\log(1/\delta) - O(1)$ (respectively $n = m + (1 +
	\alpha)\cdot\log(1/\delta) + \log(1/\eps) + O(1)$), $k = n
	- \log(1/\delta)$, and $d = O_{\alpha}(\log(\log(1/\delta)/\eps))$.
\end{thm}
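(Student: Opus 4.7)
The plan is to apply the zig-zag product for KL-extractors (Corollary~\ref{cor:zigzag}) with three components. Let $\Delta \defeq \log(1/\delta)$ and let $C$ denote the fixed constant from Corollary~\ref{cor:expander_extractor} (arising from the Margulis--Gabber--Galil expander). I would take the outer extractor $\Ext_{out}$ to be the expander-walk $\D_2$-extractor of Corollary~\ref{cor:expander_extractor} with input and output length $m_{out} \defeq m - m''$ for an $m'' = \Theta(\Delta)$ to be specified, entropy deficiency $\Delta$, seed length $d_{out} = O(\Delta + \log(1/\eps))$, and injective waste function $\Waste_{out}(x,s) = s$ by Lemma~\ref{lem:expander_is_l2_ext}. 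I would take the inner $\Ext_{in}$ to be the strong average-case KL-extractor of Corollary~\ref{cor:kl_guv_lessloss} (or its non-strong counterpart from Corollary~\ref{cor:kl_guv} for the non-strong version of the theorem) on input of length $n' \defeq (1+\alpha)\Delta + m''$ with entropy deficiency $\Delta$, producing $d_{out}$ bits using a seed of length $O_\alpha(\log(\Delta/\eps))$; the trivial choice $\Waste_{in}(y,s)=(y,s)$ yields an injection with waste length $O(\Delta+\log(1/\eps))$. The waste extractor $\Ext_{waste}$ will be another instance of the same GUV-based family applied to the combined waste.

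The central technical step is the parameter balance for $m''$ and for the GUV entropy-loss parameters of the inner and waste extractors. Setting the total input length to $n = m_{out} + n' = m + (1+\alpha)\Delta$ forces $n' = (1+\alpha)\Delta + m''$, so the inner source has min-entropy $\alpha\Delta + m''$, and similarly the conditional min-entropy of the combined waste (after the outer has consumed $m_{out}$ bits from a joint source of min-entropy $m_{out} + n' - \Delta$) is at least $\alpha\Delta + m''$. I would balance the inner requirement of extracting $d_{out}$ bits against the waste extractor requirement of producing the $m''$ bits needed to recover the full output, and set the waste extractor's GUV parameter to $\Theta(\alpha/(C+\alpha))$; the algebra then forces $m'' = \Theta(C\Delta)$ and shows that both extractors simultaneously satisfy their requirements. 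For the non-strong version, replacing Corollary~\ref{cor:kl_guv_lessloss} by its non-strong analogue from Corollary~\ref{cor:kl_guv} eliminates the additive $\log(1/\eps)$ entropy loss at each level and yields the cleaner bound $n = m + (1+\alpha)\Delta - O(1)$ rather than $n = m + (1+\alpha)\Delta + \log(1/\eps) + O(1)$.

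Corollary~\ref{cor:zigzag} then directly produces the claimed KL-extractor with output length $m_{out} + m'' = m$, entropy deficiency $\Delta$, and seed length $O_\alpha(\log(\Delta/\eps))$ (summing the inner and waste extractor seed lengths, both of which depend on their input lengths only through $\log(\Delta/\eps)$). The main obstacle is the parameter balance in the previous paragraph. A naive single application of Theorem~\ref{thm:divergence_composition} (without waste re-extraction) would require the inner's min-entropy alone to exceed the outer's seed length $d_{out}$, forcing the input overhead to be at least $(1+C)\Delta$ rather than the desired $(1+\alpha)\Delta$; this fails whenever $\alpha < C$. The waste extractor is what breaks this barrier: by tuning its GUV parameter to $\Theta(\alpha/(C+\alpha))$, we trade the inner's unavoidable GUV entropy loss for additional output bits, and the quantitative balance achieves the stated $(1+\alpha)\Delta$ input overhead for any $\alpha > 0$, independent of $C$.
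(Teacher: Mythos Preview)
Your proposal is correct and follows essentially the same approach as the paper: apply the zig-zag product (Corollary~\ref{cor:zigzag}) with $\Ext_{out}$ the expander-walk $\D_2$-extractor of Corollary~\ref{cor:expander_extractor} on $m - d_{out}$ bits, and $\Ext_{in}$, $\Ext_{waste}$ both instances of the GUV-based KL-extractor of Corollary~\ref{cor:kl_guv_lessloss} with GUV parameter $\beta = \alpha/(\alpha+C)$, so that the waste output $m'' = d_{out}$ restores the full $m$ output bits. The only difference is that the paper proves the strong case first and then derives the non-strong case by the one-line trick $\Ext(x,(s,t)) = \Ext_{strong}((x,t),s)$ with $|t| = \log(1/\eps)+O(1)$, rather than re-running the zig-zag with non-strong inner components as you propose; both routes work.
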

\begin{proof}
	We prove the claim for strong extractors, for the non-strong claim one
	can simply define $\Ext(x, (s, t)) = \Ext_{strong}((x, t), s)$ where
	$t$ has length $\log(1/\eps) + O(1)$.

	By \cref{cor:expander_extractor}, there is a universal constant $C >
	0$ such that for $d_{out} = \ceil{C\log(1/(\delta\eps))} \leq
	C\log(1/\delta) + C\log(1/\eps) + 1$ there is an explicit average-case
	$(n_{out} - \log(1/\delta), \eps/4)$ $\D_2$-extractor
	$\Ext_{out}:\zo^{n_{out}}\times \zo^{d_{out}} \to \zo^{n_{out}}$ with
	$n_{out} = m - d_{out}$. Furthermore, $\Ext_{out}$ has the property
	that the function $\Waste_{out}(x, s) = s$ is such that $(\Ext_{out},
	\Waste_{out})$ is an injection.

	Let $k_{in}' = C\log(1/\delta)/(1-\beta)$, $k_{in}'' =
	(C+1)\log(1/\eps) + O(1)$, and $k_{in} = k_{in}' + k_{in}''$ for
	$0<\beta<1$ some parameter to be chosen later. Then by
	\cref{cor:kl_guv_lessloss}, there is an explicit $(k_{in}, \eps/4)$
	strong average-case KL-extractor $\Ext_{in}:\zo^{n_{in}} \times
	\zo^{d_{in}} \to \zo^{m_{in}}$ with $n_{in} = k_{in} +
	\log(1/\delta)$, $d_{in} = O_{\beta}(\log(n_{in}/\eps)) + O(k_{in}'')
	= O_{\beta}(\log(\log(1/\delta)/\eps))$, and $m_{in} = (1-\beta)
	k_{in}' + k_{in}'' -\log(1/\eps) - O(1) = d_{out}$.  Furthermore, the
	function $\Waste_{in}(x, s) = (x, s)$ is an injection.

	Furthermore, for $k_{waste} = (n_{out} + n_{in} - \log(1/\delta)) -
	n_{out} = n_{in} - \log(1/\delta) = k_{in}=k_{in}'+k_{in}''$, by
	\cref{cor:kl_guv_lessloss} there is also an explicit $(k_{waste},
	\eps/4)$ strong average-case KL-extractor
	$\Ext_{waste}:\zo^{d_{out}+n_{in}+d_{in}} \times \zo^{d_{waste}} \to
	\zo^{m_{waste}}$ such that $m_{waste} = d_{out}$ and $d_{waste} =
	O_{\beta}\of{\log\of{(d_{out}+n_{in}+d_{in})/\eps}} + O(k_{in}'') =
	O_{\beta}\of{\log(\log(1/\delta)/\eps)}$.

	Then by the zig-zag product for KL-extractors (\cref{cor:zigzag}),
	there is an explicit $(n_{out} + n_{in} - \log(1/\delta), \eps)$
	strong average-case KL-extractor $\Ext:\zo^{n_{out}+n_{in}} \times
	\zo^{d_{in}+d_{waste}} \to \zo^{n_{out} + m_{waste}}$, where we have
	\begin{align*}
		n_{out} + n_{in}
		&= \of{m - d_{out}} + \of[\Big]{\of[\big]{C\log(1/\delta)/(1-\beta) + (C+1)\log(1/\eps)
		+ O(1)} + \log(1/\delta)}\\
		&\leq m + \log(1/\delta) + \log(1/\eps) + \log(1/\delta)\cdot C\cdot \of[\big]{1/(1-\beta)-1} + O(1)\\
		d_{in} + d_{waste} &= O_{\beta}\of{\log(\log(1/\delta)/\eps)}\\
		n_{out} + m_{waste} &= \of{m-d_{out}} + d_{out} = m.
	\end{align*}
	Choosing $\beta = \alpha/(\alpha + C)$ so that $C\cdot
	\of[\big]{1/(1-\beta)-1}\leq \alpha $ gives the claim.
\end{proof}

We can now prove \cref{thm:subgaussian_sampler}.
\begin{proof}[Proof of \cref{thm:subgaussian_sampler}]
	Let $\Ext:\zo^n \times \zo^d \to \zo^m$ be the explicit
	$\of{n-\log\of[\big]{1/(\delta/2)}, \eps^2}$ KL-extractor (respectively strong
	KL-extractor) of \cref{thm:high_entropy_kl_extractor}, so that $d =
	O_{\alpha}\of{\log\log(1/\delta)/\eps}$ and $n = m + (1 +
	\alpha)\log(1/\delta)$ (respectively $n = m + (1 +
	\alpha)\log(1/\delta) + 2\log(1/\eps) + O(1)$).

	Then by \cref{lem:dg_vs_kl,lem:divergence_bounds}, $\Ext$ is also an
	$\of{n-\log\of[\big]{1/(\delta/2)}, \eps}$ $d_{\cE}$-extractor
	(respectively strong $d_{\cE}$-extractor), so by
	\cref{thm:func_ext_are_samplers} the function $\Samp:\zo^n \times
	\of{\zo^m}^D$ given by $\Samp(x)_i = \Ext(x, i)$ is an explicit
	$(\delta/2 ,\eps)$ sampler for $\cE$ (respectively strong sampler for
	$\cE$), and thus by symmetry of $\cE$ an explicit $(\delta, \eps)$
	absolute subexponential sampler (respectively absolute strong
	subexponential sampler) as desired.
\end{proof}

\subsection{Constant \titledelta/}
We recall from the introduction that the pairwise independent sampler of
Chor and Goldreich \cite{cho_gol_power_1989} works for subgaussian
functions, and in fact the more general class of functions with bounded
variance. The sampler has exponentially worse dependence on $\delta$
than is necessary for subgaussian samplers, but is very simple and has
randomness complexity optimal up to constant factors.
\begin{thm}[\cite{cho_gol_power_1989}]\label{thm:pi_samp}
	For all $m\in \mathbb{N}$ and $1>\eps, \delta > 0$ with $1/(\delta\eps^2) < 2^m$,
	there is an explicit strong sampler $\Samp:\zo^n \to \of{\zo^m}^{D}$
	for functions with bounded variance $\mathcal{M}_2$, with randomness
	complexity $n = O(m)$ and sample complexity $D =
	O\of{\frac{1}{\eps^2\delta}}$ defined as $\Samp(h)_d = h(d)$ where $h$
	is drawn at random from a size $2^n$ pairwise-independent hash family
	$\mathcal{H}$ of functions from $[D] \to \zo^m$.
\end{thm}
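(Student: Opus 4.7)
The plan is to apply Chebyshev's inequality directly to the pairwise-independent samples $h(1), \dots, h(D)$. Since a pairwise-independent hash family from $[D]$ to $\zo^m$ has the property that for each $i\in [D]$, $h(i)$ is marginally uniform on $\zo^m$, and for $i\neq j$ the pair $(h(i), h(j))$ is uniform on $\zo^m \times \zo^m$, the samples are pairwise independent and uniform. This is exactly the setting where second-moment methods give tight sampler bounds, and by definition of $\mathcal{M}_2$ the test functions have bounded variance to plug in.

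More concretely, I would fix a collection of test functions $f_1, \dots, f_D \in \mathcal{M}_2$ (with $f_1 = \cdots = f_D$ in the non-strong case) and consider the mean-zero random variables $Y_i \defeq f_i(h(i)) - \E{f_i(U_m)}$ for $h$ drawn uniformly from the pairwise-independent family. By pairwise independence of the $h(i)$'s, the $Y_i$ are pairwise independent, so the variance of their average satisfies
\[
\Var\of[\bigg]{\frac{1}{D}\sum_{i=1}^{D} Y_i}
= \frac{1}{D^2}\sum_{i=1}^{D} \Var(Y_i)
\leq \frac{1}{D^2}\sum_{i=1}^{D} \E{f_i(U_m)^2}
\leq \frac{1}{D},
\]
where the final bound uses $\norm{f_i(U_m)}_2 \leq 1$ from the definition of $\mathcal{M}_2$. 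Applying Chebyshev's inequality then gives $\PR{\abs{D^{-1}\sum_i Y_i} > \eps} \leq 1/(D\eps^2)$, which is at most $\delta$ for $D = \ceil{1/(\delta \eps^2)}$. Note that Chebyshev's inequality is inherently two-sided, so we obtain the absolute strong sampler guarantee directly, without needing \cref{rmrk:symmetric_test_function_are_abs}.

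For the randomness complexity, I would invoke a standard construction of a pairwise-independent hash family from a domain of size $\max(D, 2^m)$ to $\zo^m$, such as the family of affine functions $x\mapsto ax + b$ over the field $\mathbb{F}_{2^{\ell}}$ with $\ell = \max(m, \ceil{\log D})$, projected to $m$ bits; this uses $n = 2\ell$ random bits. The hypothesis $1/(\delta\eps^2) < 2^m$ gives $\log D = \log(1/(\delta\eps^2)) + O(1) \leq m + O(1)$, so $\ell = O(m)$ and hence $n = O(m)$, as claimed.

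The proof is routine and I do not anticipate any real obstacle; the only small subtlety is verifying that pairwise independence (rather than full independence) suffices even in the strong setting, which holds because cross-covariances $\Cov(f_i(h(i)), f_j(h(j)))$ vanish whenever $h(i)$ and $h(j)$ are independent, regardless of whether $f_i = f_j$.
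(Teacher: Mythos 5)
Your proposal is correct and follows essentially the same route as the paper, which simply cites Chebyshev's inequality together with the Chor--Goldreich/Joffe pairwise-independent hash family construction; you have merely filled in the (routine) computations that the paper leaves implicit. The variance bound $\Var(Y_i)\leq \E{f_i(U_m)^2}\leq 1$ from the definition of $\mathcal{M}_2$, the vanishing cross-covariances under pairwise independence (which handles the strong case), and the $n = 2\ell = O(m)$ accounting using the hypothesis $1/(\delta\eps^2) < 2^m$ are all exactly what the paper intends.
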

\begin{proof}
	The fact that pairwise independence gives rise to a strong
	bounded-variance sampler is immediate by Chebyshev's inequality. The
	existence of a pairwise indepenent hash family with the claimed
	parameters is due to Chor and Goldreich \cite{cho_gol_power_1989},
	with similar constructions in the probability literature due to Joffe
	\cite{joffe_sequence_1971}.
\end{proof}

We also show that the Expander Neighbor sampler of
\cite{kar_pip_sip_timerandomness_1985,gol_wig_tiny_1997} is a
bounded-variance sampler.
\begin{thm}\label{thm:ramanujan_sampler}
	There is a universal constant $C \geq 1$ such that for all $m\in
	\mathbb{N}$ and $1>\eps, \delta > 0$ there is an explicit
	sampler $\Samp:\zo^n \to \of{\zo^m}^{D}$ for functions with bounded
	variance $\mathcal{M}_2$, with randomness complexity $n = m$ and
	sample complexity $D = O\of{\of{\frac{1}{\eps^2\delta}}^C}$. Moreover, if
	the algorithm is given access to a consistently labelled neighbor
	function of a Ramanujan graph over $\zo^n$ of
	degree $O(1/(\delta\eps^2))$, then one can take $C = 1$.
\end{thm}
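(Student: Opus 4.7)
The plan is to directly instantiate the extractor-to-sampler framework developed earlier, using the $\ell_p/\mathcal{M}_q$ duality from \cref{thm:lp_is_bounded_q_moments} to bridge from expander-based $\ell_2$-extractors to samplers for $\mathcal{M}_2$, the class of functions with bounded variance. The point is that \cref{thm:lp_is_bounded_q_moments} with $p = q = 2$ says $d_{\ell_2}(P, U_m) = 2^{-m/2}\cdot d_{\mathcal{M}_2}(P, U_m)$, so $\ell_2$-extractors are exactly $\D^{\mathcal{M}_2}$-extractors up to a $2^{-m/2}$ rescaling of error.

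First I would apply \cref{cor:expander_extractor} with entropy deficiency $\Delta = \log(1/\delta)$ and its internal error parameter set to $\eps^2$, obtaining an explicit average-case $(n - \log(1/\delta),\, \eps\cdot 2^{-n/2})$ $\ell_2$-extractor $\Ext\colon \zo^n \times \zo^d \to \zo^n$ with $n = m$ and seed length $d = \ceil{C\cdot(\log(1/\delta) + 2\log(1/\eps))} + O(1)$, which gives sample complexity $D \defeq 2^d = O\of{(1/(\eps^2\delta))^C}$. In the regime where explicit Ramanujan neighbor functions of degree $O(2^\Delta/\eps^2) = O(1/(\delta\eps^2))$ are available, \cref{cor:expander_extractor} allows us to take $C = 1$, yielding the ``moreover'' clause.

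By the $\ell_2/\mathcal{M}_2$ duality noted above, the same $\Ext$ is an $(n - \log(1/\delta),\, \eps)$-extractor for the weak divergence $\D^{\mathcal{M}_2}$. Invoking \cref{thm:func_ext_are_samplers} with $\F = \mathcal{M}_2$ then shows that $\Samp\colon \zo^n \to \of{\zo^m}^D$ defined by $\Samp(x)_i \defeq \Ext(x,i)$ is a $(\delta, \eps)$ averaging sampler for $\mathcal{M}_2$, as required.

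I do not anticipate any significant obstacle: this is essentially a bookkeeping composition of previously established results. The only real subtlety is tracking the $2^{-m/2}$ normalization factor coming from the $\ell_2/\mathcal{M}_2$ duality, which is what forces us to instantiate \cref{cor:expander_extractor} at internal error $\eps^2$ rather than $\eps$ and thereby produces the $1/\eps^2$ factor (rather than $1/\eps$) in the final degree bound.
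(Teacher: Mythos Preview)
Your proposal is correct and follows essentially the same route as the paper: instantiate \cref{cor:expander_extractor} with $\Delta = \log(1/\delta)$ and internal error $\eps^2$ to obtain an $(n-\log(1/\delta),\eps\cdot 2^{-m/2})$ $\ell_2$-extractor, translate to an $(n-\log(1/\delta),\eps)$ $\mathcal{M}_2$-extractor via \cref{thm:lp_is_bounded_q_moments}, and conclude by \cref{thm:func_ext_are_samplers}. The bookkeeping on the $2^{-m/2}$ rescaling and the resulting $1/\eps^2$ factor in the degree is handled correctly.
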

\begin{proof}
	By \cref{cor:expander_extractor}, there is an explicit $(n -
	\log(1/\delta), \eps\cdot 2^{-m/2})$ $\ell_2$-extractor $\Ext:\zo^m
	\times \zo^d \to \zo^m$ with $d = \ceil{C(\log(1/\delta) +
	2\log(1/\eps))} + O(1)$, where one can take $C = 1$ given the assumed
	Ramanujan graph. Then by \cref{thm:lp_is_bounded_q_moments} $\Ext$ is
	also an $(n-\log(1/\delta), \eps)$ $\mathcal{M}_2$-extractor, so we
	conclude by \cref{thm:func_ext_are_samplers}.
\end{proof}
\begin{rmrk}
Note that given explicit constructions of Ramanujan graphs,
\cref{thm:ramanujan_sampler} has the same sample complexity but better
randomness complexity than the sampler of \cref{thm:pi_samp}.
\end{rmrk}

\subsection{Non-explicit construction}\label{sec:nonexplicit_samp}

Applying \cref{lem:divergence_bounds,lem:dg_vs_kl} to
\cref{thm:kl_nonexplicit} our non-explicit construction of KL-extractors
gives:
\begin{cor}\label{cor:subgaussian_nonexplicit}
	For every $n\in\mathbb{N}$, $k\leq n$, and $1>\eps>0$ there is an
	average-case (respectively strong average-case) $(k,\eps)$
	$d_{\cE}$-extractor $\Ext:\zo^n\times\zo^d\to\zo^m$ with $d= \log(n-k+1) +
	2\log(1/\eps) + O(1)$ and $m \geq k + d - 2\log(1/\eps) - O(1)$
	(respectively $m \geq k - 2\log(1/\eps) - O(1)$)
\end{cor}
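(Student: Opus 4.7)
The strategy is to invoke the non-explicit KL-extractor construction \cref{thm:kl_nonexplicit} with a quadratically smaller error and then transfer to the subexponential distance via the Pinsker-type bound of \cref{lem:dg_vs_kl} together with the generic transfer principle of \cref{lem:divergence_bounds}.

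Concretely, first I would set $\eps' = (2/\ln 2) \cdot \eps^2$, so that for $\eps$ below a universal constant (making $\eps' \leq 1/(2\ln 2)$) the square-root branch of \cref{lem:dg_vs_kl} gives $d_{\cE}(P, U_m) \leq \sqrt{(\ln 2)/2 \cdot \eps'} = \eps$ whenever $\KL\diver{P}{U_m} \leq \eps'$. Next, I would apply \cref{thm:kl_nonexplicit} with this error $\eps'$ to obtain an average-case (respectively strong average-case) $(k, \eps')$ KL-extractor $\Ext:\zo^n \times \zo^d \to \zo^m$ with seed length $d = \log(n-k+1) + \log(1/\eps') + O(1) = \log(n-k+1) + 2\log(1/\eps) + O(1)$ and output length $m \geq k + d - \log(1/\eps') + O(1) = k + d - 2\log(1/\eps) - O(1)$ (respectively $m \geq k - \log(1/\eps') - O(1) = k - 2\log(1/\eps) - O(1)$). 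Finally, I would invoke \cref{lem:divergence_bounds} with the function $f(x) = \sqrt{(\ln 2)/2 \cdot x}$, which is both increasing on $(0, \eps')$ and concave, to conclude that the same $\Ext$ is an average-case (respectively strong average-case) $(k, \eps)$ $d_{\cE}$-extractor with exactly the claimed parameters; the concavity of $f$ is what allows the strong and average-case properties to transfer without loss.

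The only real care is needed in the corner case where $\eps$ is close to $1$, so that $\eps' > 1/(2\ln 2)$ and the simple square-root branch of \cref{lem:dg_vs_kl} no longer directly applies. In that regime $\eps = \Theta(1)$, and the claimed bounds can be matched with constants to spare --- one may simply choose $\eps'$ to be a sufficiently small absolute constant (thus still $\Theta(\eps^2)$) and absorb any slack into the additive $O(1)$ terms in both the seed length and the output length; alternatively, one can use the other, still concave and increasing, branch of \cref{lem:dg_vs_kl} in \cref{lem:divergence_bounds}. With this minor issue dispatched, nothing else of substance remains.
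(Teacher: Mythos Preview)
Your proposal is correct and follows exactly the paper's approach: the paper's proof is the single sentence ``Applying \cref{lem:divergence_bounds,lem:dg_vs_kl} to \cref{thm:kl_nonexplicit},'' which is precisely what you spell out. Your treatment of the $\eps = \Theta(1)$ corner case is more careful than the paper's own one-line proof.
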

Since $d_{\cE}$-extractors are also total variation extractors,
\cref{cor:subgaussian_nonexplicit} is optimal up to additive constants
by the lower bound of Radhakrishnan and Ta-Shma
\cite{rad_ta-_bounds_2000}.

Using the fact that extractors are samplers
(\cref{thm:func_ext_are_samplers}), we get
\begin{cor}\label{cor:subgaussian_samp_nonexplicit}
	For every integer $m$ and $1>\delta,\eps > 0$ there is a
	$(\delta,\eps)$ sampler (respectively strong sampler) $\Samp:\zo^n \to
	\of{\zo^m}^D$ for subgaussian and subexponential functions with sample
	complexity $D = O\of{\frac{\log1/\delta}{\eps^2}}$ and randomness
	complexity $n = m + \log(1/\delta) - \log\log(1/\delta) + O(1)$
	(respectively $n =  m + \log(1/\delta) + 2\log(1/\eps) + O(1)$).
\end{cor}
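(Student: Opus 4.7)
My plan is to combine the non-explicit $d_{\cE}$-extractor of \cref{cor:subgaussian_nonexplicit} with the extractor-to-sampler equivalence of \cref{thm:func_ext_are_samplers}. First, I would set $k = n - \log(1/\delta)$ so that the source min-entropy deficiency matches the hypothesis of \cref{thm:func_ext_are_samplers}. Instantiating \cref{cor:subgaussian_nonexplicit} with this deficiency produces an $(n-\log(1/\delta),\eps)$ $d_{\cE}$-extractor (respectively strong $d_{\cE}$-extractor) $\Ext:\zo^n\times\zo^d\to\zo^m$ with seed length $d = \log(\log(1/\delta)+1) + 2\log(1/\eps) + O(1)$, so that the resulting sample complexity $D := 2^d = O(\log(1/\delta)/\eps^2)$ is as required.

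Next, I would define $\Samp(x)_i = \Ext(x, i)$. Applied to the symmetric class $\cE = \cE_{\zo^m}$ of subexponential test functions, \cref{thm:func_ext_are_samplers} immediately yields that $\Samp$ is a $(\delta,\eps)$-sampler (respectively strong sampler) for $\cE$, and hence for the subclass $\cG \subseteq \cE$ of subgaussian test functions (using the containment from \cref{lem:subgaussian_properties} which says every mean-zero subgaussian random variable with parameter $1/2$ is also subexponential with parameter $1/2$).

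Finally, I would read off the randomness complexity from the output-length guarantees of \cref{cor:subgaussian_nonexplicit}. In the non-strong case, the bound $m \geq k + d - 2\log(1/\eps) - O(1) = k + \log\log(1/\delta) - O(1)$ gives $k \leq m - \log\log(1/\delta) + O(1)$, and combined with $n = k + \log(1/\delta)$ this yields $n = m + \log(1/\delta) - \log\log(1/\delta) + O(1)$. The strong analogue $m \geq k - 2\log(1/\eps) - O(1)$ similarly yields $n = m + \log(1/\delta) + 2\log(1/\eps) + O(1)$. There is no real obstacle to surmount here: all the substantive work is already encapsulated in \cref{thm:kl_nonexplicit} (together with the Pinsker-type passage from KL to $d_{\cE}$ recorded in \cref{lem:dg_vs_kl,lem:divergence_bounds}), and the corollary is a routine assembly of these ingredients with the generalized equivalence between extractors and samplers.
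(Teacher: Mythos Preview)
Your proposal is correct and follows essentially the same approach as the paper, which simply states the corollary as an immediate consequence of applying \cref{thm:func_ext_are_samplers} to the non-explicit $d_{\cE}$-extractor of \cref{cor:subgaussian_nonexplicit}. You have merely spelled out the parameter computation in more detail than the paper does.
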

Note that this matches the best-known (non-explicit) parameters of
averaging samplers for $[0,1]$-valued functions due to Zuckerman
\cite{zuckerman_randomnessoptimal_1997}.

\section{Acknowledgements}
The author would like to thank Jaros\l{}aw B\l{}asiok for suggesting the
problem of constructing subgaussian samplers and for helpful discussions
and feedback, Salil Vadhan for many helpful discussions and his
detailed feedback on this writeup, and the anonymous reviewers for their
helpful comments and feedback.

\bibliographystyle{IEEEtranSA}
\bibliography{main,misc}

\end{document}